\documentclass[a4paper,UKenglish,cleveref, autoref, thm-restate]{lipics-v2021}

\usepackage[utf8]{inputenc}
\usepackage{tikz,amsmath,amssymb,amsthm}
\usepackage{color,url}
\usepackage{hyperref}
\usepackage[capitalise]{cleveref}
\usepackage{xspace}
\usepackage{todonotes}
\usepackage{thm-restate}
\usepackage{comment}
\usepackage{thmtools}

\newcommand{\pname}[1]{\textnormal{\textsc{#1}}}

\newcommand{\NPC}{NP-complete}
\newcommand{\HFD}{\pname{$H$-free Edge Deletion}}

\newcommand{\HFS}{\pname{$H$-free Subdivision}}

\newcommand{\girth}{\mathrm{girth}}
\newcommand{\cH}{\mathcal{H}}

\newcommand{\dist}{\mathrm{dist}}
\newcommand{\core}{\mathrm{core}}
\newcommand{\tc}{\widetilde{c}}

\title{On the complexity of edge subdivision to $H$-free graphs} 

\author{Marta Piecyk}{CISPA Helmholtz Center for Information Security, Germany}{marta.piecyk@cispa.de}{0000-0001-9162-8300}{}

\author{R. B. Sandeep}{Indian Institute of Technology Dharwad, India}{sandeeprb@iitdh.ac.in}{0000-0003-4383-1819}{Supported by Anusandhan National Research Foundation (India) grant MTR/2022/000692}

\authorrunning{M. Piecyk and R.\,B. Sandeep} 

\Copyright{Marta Piecyk and R. B. Sandeep} 

\ccsdesc[500]{Theory of computation~Graph algorithms analysis}
\ccsdesc[500]{Theory of computation~Problems, reductions and completeness}
\ccsdesc[300]{Theory of computation~Parameterized complexity and exact algorithms}

\keywords{Edge subdivision, $H$-free graphs, Parameterized lower bound} 

\category{} 

\relatedversion{} 

\acknowledgements{The project was initiated at STRUG: Structural Graph Theory Bootcamp, funded by the ``Excellence initiative – research university (2020–2026)'' of University of Warsaw. We are grateful to Erik Jan van Leeuwen for stimulating discussions on the topic.}

\nolinenumbers 

\EventEditors{John Q. Open and Joan R. Access}
\EventNoEds{2}
\EventLongTitle{42nd Conference on Very Important Topics (CVIT 2016)}
\EventShortTitle{CVIT 2016}
\EventAcronym{CVIT}
\EventYear{2016}
\EventDate{December 24--27, 2016}
\EventLocation{Little Whinging, United Kingdom}
\EventLogo{}
\SeriesVolume{42}
\ArticleNo{23}

\begin{document}

\maketitle

\begin{abstract}
Subdividing an edge $uv$ in a graph $G$ is the operation in which the edge $uv$ is deleted, and a new vertex $w$ together with two edges $uw$ and $vw$ is introduced. A $k$-subdivision of a graph $G$ is a graph obtained from $G$ by applying $k$ subdivisions successively. For a graph $H$, the \textsc{$H$-free Subdivision} problem is defined as follows: given a graph $G$ and an integer $k$, decide whether $G$ can be transformed into a graph without any induced copy of $H$ by performing at most $k$ edge subdivisions.

We first observe that if $H$ is a subdivided star, that is, a tree with at most one vertex of degree at least $3$, or a subdivided bistar, that is, a tree with exactly two adjacent vertices of degree at least $3$, then the problem is trivially solvable in polynomial time. This observation can be easily extended to show that if every component of $H$ is either a subdivided star or a subdivided bistar, and at most one component of $H$ is a subdivided bistar, then \textsc{$H$-free Subdivision} is polynomial-time solvable.

On the negative side, we prove that \textsc{$H$-free Subdivision} is NP-complete and, assuming the Exponential Time Hypothesis, cannot be solved in time $2^{o(k)}\cdot n^{O(1)}$ if $H$ satisfies any of the following properties:
\begin{itemize}
    \item $H$ has minimum degree at least $2$, and the neighborhood of every degree-$2$ vertex induces a $K_2$;
    \item the vertices of degree at least $3$ in $H$ induce a graph with at least two edges;
    \item $H$ has a triangle with two vertices of degree at least $3$;
    \item $H$ contains, as an induced subgraph, the $6$-vertex graph obtained from two vertex-disjoint triangles by adding an edge between them;
    \item $H$ contains exactly one triangle;
    \item $H$ has girth at least $4$;
    \item $H$ is a tree with exactly two vertices of degree at least $3$ which are at distance $2$ or at least $4$.
\end{itemize}
A simple bounded search-tree algorithm gives a
$2^{O(k)}\cdot n^{O(1)}$-time fixed-parameter tractable algorithm for the problem.
Our lower bound shows that, for the cases in which we prove hardness, this running time cannot be substantially improved, assuming the Exponential Time Hypothesis.

\end{abstract}

\newpage
\section{Introduction}

A typical graph modification problem is characterized by one or more graph operations and a target graph class.
The input consists of a graph and a budget, and the objective is to decide whether the graph can be transformed into a graph belonging to the target class by applying the operation or operations within the given budget.
Many well-known computational graph problems, such as \textsc{Vertex Cover} and \textsc{Feedback Vertex/Arc Set}, can be stated as graph modification problems.

We initiate the computational study of graph modification problems based on a very familiar graph operation: edge subdivision.
An edge subdivision of an edge $uv$ of a graph $G$ is the operation that removes the edge $uv$, adds a new vertex $w$, and adds the two edges $uw$ and $wv$.
A $k$-subdivision of $G$ is a graph obtained from $G$ by applying $k$ edge subdivisions successively.
For a graph class or property $\Pi$, we consider the following problem.

\medskip
\noindent
\textsc{$\Pi$-Subdivision}: Given a graph $G$ and an integer $k$, decide whether $G$ can be transformed into a graph with property $\Pi$ by applying at most $k$ edge subdivisions.

\medskip
It is easy to observe that edge subdivisions do not destroy minors of the input graph.
Therefore, \textsc{$\Pi$-Subdivision} is trivial when $\Pi$ is a minor-closed graph class, such as the class of planar graphs or the class of forests: if the input graph contains a forbidden minor for $\Pi$, then it is a no-instance, and otherwise it is a yes-instance.
The same argument applies to induced-minor-closed graph classes, such as the class of chordal graphs.

A hereditary graph class is a graph class closed under vertex deletion.
It is well known that every hereditary graph class can be characterized by a (possibly infinite) set of forbidden induced subgraphs.
Hereditary graph classes are standard target classes for graph modification problems.
Thus, it is natural to consider the class of $H$-free graphs, that is, graphs with no induced copy of a fixed graph $H$, as a target class.
Indeed, such classes have been studied as target classes for a wide range of graph modification problems; see Table~\ref{tab:complexity-summary}.

\begin{table}[ht]
\centering
\begin{tabular}{|c|p{4cm}|p{4cm}|}
\hline
\textbf{Operation} & \textbf{P} & \textbf{NP-complete} \\
\hline
vertex deletion
& $H=K_1$~[trivial]
& otherwise~\cite{yannakakis1978node} \\
\hline
edge deletion
& $|E(H)|\leq 1$~[trivial]
& otherwise~\cite{DBLP:journals/siamdm/AravindSS17} \\
\hline
edge completion
& at most 1 nonedge~[trivial]
& otherwise~\cite{DBLP:journals/siamdm/AravindSS17} \\
\hline
edge editing
& $|V(H)|\leq 2$~[trivial]
& otherwise~\cite{DBLP:journals/siamdm/AravindSS17} \\
\hline
edge contraction
& $H\in\{K_1,K_2\}$~[trivial]
& otherwise~\cite{chakraborty2023contracting} \\
\hline
subgraph complementation
& paths on at most four vertices, complete graphs, paw~\cite{antony2022subgraph}
& $K_{1,t}$ for $t\geq 5$, $P_t$ and $C_t$ for $t\geq 7$, etc.~\cite{antony2022subgraph} \\
\hline
switching
& $|V(H)|\leq 2$, paw~\cite{antony2024switching}
& $P_t$ for $t\geq 10$, $C_t$ for $t\geq 7$, etc.~\cite{antony2024switching} \\
\hline
vertex splitting
& $|V(H)|\leq 2$, $K_1+K_2$~\cite{firbas2024complexity}
& $2$-connected graphs, etc.~\cite{firbas2024complexity} \\
\hline
\end{tabular}
\caption{Summary of complexity results on graph modification to $H$-free graphs.}
\label{tab:complexity-summary}
\end{table}

We study the complexity of the following problem.

\medskip
\noindent
\textsc{$H$-free Subdivision}: Given a graph $G$ and an integer $k$, decide whether $G$ can be transformed into an $H$-free graph by applying at most $k$ edge subdivisions.

\medskip
A \emph{star} is a connected graph with at most one vertex of degree at least $2$.
A \emph{bistar} is a connected graph with exactly two vertices of degree at least $3$, and these two vertices are adjacent, and all the remaining vertices have degree $1$.
A \emph{subdivided star} is a graph obtained from a star by subdividing its edges zero or more times. 
A \emph{subdivided bistar} is a graph obtained from a bistar by subdividing, zero or more times, only the edges other than the edge joining the two vertices of degree at least $3$.
It is easy to observe that an induced subdivided star cannot be destroyed by edge subdivisions, and the only way to destroy an induced subdivided bistar is to subdivide the edge joining its two vertices of degree at least $3$.
Thus we obtain our first observation.

\begin{observation}
\label{obs:poly}
Let $H$ be a graph such that every connected component of $H$ is either a subdivided star or a subdivided bistar, and at most one connected component of $H$ is a subdivided bistar.
Then \HFS\ is polynomial-time solvable.
\end{observation}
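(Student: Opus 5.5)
The plan is to reduce the problem to a structural statement about which edges must be subdivided, and then check it by brute force, since $H$ is fixed. The basic tool is a persistence lemma. Let $G'$ be obtained from $G$ by subdividing one edge $e=xy$ into $xw,wy$, and let $X$ be an induced copy in $G$ of a subdivided star or of a subdivided bistar.
\begin{itemize}
\item If $e\notin E(X)$, then $X$ is still induced in $G'$. The new vertex $w$ lies outside $X$, and $x,y$ are not both in $X$: otherwise $xy$ would be an edge of $G[X]$ and hence an edge of $X$, as $X$ is induced.
\item If $e\in E(X)$ and $e$ is not the central edge of a bistar, then $e$ lies on a leg. Replacing $e$ by $xw,wy$ lengthens that leg by one, and $w$ has no further neighbours, so $G'$ again contains an induced copy of the same component.
\end{itemize}
Applying this to all components of $H$ simultaneously, an induced copy of $H$ in $G$ survives every subdivision except one of the central edge of its bistar component. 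If that central edge is subdivided, the copy is destroyed: a new vertex $w$ has degree $2$, and the two branch vertices become non-adjacent. These are the two easy facts already stated before Observation~\ref{obs:poly}.

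\emph{Star-only case.} If every component of $H$ is a subdivided star, then by persistence every subdivision of a graph containing $H$ still contains $H$. Hence $(G,k)$ is a yes-instance if and only if $G$ itself is $H$-free. The direction ``if'' is immediate with zero subdivisions. This is tested in time $n^{|V(H)|}$.

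\emph{One bistar component $B$.} Let $r=|V(H)|$ and let $G^{*}$ be the $r$-fold subdivision of every edge of $G$.
\begin{itemize}
\item Call an edge $uv\in E(G)$ \emph{forced} if the graph $G_{uv}$, obtained from $G$ by subdividing every edge except $uv$ exactly $r$ times, contains an induced copy of $H$ whose bistar has central edge $uv$.
\item Let $F$ be the set of forced edges.
\end{itemize}
I would prove that $(G,k)$ is a yes-instance if and only if $|F|\le k$ and $G$ with every edge of $F$ subdivided once is $H$-free. The second condition is likely equivalent to ``$G^{*}$ is $H$-free'', i.e.\ the star components alone are unavoidable; this needs checking. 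The algorithm then computes $F$ by testing, for each edge, a polynomial-size graph for induced copies of the fixed graph $H$. It finally checks the resulting graph, again in time $n^{O(r)}$.

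\emph{Correctness, and the main obstacle.} The main obstacle is the necessity direction: an arbitrary solution $G'$ with $t\le k$ subdivisions must subdivide every forced edge. The difficulty is that subdivisions can also \emph{create} induced copies, since lengthening paths can destroy chords and adjacencies between components. Persistence therefore only works forwards, and I cannot simply pass from $G'$ to $G_{uv}$.
\begin{itemize}
\item My first attempt is to pull back. Take the induced copy in $G_{uv}$ with central edge $uv$. Each leg of $B$ and each leg of the star components, being an induced path in a subdivision, projects to a walk in $G$. It should then be realisable in any subdivision $G'$ that keeps $uv$ intact, provided each leg is allowed to be ``long enough''.
\item The value $r$ is chosen so that every leg length of $H$ fits. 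I would then need that the copy in $G_{uv}$ can be shortened to one in $G'$: cut each leg at the correct length, and check that the extra non-adjacencies needed are inherited, because $G'$ has fewer subdivisions than $G_{uv}$ only on edges that are not used.
\item If this fails, the fallback is to define ``forced'' directly with respect to all $n^{O(r)}$ local configurations of at most $r$ original edges and their subdivision counts capped at $r$, and argue monotonicity inside that bounded window.
\end{itemize}
The sufficiency direction, that subdividing exactly the edges of $F$ yields an $H$-free graph, is handled as follows. Any induced copy of $H$ after these subdivisions would, by persistence, survive in a further subdivision of the form $G_{uv}$, where $uv$ is its central edge (or in $G^{*}$ if the copy avoids original central edges). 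This forces $uv\in F$, and then $uv$ has been subdivided, so the copy cannot exist.
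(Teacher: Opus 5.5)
\textbf{There is a genuine gap in the one-bistar case.} Your persistence lemma is correct. It is exactly the content of the paper's two-sentence justification: an induced subdivided star survives every subdivision, and an induced subdivided bistar survives every subdivision except one of its central edge. Your star-only case is also fine.

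The problem is your characterization ``yes iff $|F|\le k$ and $G$ with $F$ subdivided once is $H$-free.'' With your definition of forced edges it is false, so the necessity direction you flag as the main obstacle cannot be proved. Testing $uv$ inside $G_{uv}$, where every other edge is heavily subdivided, declares almost every edge between two vertices of large enough degree forced, because heavy subdivision \emph{creates} copies of $H$.

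A concrete counterexample: let $H$ be the bistar whose two centres each carry two leaves, and let $G=K_4$. In $G_{uv}$, take $u$ and $v$ together with the first subdivision vertex on each of the four paths from $u$ and $v$ to the other two vertices. These six vertices induce $H$ with central edge $uv$. So all six edges are forced, and your test rejects $(K_4,0)$, although $K_4$ has four vertices and is trivially $H$-free.

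Your sufficiency argument is correct, but it shows that your second condition always holds. Likewise ``$G^{*}$ is $H$-free'' always holds, since $G^{*}$ has no two adjacent vertices of degree at least $3$. So your criterion degenerates to $|F|\le k$ with an overestimated $F$. The fallback via ``local configurations'' and ``monotonicity'' does not help, because there is no monotonicity to exploit.

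\textbf{The missing idea} is to compute forced edges as a least fixpoint, always relative to edges already known to be forced, never relative to hypothetical extra subdivisions. First observe that the outcome of any sequence of subdivisions depends only on the multiplicities $s\colon E(G)\to\mathbb{N}$: each edge $e$ becomes a path of length $s(e)+1$. Hence if $s'\ge s$ pointwise, then the graph given by $s'$ is a further subdivision of the graph given by $s$.

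The algorithm is then as follows.
\begin{enumerate}
\item Start with $F=\emptyset$.
\item While $G_F$ (each edge of $F$ subdivided once) contains an induced copy of $H$, add its central edge to $F$. This central edge is an unsubdivided original edge, since both centres have degree at least $3$ and subdivision vertices have degree $2$.
\item Stop after at most $|E(G)|$ rounds, when $G_F$ is $H$-free, and answer yes iff $|F|\le k$.
\end{enumerate}

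Correctness is by induction. Every solution $s'$ satisfies $s'\ge\mathbf{1}_F$, so its graph is a further subdivision of $G_F$. By persistence, the newly found copy survives unless its central edge is subdivided, so every solution must subdivide that edge too. When the loop stops, $G_F$ itself is a solution of size $|F|$, and every solution has size at least $|F|$, so $|F|$ is exactly the optimum.

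The iteration is genuinely needed. For example, subdividing a forced edge $x_1x_2$ that lies in a triangle $ux_1x_2$ makes $x_1,x_2$ non-adjacent. This can create a new copy centred at $u$ and force a further edge. This closure argument is what the paper's ``thus'' implicitly relies on.
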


We conjecture that these are the only polynomial-time solvable cases.

\begin{conjecture}
\label{con:main}
\HFS\ is NP-complete if $H$ has at least two components that are subdivided bistars, or if $H$ has a component that is neither a subdivided star nor a subdivided bistar.
Furthermore, assuming the Exponential Time Hypothesis (ETH), the problem cannot be solved in time $2^{o(k)}\cdot n^{O(1)}$.
\end{conjecture}

In this paper, we verify \cref{con:main} for many graphs $H$.
We start with the following theorem.

\begin{restatable}{theorem}{EasyTheorem}
\label{thm:easy}
Let $H$ be a graph satisfying at least one of the following properties:
\begin{itemize}
    \item $H$ has minimum degree at least $2$, and the neighborhood of every degree-$2$ vertex induces a $K_2$;
    \item the vertices of degree at least $3$ induce a graph with at least two edges.
\end{itemize}
Then \textsc{$H$-free Subdivision} is NP-complete.
Furthermore, assuming the ETH, the problem cannot be solved in time
$2^{o(k)}\cdot n^{O(1)}$.
\end{restatable}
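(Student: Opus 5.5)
For the first property I would give a direct reduction from \pname{$H$-free Edge Deletion}. For the second property I would need a separate gadget reduction from \textsc{Vertex Cover}, and that is where I expect the main difficulty. Both parts rely on the following fact. Suppose a $k$-subdivision $G'$ of $G$ is obtained, and let $F\subseteq E(G)$ be the set of original edges that were subdivided at least once. Then every vertex of $V(G')\setminus V(G)$ has degree exactly $2$, and its two neighbours are non-adjacent. (If they were adjacent, the original edge would have been present twice, which is impossible in a simple graph.) Subdividing an edge of $F$ again, or subdividing a newly created edge, only lengthens a path of degree-$2$ vertices. So I expect to show that it never helps, and we may assume each edge of $F$ is subdivided exactly once, with $|F|\le k$.

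\textbf{First property.} Let $H$ have minimum degree at least $2$, with every degree-$2$ vertex having a neighbourhood that induces $K_2$. Suppose an induced copy of $H$ in $G'$ contained a subdivision vertex $w$. Then $w$ would have degree at most $2$ in the copy, hence exactly $2$. Its two neighbours would therefore have to be adjacent, but they are not. So every induced copy of $H$ in $G'$ lies inside $V(G)$. Moreover, $G'[V(G)]=G-F$. Hence $G'$ is $H$-free if and only if $G-F$ is $H$-free, and the instance $(G,k)$ of \HFS\ is equivalent to the instance $(G,k)$ of \HFD. The property forces $H$ to contain a cycle, so $|E(H)|\ge 2$. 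By the result of \cite{DBLP:journals/siamdm/AravindSS17}, \HFD\ is then NP-complete and has no $2^{o(k)}n^{O(1)}$ algorithm under the ETH. This reduction preserves $k$, so both conclusions transfer. Membership in NP is clear because $|F|\le k$.

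\textbf{Second property.} Let $C=H[V_{\ge3}(H)]$, where $V_{\ge 3}(H)$ is the set of vertices of degree at least $3$, and suppose $C$ has at least two edges. The key invariant is that a subdivision never creates a vertex of degree at least $3$, and never creates an edge between two vertices of $V(G)$. Consequently, in any induced copy of $H$ in $G'$, every edge of $C$ is mapped to an original edge of $G$ that was not subdivided. I would fix two edges $e_1,e_2$ of $C$ and reduce from \textsc{Vertex Cover} on graphs of bounded degree. This problem has no $2^{o(k)}n^{O(1)}$ algorithm under the ETH, via the sparsification lemma.

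The construction would go as follows. For every vertex $v$ of the \textsc{Vertex Cover} instance, introduce a \emph{vertex edge} $a_vb_v$. For every edge $uv$, attach a fresh copy of $H$ in which $e_1$ is identified with $a_ub_u$ and $e_2$ is identified with $a_vb_v$. If $e_1$ and $e_2$ share an endpoint, the identifications must be arranged consistently. The budget stays $k$. For the forward direction, subdividing the vertex edges of a vertex cover destroys every planted copy, because each planted copy loses one of its $C$-edges. For the backward direction, any subdivision of a non-vertex edge inside a planted copy can be replaced by subdividing $a_ub_u$ for one endpoint $u$ of that copy. This exchange requires that no new copies of $H$ appear, which is the next step.

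\textbf{Main obstacle.} The main obstacle is showing that the gluing creates no unintended induced copies of $H$, either in $G$ or after subdivisions. I would try to control this through the invariant above. Any induced copy must realise $C$ on vertices of degree at least $3$ joined by unsubdivided original edges. I would then count vertices of degree at least $3$ and the edges among them, in the spirit of \cite{DBLP:journals/siamdm/AravindSS17}: choose $e_1,e_2$ extremal, for instance within a component of $C$ with the maximum number of edges and as far apart as possible. The aim is to show that every copy of $C$ in the construction lies inside a single planted copy of $H$. If this direct argument fails, I would fall back to reducing from \HFD\ with target graph $C$ or $H$, attaching pendant gadgets so that only the original vertices can play the high-degree roles.
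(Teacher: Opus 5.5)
Your argument for the first property is correct and is the paper's own (Lemma~\ref{lem:edge-deletion-eq}): $G'[V(G)]=G-F$, and a subdivision vertex has degree $2$ with non-adjacent neighbours, so it lies in no induced copy of $H$. Your general expectation that each edge need be subdivided at most once is false for arbitrary $H$: in the tree reductions of Section~\ref{sec:trees}, subdividing an edge fewer than $d$ times just creates a new copy of $H$. The first part does not use it, though.

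\textbf{The gap is in the second property.} The step you flag yourself, that the gluing creates no unintended copies of $H$, is never proved, and for the \textsc{Vertex Cover} construction as described it is false.

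First, the construction is undefined when $e_1$ and $e_2$ share an endpoint. This is unavoidable when $H[V_{\ge3}(H)]$ is a path on three vertices, e.g.\ the caterpillar with spine $xyz$, two leaves at $x$ and at $z$, and one leaf at $y$. Two edges with a common vertex cannot be identified with the disjoint vertex edges $a_ub_u$ and $a_vb_v$.

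Second, even with disjoint $e_1,e_2$ the forward direction fails. Let $H$ be two disjoint bistars whose adjacent centres each carry two leaves. Then $H[V_{\ge3}(H)]\cong 2K_2$, so $e_1,e_2$ are forced to be the two central edges, and each planted copy just hangs two leaves on each of $a_u,b_u,a_v,b_v$. Take the \textsc{Vertex Cover} instance $u\!-\!v\!-\!w$ with cover $\{v\}$. After subdividing $a_vb_v$, the edges $a_ub_u$ and $a_wb_w$ with their leaves form two non-adjacent induced bistars, i.e.\ an induced copy of $H$. In fact the three vertex edges span three pairwise non-adjacent bistars, so the produced instance with budget $1$ is a no-instance, although the \textsc{Vertex Cover} instance is a yes-instance. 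Here there is no other choice of $e_1,e_2$ to fall back on.

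\textbf{The missing idea} is the one your fallback only gestures at, and it needs specific gadgets. Let $X=V_{\ge3}(H)$ and reduce from \pname{$H[X]$-free Edge Deletion}. This is NP-complete, with the ETH bound, by \cite{DBLP:journals/siamdm/AravindSS17}, since $H[X]$ has at least two edges. The components of $H-X$ are paths or cycles; add them to the input graph $G'$ as follows:
\begin{itemize}
  \item a component with one edge to $X$: one pendant copy at every vertex of $G'$;
  \item a component with two edges to $X$: $k+1$ copies joined to every unordered pair $\{u,v\}$ of vertices of $G'$, with $u=v$ allowed; when the component is a single vertex, these are $k+1$ common neighbours of each pair $u\neq v$;
  \item a component with no edge to $X$: $k+1$ isolated copies.
\end{itemize}
Purely pendant gadgets would not suffice. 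Two high-degree vertices of $H$ may be joined through degree-$2$ vertices, while their images form an arbitrary, possibly non-adjacent, pair of vertices of $G'$.

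Every added vertex has degree at most $2$, so your own invariant immediately gives one direction. After subdividing each edge of a deletion solution $F$ once, the $X$-part of any induced $H$ lies in $V(G')$ and induces $H[X]$ in $G'-F$, a contradiction. Conversely, let $F$ be the set of original edges subdivided at least once. Any induced $H[X]$ in $G'-F$ extends to an induced $H$ using gadget copies untouched by the at most $k$ subdivisions. A pendant path that was subdivided still contains an induced initial subpath of the right length. No counting or extremal choice of $e_1,e_2$ is needed, and since the parameter is unchanged, the ETH bound transfers.
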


Next, we handle several graphs of girth $3$.
After excluding the cases covered by \cref{thm:easy}, the triangles containing vertices of degree at least $3$ have a very restricted structure.
We call a triangle a \emph{hanging triangle} if it contains exactly one vertex of degree at least $3$, and a \emph{roof triangle} if it contains exactly two vertices of degree at least $3$.
The vertices of degree at least $3$ in such a triangle are called its \emph{attachment vertices}.

\begin{restatable}{theorem}{GirthThreeTheorem}
\label{thm:girth3}
Let $H$ be a graph satisfying at least one of the following properties:
\begin{itemize}
    \item $H$ has a roof triangle;
    \item $H$ has two hanging triangles whose attachment vertices are adjacent.
\end{itemize}
Then \textsc{$H$-free Subdivision} is NP-complete.
Furthermore, assuming the ETH, the problem cannot be solved in time
$2^{o(k)}\cdot n^{O(1)}$.
\end{restatable}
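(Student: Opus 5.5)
We reduce from \textsc{Vertex Cover} on graphs of maximum degree $3$. This problem is NP-complete, and under the ETH it has no $2^{o(n)}$ algorithm, hence no $2^{o(k)}$ algorithm since $k\le n$. Given an instance $(F,k)$, we build $G$ with budget $k$ as follows.

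Each vertex $v$ of $F$ becomes a designated edge $e_v$ of $G$. Subdividing $e_v$ will correspond to putting $v$ into the cover. Each edge $uv$ of $F$ becomes an induced copy $H_{uv}$ of $H$ in $G$. We place $e_u$ and $e_v$ inside $H_{uv}$ so that subdividing either of them destroys this copy.

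The key local fact is the following. In an induced copy of $H$, subdividing an edge $xy$ whose endpoints have a common neighbour $z$ in the copy destroys it. After subdivision, $x,y,z$ no longer span a triangle, and the triangle count or degree pattern near the attachment vertices changes.

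In the roof-triangle case, let $xyz$ be a roof triangle with attachment vertices $x,y$. We identify the edge $xy$ of $H_{uv}$ with $e_u$. We also need $e_v$ inside $H_{uv}$, so we glue copies along edges: we take a roof triangle whose edge $xy$ serves as $e_u$ in one copy, and a second triangle edge serves as $e_v$. If $H$ has only one usable triangle edge, we instead use two copies of $H$ sharing a vertex. We then add a copy of $H$ on the union in which both $e_u$ and $e_v$ lie on triangles with high-degree endpoints.

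In the second case, let $T_1,T_2$ be hanging triangles with adjacent attachment vertices $a_1,a_2$. We use the leaf edge of $T_1$ (opposite $a_1$) as $e_u$ and the leaf edge of $T_2$ as $e_v$. Copies $H_{uv}$ for different edges of $F$ incident to $u$ share the triangle $T_1$, which realises the vertex gadget for $u$. Since $F$ has maximum degree $3$, each designated edge is shared by at most three copies.

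Completeness: given a vertex cover $S$, subdivide $e_v$ for every $v\in S$. Each intended copy $H_{uv}$ is destroyed, and we must check that no new induced $H$ appears. Because $H$ contains a triangle with a vertex of degree at least $3$, a copy of $H$ in the subdivided graph must use an original triangle that was not subdivided. A counting argument on triangles and high-degree vertices should then show that only the intended copies were candidates.

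Soundness: given a solution with at most $k$ subdivisions, we normalise it. Any subdivision of an edge inside $H_{uv}$ can be replaced by subdividing $e_u$ or $e_v$ without creating new copies, so the set of vertices $v$ with $e_v$ subdivided is a vertex cover of size at most $k$.

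The main obstacle is the soundness step, and in particular showing that $G$ contains no unintended induced copies of $H$, both before and after subdivision. Gluing copies of $H$ along triangles can create new induced copies that mix parts of different gadgets. My first attempt would be to separate gadgets by long paths. For example, I would connect the copies sharing $e_u$ only through the designated edge and pad all other attachments so that the degree sequences do not match $H$. I would use the degree-$\ge 3$ vertices lying in triangles as anchors: an induced copy of $H$ must map its roof triangle, or its pair of adjacent hanging triangles, onto a triangle structure of $G$, and by construction these occur only inside intended gadgets.
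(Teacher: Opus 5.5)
There is a genuine gap, and it lies in the direction vertex cover $\Rightarrow$ subdivision solution, not in soundness. Soundness needs no normalisation. If every non-designated edge lies in a single intended copy $H_{uv}$, you simply charge each subdivided edge to an endpoint of the corresponding $uv$, as in the proof of Lemma~\ref{lem:girth-4-a}. So your claim that subdivisions can be moved onto $e_u,e_v$ ``without creating new copies'' is both unproved and unnecessary. Your ``key local fact'' is also trivially true of any edge of an induced copy, and it says nothing about other copies, which is the whole difficulty.

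What is actually missing is a proof that subdividing $\{e_v : v\in S\}$ leaves no induced $H$. For the hanging-triangle construction you describe, this is false. Let $H$ be an edge $a_1a_2$ with two hanging triangles at $a_1$ and one at $a_2$. In your graph, each vertex $x$ of $F$ has a shared triangle with attachment vertex $a^{(x)}$. Each copy $H_{uv}$ in which $u$ plays $a_1$ adds a \emph{private} hanging triangle at $a^{(u)}$, plus the edge $a^{(u)}a^{(v)}$. Subdividing $e_u$ kills only the shared triangle. Now take $F=K_3$ and consider the two kinds of orientation.
\begin{itemize}
\item If the orientation is cyclic, every $a^{(x)}$ carries two hanging triangles. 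For any cover $S$ of size $2$, the vertex $w\notin S$ keeps both of its triangles, and each adjacent $a^{(u)}$ with $u\in S$ keeps its private one.
\item Otherwise some $u$ is the $a_1$-end of both its edges, so $a^{(u)}$ carries three hanging triangles. For each of the three covers of size $2$, you again find an attachment vertex with two surviving hanging triangles adjacent to one with a surviving hanging triangle.
\end{itemize}
In every case these vertices induce a copy of $H$, so the subdivision your argument prescribes is not a solution. The mixing needs only two copies sharing a gadget, so bounding $\deg_F$ does not help. Your roof-triangle construction (``two copies of $H$ sharing a vertex'', ``add a copy of $H$ on the union'') is not specified at all. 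Moreover, every extra copy of $H$ you add is one more obstacle competing for the budget.

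The idea you are missing is to let the source problem absorb the entire triangle structure, rather than encoding \textsc{Vertex Cover} through shared designated edges. First assume $ab$ is the only edge between degree-$\ge 3$ vertices; otherwise Lemma~\ref{lem:degree-3} applies. Let $X$ consist of $a,b$ and all vertices of roof triangles, or in the second case all vertices of hanging triangles attached at $a$ or $b$. Set $H'=H[X]$ and $R=H-(X\setminus\{a,b\})$. Then reduce from \textsc{$H'$-free Edge Deletion}, which has the required NP-hardness and ETH bound by Proposition~\ref{pro:deletion}, since $H'$ has at least two edges. Attach $k+1$ fresh copies of $R-\{a,b\}$ to every edge of $G'$, in both orientations in the hanging case, and keep the budget $k$.

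If $F$ is a deletion set, subdividing $F$ works. Subdivisions create neither triangles nor degree-$\ge 3$ vertices. The attached copies contain no triangle usable in the $H'$-part, because every roof triangle and every hanging triangle at $a$ or $b$ was placed in $X$. Hence the $H'$-part of any induced $H$ lies in $G'-F$. Conversely, any induced $H'$ surviving in $G'$ minus the subdivided edges completes to an induced $H$ via one of the $k+1$ untouched copies. Putting \emph{all} hanging triangles at $a$ into $H'$ is exactly what rules out the private-triangle mixing in the example above.
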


We also prove that the same holds for graphs that contain precisely one triangle.

\begin{restatable}{theorem}{GirthUniqueTriangle}
\label{thm:girth3-Unique-Triangle}
Let $H$ be a graph that contains precisely one triangle.
Then \textsc{$H$-free Subdivision} is NP-complete.
Furthermore, assuming the ETH, the problem cannot be solved in time
$2^{o(k)}\cdot n^{O(1)}$.
\end{restatable}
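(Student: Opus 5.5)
We reduce from \textsc{Vertex Cover} on graphs of maximum degree $3$ (or any bounded degree). This problem is NP-complete and, under the ETH, has no $2^{o(n)}$-time algorithm, hence no $2^{o(k)}n^{O(1)}$-time algorithm. Membership in NP is clear. First we dispose of the cases already covered. If the triangle $T=xyz$ of $H$ has two vertices of degree at least $3$, then it is a roof triangle, or all three vertices have degree at least $3$ and the degree-$\ge 3$ vertices induce at least two edges. In both situations \cref{thm:girth3} or \cref{thm:easy} applies. If no vertex of $T$ has degree at least $3$, then $T$ is a triangle component. So we may assume $T$ is either a $K_3$ component or a hanging triangle with a unique attachment vertex $x$.

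The key observation is that a subdivision of an edge of a triangle in $G$ destroys that triangle. Moreover, subdivisions never create new triangles, since every new vertex has degree $2$ with nonadjacent neighbours. The construction is as follows. Take the input graph $F$ of \textsc{Vertex Cover}, and for every vertex $v$ of $F$ create a "selector" triangle $t_v$ sharing a designated edge $e_v$. For every edge $uv$ of $F$, glue a copy of $H$ so that its unique triangle $T$ is realized using $t_u$ or $t_v$ in a way that the copy survives unless $e_u$ or $e_v$ is subdivided. A cleaner version of the same idea: identify the triangle of a copy $H_{uv}$ of $H$ with a triangle whose edges are shared with $t_u$ and $t_v$. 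To make only $e_u$ and $e_v$ useful, we replace every other edge of each gadget by many parallel copies of $H$. Concretely, we attach $k+1$ disjoint copies of $H-T$ at the relevant vertices, so that a single subdivision of a non-designated edge destroys at most one of $k+1$ otherwise equivalent induced copies. Subdividing an edge that lies outside every triangle cannot kill all of these copies either, because $H-T$ contains no triangle and the copies are redundant. Budget $k$ equals the vertex cover size. Since $|V(F)|=O(k)$ for cubic $F$ with $k\ge |E(F)|/3$, this yields the $2^{o(k)}$ lower bound.

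For correctness, the forward direction subdivides $e_v$ for all $v$ in the cover. We then argue that the resulting graph is $H$-free: every induced copy of $H$ must contain exactly one triangle, and all remaining triangles of the subdivided graph lie in gadgets where the padding prevents $H$ from embedding. The backward direction takes any solution and normalizes it, so that each subdivision either moves to some $e_v$ or is wasted, and reads off a vertex cover.

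I expect the main obstacle to be the forward direction: showing that no unintended induced copy of $H$ survives or appears after the subdivisions. This is hard because $H$ is arbitrary apart from having one triangle. My plan is to make $T$ the only triangle type in the construction and to make the gadget around each triangle a single copy of $H$ whose triangle is shared. Any induced $H$ must then map $T$ onto some triangle of $G$; the $k+1$ redundant copies are attached to distinct, triangle-free pendant structures so as not to merge with each other. To control the global structure I will count vertices: choose the gadgets so that each triangle lies in a component-like region of size exactly $|V(H)|$. If that fails, I would fall back to separating the gadgets by long induced paths, whose subdivision is harmless because $H$'s components containing $T$ have bounded diameter. This would ensure that any induced $H$ is confined to a single gadget.
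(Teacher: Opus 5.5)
\textbf{There is a genuine gap.} The proposal does not contain a proof. The construction is never fixed. The direction you flag as the main obstacle, that subdividing the designated edges of a vertex cover leaves no induced copy of $H$, is only a plan with fallbacks.

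The selector-triangle design is where it goes wrong. Suppose subdividing either $e_u$ or $e_v$ is to destroy the triangle carrying the copy of $H$ for the edge $uv$. Then that triangle must contain both $e_u$ and $e_v$, so the designated edges of adjacent vertices share an endpoint. In effect, all designated edges meet at a common apex. Separate triangles $t_v$ are then not just superfluous but harmful: for $v$ outside the cover, $t_v$ survives. With $k+1$ copies of $H-T$ hanging at triangle vertices, such leftover triangles are exactly where an unintended induced $H$ could appear. This is why the forward direction looks hard to you, and neither the vertex-counting idea nor the long-path fallback addresses it. The $k+1$ padding is also unnecessary, since every subdivision inside a private gadget can simply be charged to an endpoint of the corresponding edge.

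\textbf{The missing idea} is to build the instance so that \emph{every} triangle contains two designated edges. The forward direction is then immediate: subdivisions never create triangles, the designated edges of a vertex cover meet every triangle, and a triangle-free graph is $H$-free. This is what the paper does; the theorem is a corollary of \cref{lem:girth-4-b}(1), because the vertices on shortest cycles of $H$ form the triangle, which is $2$-connected.
\begin{itemize}
\item \emph{Source instance.} Take a \textsc{Vertex Cover} instance $G$ of girth larger than $|V(H)|$, in particular triangle-free. Your cubic graphs would need this too, since otherwise triangles of $G$ itself survive.
\item \emph{Construction.} Add an apex $q$ adjacent to all of $V(G)$. For each edge $uv$, add a private copy of $H$ in which $q$ plays a degree-$2$ vertex $\widetilde{c}$ of the triangle (one exists, else \cref{thm:easy} applies), and $u,v$ play its two neighbours. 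So $uv$ becomes an edge and $\{q,u,v\}$ is the triangle of that copy.
\item \emph{Only intended triangles.} The rest of each copy induces $H-\widetilde{c}$, which is triangle-free. The new vertices $X_{uv}$ are adjacent only to $X_{uv}\cup\{u,v\}$. Hence the triangles of the construction are exactly the sets $\{q,u,v\}$ with $uv\in E(G)$.
\item \emph{Forward direction.} Subdivide $qv$ for every $v$ in the cover; the result is triangle-free, hence $H$-free.
\item \emph{Backward direction.} Each set $\{q,u,v\}\cup X_{uv}$ induces $H$, so some edge inside it must be subdivided. Charge that subdivision to $u$ or $v$; this yields a vertex cover of size at most $k$.
\item \emph{ETH bound.} The budget is preserved and $k\le n$, so the ETH bound follows from \cref{pro:vc}.
\end{itemize}
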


Among graphs of girth $3$, the only graphs not covered by \cref{thm:easy}, \cref{thm:girth3}, and \cref{thm:girth3-Unique-Triangle} are those obtained from graphs of girth at least $4$ by attaching at least two hanging triangles in such a way that no two attachment vertices are adjacent.

We then completely verify the conjecture for non-forest graphs of girth at least $4$.

\begin{restatable}{theorem}{GirthFourTheorem}
\label{thm:girth-4}
Let $H$ be a graph which is not a forest and such that $\girth(H)\geq 4$.
Then \textsc{$H$-free Subdivision} is NP-complete.
Furthermore, assuming the ETH, the problem cannot be solved in time
$2^{o(k)}\cdot n^{O(1)}$.
\end{restatable}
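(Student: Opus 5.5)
We will prove \cref{thm:girth-4} by a linear-parameter reduction from \textsc{Vertex Cover} on graphs of maximum degree $3$ (or \textsc{3-SAT}). Under the ETH this problem has no $2^{o(n)}$ algorithm, and since the new budget $k$ will be $O(n)$, both NP-hardness and the $2^{o(k)}\cdot n^{O(1)}$ lower bound follow. Membership in NP is immediate.

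\emph{Choice of the gadget.} Since $H$ is not a forest and has girth $g\ge 4$, it contains a shortest cycle $C$, which is induced. Among the edges of $H$, we pick an edge $e=xy$ whose subdivision is ``most destructive''. Concretely, we choose $e$ on a shortest cycle and maximize a suitable potential, such as the number of induced copies of $H$ in the graph $H_e$ obtained by subdividing $e$ once. The key observation is that subdividing any edge of a shortest cycle raises the length of that cycle to $g+1$. If every cycle of length $g$ in a copy of $H$ passes through subdivided edges, then the copy cannot be an induced $H$. More precisely, we will argue via a counting invariant that a $j$-subdivision of $H$ with $j\ge 1$ of its subdivided edges lying on cycles is never isomorphic to an induced subgraph containing $H$ in the relevant way. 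For this we use the numbers of cycles of each length, or the multiset of cycle lengths restricted to the $2$-core. Since subdividing an edge lying on cycles strictly changes this multiset, while subdividing bridges or pendant edges does not destroy $H$, the only way to kill a copy of $H$ is to subdivide an edge lying on a cycle of the copy.

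\emph{Construction.} For each vertex $v$ of the source graph we create a copy $H_v$ of $H$ with a designated cycle edge $e_v$. For each edge $uv$ of the source graph we glue gadgets so that a copy of $H$ appears whose only ``cheap'' cycle edges are $e_u$ or $e_v$. The gluing is done by identifying suitable parts, or by inserting a long path that creates a new induced copy of $H$ using the cycles of $H_u$ and $H_v$. To prevent the solution from subdividing other edges instead, we replace every edge that must not be subdivided by a bundle of parallel-like structures: $k+1$ disjoint copies of $H$ sharing that edge. This makes subdividing it too expensive relative to the budget. With budget equal to the vertex-cover size, a cover $S$ yields a solution by subdividing $e_v$ for $v\in S$. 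Conversely, any solution can be normalized into one that subdivides only designated edges.

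\emph{Main obstacle.} The hardest step is soundness, namely showing that after subdividing the designated edges the resulting graph has no new induced copy of $H$ created across gadgets. Girth at least $4$ helps here, since gluings along an edge do not create short chords. However, the cycle structure of $H$ can be arbitrary, for example with several blocks and cycles of different lengths. We would handle this by choosing $C$ in the block of $H$ that is extremal, say with the largest number of vertices among the blocks containing a shortest cycle. We then use the subdivided-cycle-length invariant together with counting of blocks to exclude spurious copies. If this fails for general $H$, we would first try restricting to the $2$-core of $H$ and treat the attached trees as pendant decorations that survive every subdivision.
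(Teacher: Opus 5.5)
There is a genuine gap. The proposal never specifies a working reduction, and the two concrete mechanisms it names would fail.

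\textbf{The construction.} As stated, each per-vertex copy $H_v$ is itself an induced copy of $H$ that must be destroyed. That alone costs $|V(G)|$ subdivisions regardless of the cover, so the budget cannot equal the cover size. You also leave open how, for an edge $uv$, a copy of $H$ can pass through both $e_u$ and $e_v$ while $e_v$ is simultaneously shared by the copies for all other edges at $v$.

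Your protection gadget is inverted. If $k+1$ otherwise disjoint copies of $H$ all contain an edge $f$, then one subdivision of $f$ destroys all of them, while leaving $f$ intact costs at least $k+1$. So the bundle \emph{forces} $f$ to be subdivided, and it also adds many new copies of $H$ that consume budget.

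The missing idea is how the paper makes one subdivision ``select'' a vertex for all its incident edges at once. If the vertices of degree at least $3$ induce at least two edges, the paper reduces from \textsc{$H[X]$-free Edge Deletion} (\cref{lem:degree-3}). Otherwise some vertex $c$ of a shortest cycle has degree exactly $2$. The paper then takes a single hub $z$ adjacent to all of $V(G)$. For each edge $uv$ it adds a private copy of $H$ with $z$ playing $c$ and $u,v$ playing the two neighbours of $c$. Because $\deg_H(c)=2$, these copies are induced even though $z$ is universal, and subdividing $zv$ destroys every gadget at $v$. No protection is needed in the converse direction: any subdivision inside a gadget is simply charged to an endpoint of its edge.

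\textbf{Soundness.} You flag soundness as the main obstacle and leave it unresolved, but that is where essentially all of the paper's work lies. The paper uses triangle-freeness offensively. It adds all edges between distinct gadgets $X_e,X_f$, and between each $w\in V(G)$ and every $X_e$ with $w\notin e$. Then any spurious copy of $H$ meets at most two gadgets, in tightly constrained ways. Within a single gadget, subdividing $zu$ or $zv$ strictly lowers the number of $\girth(H)$-cycles.

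This argument (\cref{lem:girth-4-a}) only goes through when $H$ is not an induced subgraph of a graph in $\cH_1\cup\cH_2\cup\cH_3$. For the remaining graphs the paper needs a second reduction, from \textsc{Vertex Cover} on graphs of girth larger than $|V(H)|$. Its soundness rests on either the shortest-cycle vertices inducing a $2$-connected subgraph, or $\core(H)$ being $\widetilde{H}$ or some $H_{k,\ell}[i_1,i_2,i_3]$. In some of those cases $H$ is split at the cut vertex $x_2$ into an edge gadget and a vertex gadget. \cref{lem:girth-4-all} is the case analysis showing these cases are exhaustive.

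Your heuristics (an extremal block, passing to the $2$-core) do not address copies of $H$ whose shortest cycles come from different gadgets sharing a vertex. Separately, your claim that subdividing a bridge never destroys $H$ is false. For $H$ equal to two disjoint $C_4$'s joined by an edge, subdividing that edge yields an $H$-free graph.
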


Finally, we consider trees and obtain the following result, where a branching vertex is a vertex with degree at least 3.

\begin{restatable}{theorem}{TreeTheorem}
\label{thm:tree}
Let $H$ be a tree with exactly two branching vertices, and suppose that these two vertices are at distance $2$ or at least $4$.
Then \textsc{$H$-free Subdivision} is NP-complete.
Moreover, assuming the ETH, the problem cannot be solved in time
$2^{o(k)}\cdot n^{O(1)}$.
\end{restatable}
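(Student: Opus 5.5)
We prove \cref{thm:tree} with a linear-parameter reduction from \textsc{Vertex Cover} on graphs of maximum degree $3$ (or on cubic graphs). That problem is NP-complete and, under the ETH, has no $2^{o(n)}$-time algorithm. Since our budget will be $k=O(n)$, this gives both NP-hardness and the $2^{o(k)}\cdot n^{O(1)}$ lower bound. Membership in NP is immediate, because at most $k\le |E(G)|$ subdivisions are ever useful.

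Let $a,b$ be the two branching vertices of $H$, let $P$ be the $a$--$b$ path of length $d\ne 1,3$, and let $T_a,T_b$ be the subdivided stars hanging at $a$ and $b$ (legs other than $P$). The basic fact we use is the one already used for \cref{obs:poly}. An induced subdivided star survives any subdivision. Hence an induced copy of $H$ can only be destroyed by subdividing an edge of $P$, or by subdividing an edge so that some path in $H$ becomes longer and the copy stops being induced in the required way. The plan is to design gadgets in which the only cheap way to kill the copies of $H$ is to subdivide specific ``central'' edges.

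The construction from a graph $F$ is as follows.
\begin{itemize}
\item For each vertex $v$ of $F$, create a vertex gadget containing a distinguished edge $e_v$. Subdividing $e_v$ corresponds to putting $v$ in the cover.
\item For each edge $uv$ of $F$, create an induced copy of $H$ whose path $P$ runs through both $e_u$ and $e_v$, so that subdividing either of them destroys this copy. The remaining legs of $T_a,T_b$ are pendant paths attached privately, and they are made long enough, or duplicated $k+1$ times, that subdividing them is never useful.
\end{itemize}

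The distance condition enters here. With $d=2$, $P=a\,c\,b$, so the natural choice is to let $c$ correspond to an edge of $F$ and to use the two edges $ac$ and $cb$, each shared with neighbouring copies, as the vertex choices. With $d\ge 4$ there is room to place $e_u$ and $e_v$ on $P$ at positions separated by at least one edge from the branching vertices. That keeps each vertex gadget from touching $a$ or $b$ directly, which would otherwise create extra branching and spurious copies.

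The correctness argument has two directions.
\begin{itemize}
\item \textbf{Completeness.} A vertex cover $S$ yields the $|S|$ subdivisions $\{e_v : v\in S\}$. We must check that the resulting graph is $H$-free.
\item \textbf{Soundness.} We show that any solution can be normalized to one that subdivides only edges $e_v$. The tool is a replacement argument: any subdivided edge inside the copy for $uv$ can be swapped for $e_u$ without increasing cost, using the fact that pendant legs are duplicated beyond the budget.
\end{itemize}

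The main obstacle is completeness. After the chosen subdivisions, we must rule out every induced copy of $H$, including unintended ones that use parts of several gadgets. In particular, a path between $a$ and $b$ might be realized with length exactly $d$ through other routes, including routes through newly created subdivision vertices. This is exactly where $d=3$ would break: subdividing an edge at distance $1$ can create a new path of length $3$ between branching vertices, and for $d=1$ the problem is polynomial. The first attempt will be a degree and distance analysis. Branching vertices of the constructed graph occur only at the designated $a$'s and $b$'s. After subdividing a central edge, the distance between the two ends of $P$ in that copy becomes $d+1$, and any other path between branching vertices has length different from $d$. This requires choosing gadget spacing (inserting buffer paths of suitable length between copies) so that distances between branching vertices in different copies avoid the value $d$, both before and after subdivision. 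If the cases $d=2$ and $d\ge 4$ cannot be handled uniformly, they will be treated with separate gadgets.
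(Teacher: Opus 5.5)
\textbf{There is a genuine gap.} Your proposal never fixes a construction, and it leaves the completeness direction, which you yourself call the main obstacle, as a ``first attempt''. The invariant you plan to use there also does not fit your own scheme.

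\textbf{Forced extra branching.} The edge $e_u$ must lie on the $a$--$b$ path of the copies for all edges $uv,uw,ux$ at $u$. The path for $uv$ must avoid $e_w$ and $e_x$; otherwise subdividing $e_w$ would destroy the copy for $uv$, and a set that is not a vertex cover could yield a solution. So these paths must part ways, and every vertex where they do has degree at least $3$. There are two possibilities:
\begin{itemize}
\item Such a vertex is an extra branching vertex. This contradicts your claim that branching vertices occur only at the designated $a$'s and $b$'s, and subdividing $e_u$ does not affect it.
\item It is a designated branching vertex lying inside another copy's path. Then you must exclude every pair of designated vertices joined by an induced path of length exactly $d$ that does not come from an edge of $F$, both before and after the cover edges are subdivided.
\end{itemize}
Buffer paths cannot separate copies that are forced to overlap in $e_u$.

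\textbf{The case $d=2$.} Your sketch is not coherent as stated. If $c$ represents the edge $uv$, then $ac$ and $cb$ lie only in copies containing $c$, so they cannot be the choices $e_u,e_v$ shared by all copies at $u$ and at $v$. If you instead force $P=e_u\cup e_v$, then distinguished edges $e_u=xy$ and $e_w=yt$ meeting at $y$ with $uw\notin E(F)$ give induced paths $x\,y\,t$ between leg-carrying vertices. These are potential copies of $H$ that a vertex cover of $F$ need not destroy.

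\textbf{What the paper does instead.} It does not design such a spine by hand. It keeps the input graph as the spine and reduces from an edge-deletion problem, where subdividing an edge $d$ times plays the role of deleting it:
\begin{itemize}
\item For $d=2r$ the source is \textsc{$P_3$-free Edge Deletion}; for $d=2r+3\ge 5$ it is \textsc{$P_4$-free Edge Deletion}. The budget is $K=dk'$.
\item At every vertex it hangs $K+1$ copies of a rooted tree $A_q$, for each $q\le r-1$ (resp.\ $q\le r$). Here $A_q$ is a path of length $q$ to a vertex $b_q$ carrying $c$ legs of length $\lambda$.
\end{itemize}
The gadgets absorb any offset $q_x+q_y$ in the needed range. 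So every induced path between original vertices whose length lies in the window $[2,d]$ (resp.\ $[3,d]$) extends to an induced copy of $H$. In the even case, an edge subdivided $1\le s<d$ times or an untouched induced $P_3$ therefore yields a copy. In the odd case, some consecutive part of any induced $P_4$ does. This forces $F=\{e : s(e)\ge d\}$, which has $|F|\le k'$, to be a deletion set.

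Completeness is then immediate, because the only branching vertices are original vertices and the $b_q$'s. A copy of $H$ gives an induced path of length $d-q_\alpha-q_\beta\ge 2$ (resp.\ $\ge 3$) in $G'-F$. That is impossible in a cluster graph (resp.\ a cograph).

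This window is also where $d=3$ actually fails, rather than for the reason you give. There $r=0$ and the window is $\{3\}$. An induced $P_4$ with each edge subdivided once has no consecutive segment of length $3$.
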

Note that, among trees with exactly two branching vertices, the cases not covered by \cref{thm:tree} are precisely those in which the branching vertices are at distance $1$ or $3$.
The former case corresponds to a subdivided bistar and is handled by \cref{obs:poly}.
For the latter case, we do not have a reduction that yields a parameterized subexponential-time lower bound under the ETH.

A simple bounded search-tree algorithm gives a
$2^{O(k)}\cdot n^{O(1)}$-time fixed-parameter tractable algorithm for the \textsc{$H$-free Subdivision} problem.
Our lower bound shows that, for the cases in which we prove hardness, this running time cannot be substantially improved, assuming the ETH.


\paragraph*{Preliminaries.}
All graphs considered in this paper are finite, simple, and undirected.
For a graph $G$ and a set $X\subseteq V(G)$, we denote by $G[X]$ the subgraph of $G$ induced by $X$.
For a set $F\subseteq E(G)$, we denote by $G-F$ the graph obtained from $G$ by deleting all edges in $F$.
For vertices $u,v\in V(G)$, we denote by $\deg_G(v)$ the degree of $v$ in $G$, and by $\dist_G(u,v)$ the length, that is, the number of edges, of a shortest $u$-$v$ path in $G$.
We denote by $\delta(G)$ the minimum degree of $G$.
The \emph{girth} of a graph $G$, denoted by $\girth(G)$, is the length of a shortest cycle in $G$.
For two disjoint sets $X_1,X_2\subseteq V(G)$, we denote by $E_G(X_1,X_2)$ the set of edges of $G$ with one endpoint in $X_1$ and the other in $X_2$.
If the graph $G$ is clear from the context, we omit the subscript and write simply $\deg(v)$, $\dist(u,v)$, and $E(X_1,X_2)$.

For a connected graph $G$, its \emph{core}, denoted by $\core(G)$, is the graph obtained from $G$ by repeatedly deleting vertices of degree one.
For a positive integer $t$, we denote by $C_t$, $K_t$, and $P_t$ the cycle on $t$ vertices, the complete graph on $t$ vertices, and the path on $t$ vertices, respectively.
Unless otherwise specified, $n$ and $m$ denote the number of vertices and the number of edges of the graph under discussion, respectively.

Our reductions are either from a suitable \textsc{$H'$-free Edge Deletion} problem or from \textsc{Vertex Cover}, sometimes with additional girth restrictions.

\begin{proposition}[\cite{DBLP:journals/siamdm/AravindSS17}]
\label{pro:deletion}
\HFD\ is \NPC\ if $H$ has at least two edges.
Furthermore, assuming the ETH, the problem cannot be solved in time
$2^{o(k)}\cdot n^{O(1)}$.
\end{proposition}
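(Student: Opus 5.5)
\cref{pro:deletion} is quoted from~\cite{DBLP:journals/siamdm/AravindSS17} and we use it as a black box. If one wanted to re-prove it, the plan would be a linear-parameter reduction from \textsc{Vertex Cover}. Membership in NP is immediate, since we can guess the deleted edge set and test $H$-freeness in time $n^{|V(H)|}$. For the lower bound, recall that under the ETH \textsc{Vertex Cover} admits no $2^{o(n)}$-time algorithm even on graphs with $m=O(n)$ and maximum degree~$3$. It therefore suffices to map an instance $(G,k)$ to an instance $(G',k')$ of \HFD\ with $k'=O(k+n)$ and $|V(G')|$ polynomial.

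First, fix two distinct edges $e_1,e_2$ of $H$. Preferably choose them inside a component of $H$ with the largest number of edges, and chosen extremal (for instance, of maximum degree sum) so that we retain control over embeddings later. Second, build $G'$ as follows. For every vertex $v$ of $G$ create a designated edge $f_v$. For every edge $uv$ of $G$ create a private copy $H_{uv}$ of $H$ in which $e_1$ is identified with $f_u$ and $e_2$ is identified with $f_v$; all other vertices of $H_{uv}$ are fresh. Before this, subdivide $G$ (parity-preserving, which changes the vertex cover size by a known amount) so that $G$ has large girth and the gadgets interact only very locally. Set $k'=k$ plus the subdivision correction.

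Third, prove equivalence by an exchange argument. Given a vertex cover $S$, deleting $\{f_v: v\in S\}$ destroys every copy $H_{uv}$, because the induced structure of $H_{uv}$ loses an edge. Conversely, in any solution we may replace every deleted edge of $H_{uv}$ that is not designated by $f_u$. The copy $H_{uv}$ is still destroyed, deleting $f_u$ can only destroy more copies, and the budget does not increase. Hence the designated endpoints of the deleted edges form a vertex cover of size at most $k$. For the forward direction to hold, deleting designated edges must not create new induced copies of $H$. To control this, arrange the construction so that, whenever two edges of $H$ are chosen to be non-adjacent, every induced copy of $H$ in $G'$ uses at least one designated edge.

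The main obstacle is the claim that the only induced copies of $H$ in $G'$, both before and after deletions, are the intended $H_{uv}$. A copy could straddle several gadgets by passing through the shared designated edges. If $H$ is disconnected, a copy could also take components from unrelated gadgets. To address this, first try a counting argument: $H$ has a maximum number of edges within its largest component, while the large girth of the subdivided $G$ forces any straddling connected subgraph to be tree-like across gadgets and hence to have strictly fewer edges in some block. Where counting fails (for example, for stars, paths, or $2K_2$), handle those small $H$ by separate direct reductions, such as the known hardness of cluster deletion for $P_3$.
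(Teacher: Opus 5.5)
Citing \cref{pro:deletion} as a black box is exactly what the paper does: it gives no proof and imports the result from~\cite{DBLP:journals/siamdm/AravindSS17}, so nothing more is needed here. Your re-proof sketch, however, fails at exactly the step you flag as the main obstacle. The failure is not limited to the small $H$ you set aside: for some $H$ your construction maps yes-instances to no-instances.

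For forests your counting idea gives nothing, since every block is a single edge. Take the fork, with edges $ab,bc,cd,ce$; up to symmetry the only vertex-disjoint pair is $e_1=ab$, $e_2=cd$.
\begin{itemize}
\item In every gadget containing $f_u=u_1u_2$, exactly one of $u_1,u_2$ is mapped to $b$ or $c$. So if $\deg_G(u)=3$, one of them, say $u_2$, is mapped to $b$ or $c$ in two gadgets $H_{uv}$ and $H_{uw}$.
\item Let $z\neq u_1$ be a neighbour of $u_2$ in $H_{uw}$, and let $e_{uv}$ be the copy of $e$ in $H_{uv}$.
\item If $u_2\mapsto b$ in $H_{uv}$ and $f_v=v_cv_d$, then $z,u_2,v_c,v_d,e_{uv}$ induce a fork. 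If instead $u_2\mapsto c$ and $f_v=v_av_b$, then $v_a,v_b,u_2,z,e_{uv}$ induce a fork.
\end{itemize}
In either case the copy uses a designated edge, namely $f_v$, so your proposed invariant does not help: the copy survives whenever $u\in S$ and $v\notin S$. Take $G=K_{1,3}$ and $k=1$ (likewise its even subdivisions with the adjusted budget). Every deletion set within budget that meets all intended copies is $\{f_s:s\in S\}$ for a minimum vertex cover $S$ containing the centre and none of its neighbours. Hence the produced instance is a no-instance, although $(G,k)$ is a yes-instance.

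Disconnected $H$ fail even more directly. Let $H=2Q$ with $Q$ connected and containing an edge (say $H=2C_4$). Whichever components carry $e_1,e_2$, two copies of $Q$ in different gadgets survive the deletion of $\{f_s:s\in S\}$ and together induce $H$. If $e_1,e_2$ lie in the same component, these are private copies. Otherwise they are the copies through $f_x$ and $f_y$ for distinct $x,y\notin S$, which are non-adjacent because $xy\notin E(G)$. No girth condition on $G$ prevents this.

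Three further steps are wrong as stated:
\begin{itemize}
\item The premise that large girth makes straddling subgraphs tree-like across gadgets is false, because all gadgets at $u$ share both endpoints of $f_u$. For $H=C_4$, two gadgets at $u$ already span a $6$-cycle.
\item In the converse direction, ``deleting $f_u$ can only destroy more copies'' is false for induced copies: deleting the middle edge of an induced $P_4$ creates an induced $2K_2$. So your exchange need not yield a solution. You do not need it, though. Each $V(H_{uv})$ induces $H$, and non-designated edges are private to one gadget. Charging each deleted edge to an endpoint of its gadget ($f_u$ to $u$) therefore directly gives a vertex cover within the budget, as in the proof of \cref{lem:girth-4-a}.
\item If $e_1$ and $e_2$ share an endpoint, which is unavoidable for $H=K_3$, they cannot be identified with the disjoint edges $f_u,f_v$. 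You would need designated edges sharing a common apex, as in \cref{lem:girth-4-a}.
\end{itemize}
So the sketch does not re-prove the proposition; relying on the citation, as the paper does, is the right choice.
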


\begin{proposition}[\cite{komusiewicz2018tight,poljak1974note}]
\label{pro:vc}
For every fixed integer $\ell\geq 3$, \textsc{Vertex Cover} is \NPC\ on graphs of girth at least $\ell$.
Moreover, assuming the ETH, \textsc{Vertex Cover} on graphs of girth at least $\ell$ cannot be solved in time
$2^{o(n+m)}$.
\end{proposition}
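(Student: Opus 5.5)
The plan is to reduce from \textsc{Vertex Cover} on general graphs, using Poljak's even-subdivision trick to raise the girth while changing the optimum by a known amount and the instance size by only a constant factor.

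\textbf{Base problem.} Start from \textsc{Vertex Cover} on arbitrary graphs, which is \NPC. Under the ETH it admits no $2^{o(n+m)}$-time algorithm: this follows from the Sparsification Lemma together with the standard linear-size reduction from \textsc{3-SAT}, and it holds even on graphs of bounded degree.

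\textbf{Gadget: replace every edge by a path of length three.} From $G$ build $G'$ by replacing each edge $uv$ with a path $u\,a_{uv}\,b_{uv}\,v$ on two new vertices. I claim that $\tau(G')=\tau(G)+m$, where $\tau$ denotes the vertex cover number.
\begin{itemize}
\item \emph{Upper bound.} Take a vertex cover $S$ of $G$. For each edge $uv$, at least one endpoint lies in $S$, say $u$. Add the internal vertex adjacent to the other endpoint, here $b_{uv}$. This covers all three path edges $ua_{uv}$, $a_{uv}b_{uv}$, $b_{uv}v$, so we obtain a cover of $G'$ of size $|S|+m$.
\item \emph{Lower bound.} Let $S'$ be a vertex cover of $G'$. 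Every path contains at least one internal vertex of $S'$, since $a_{uv}b_{uv}$ must be covered. Now consider a path whose endpoints $u,v$ both lie outside $S'$. Then $S'$ must contain both $a_{uv}$ and $b_{uv}$. Swap $a_{uv}$ out for $u$; the result is still a cover and its size does not increase. Repeat until every original edge has an endpoint in the cover. Afterwards, deleting all internal vertices removes at least $m$ vertices and leaves a vertex cover of $G$. Hence $\tau(G)\le |S'|-m$.
\end{itemize}

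\textbf{Iterating to reach girth $\ell$.} Each application of the gadget multiplies every cycle length by $3$, so the girth at least triples. Starting from girth at least $3$ (simple graphs), $O(\log \ell)$ iterations give girth at least $\ell$. Each iteration maps $(n,m)$ to $(n+2m,3m)$. Since $\ell$ is a fixed constant, the final graph has $O(n+m)$ vertices and edges, and the budget shifts by a computable amount.

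\textbf{Conclusion.} \NP-hardness on graphs of girth at least $\ell$ follows directly. For the ETH bound, suppose there were a $2^{o(n'+m')}$-time algorithm on the transformed graphs. Composing it with the reduction would give a $2^{o(n+m)}$-time algorithm for general \textsc{Vertex Cover}, a contradiction.

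The only step needing care is the lower bound $\tau(G')\ge\tau(G)+m$, specifically the exchange argument showing that an optimal cover can be normalized so that every original edge has an endpoint in it. Everything else is bookkeeping.
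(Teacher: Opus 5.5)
Your proof is correct, and it is the standard Poljak even-subdivision argument behind the references cited for this proposition. Replacing each edge by a path of length three raises $\tau$ by exactly one per edge and triples the girth, so for fixed $\ell$ the blow-up is only a constant factor; the paper itself just cites the result and gives no proof, and your exchange step is sound because the modified set is still a cover of $G'$ and hence still contains an endpoint of every edge $a_{uv}b_{uv}$.
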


We say that two problems $A$ and $B$ are \emph{polynomially equivalent} if there is a polynomial-time reduction from $A$ to $B$ and a polynomial-time reduction from $B$ to $A$.
All problems considered in this paper are clearly in NP, and hence we do not explicitly state membership in NP in every proof.
Moreover, all reductions described in this paper are polynomial-time computable; we omit this verification when it is immediate from the construction.

For background on the Exponential Time Hypothesis and on techniques for proving parameterized lower bounds under the ETH, we refer to the textbook of Cygan et al.~\cite{cygan2015parameterized}.
\section{Initial results}
\label{sec:initial-results}

In this section, we prove \cref{thm:easy}.

\EasyTheorem*

First we handle the first case in \cref{thm:easy}, by showing polynomial
equivalence with the corresponding edge-deletion problem.

\begin{lemma}
\label{lem:edge-deletion-eq}
Let $H$ be a graph with minimum degree at least $2$ such that the neighborhood
of every degree-$2$ vertex induces a $K_2$.
Then \HFS\ is polynomially equivalent to \HFD.
\end{lemma}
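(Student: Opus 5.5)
The key observation is that subdividing an edge of $G$ acts, as far as induced copies of $H$ are concerned, exactly like deleting that edge. Let $S$ be the vertex set of an induced copy of $H$ in a $k$-subdivision $G'$ of $G$. A subdivision vertex of $G'$ (a vertex not in $V(G)$) has degree exactly $2$ in $G'$. Its two neighbours are non-adjacent, since $G$ is simple and the subdivided edge is gone. So no subdivision vertex can lie in $S$: a vertex of $S$ has degree at least $2$ in $G'[S]$, so if it has degree $2$ in $G'$ then it has degree exactly $2$ in $H$, and by hypothesis its neighbourhood in $H$ induces a $K_2$. Hence every induced copy of $H$ in $G'$ lives on original vertices.

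Among original vertices, $G'[V(G)]$ is precisely $G-F$, where $F$ is the set of edges of $G$ that were subdivided at least once. This holds because subdividing never creates an edge between two original vertices and removes exactly the subdivided edges. Therefore $G'$ is $H$-free if and only if $G-F$ is $H$-free. Since $|F|\le k$, the instance $(G,k)$ of \HFS\ is a yes-instance if and only if $(G,k)$ is a yes-instance of \HFD.

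Conversely, given $F$ with $|F|\le k$ and $G-F$ $H$-free, I subdivide each edge of $F$ once to obtain $G'$. By the argument above, any induced $H$ in $G'$ avoids the new vertices, so it lies in $G-F$, which is impossible. Thus the identity map $(G,k)\mapsto(G,k)$ is a polynomial-time reduction in both directions, which gives the claimed polynomial equivalence.

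The only delicate point is excluding subdivision vertices from copies of $H$. The minimum-degree condition rules out degree-$1$ occurrences of such vertices. A new vertex cannot have degree $0$ in $G'[S]$, because $\delta(H)\ge 2$. A vertex created by iterated subdivision still has degree $2$ with two non-adjacent neighbours, since it lies on a path of length at least $2$ replacing an edge and $G'$ stays simple. So no new case arises there.
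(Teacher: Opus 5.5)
Your proof is correct and follows essentially the same route as the paper. Both take $F$ to be the set of subdivided original edges, use $G'[V(G)]=G-F$, and rule out subdivision vertices from induced copies of $H$ because they have degree $2$ with non-adjacent neighbours. The only cosmetic difference is that you prove this exclusion for arbitrary (possibly iterated) subdivisions and derive both directions from it, whereas the paper observes that the direction from subdivision to deletion needs no hypothesis on $H$ and uses the exclusion only for single subdivisions.
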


\begin{proof}
We prove that, for every graph $G$ and integer $k$, the instance $(G,k)$ is a
yes-instance of \HFD\ if and only if it is a yes-instance of \HFS.

First suppose that $(G,k)$ is a yes-instance of \HFS.
Let $G'$ be an $H$-free graph obtained from $G$ by at most $k$ edge
subdivisions.
Let $F\subseteq E(G)$ be the set of original edges of $G$ that are subdivided
at least once in this process.

We claim that $G-F$ is $H$-free.
Suppose, for a contradiction, that $G-F$ contains an induced copy of $H$ on a
vertex set $W\subseteq V(G)$.
Since the edges of $F$ are precisely the original edges that are no longer
present between vertices of $V(G)$ in $G'$, the induced subgraph $G'[W]$ is
identical to $(G-F)[W]$.
Thus $G'[W]$ also induces a copy of $H$, contradicting the fact that $G'$ is
$H$-free.
Hence $G-F$ is $H$-free.
Since $|F|\leq k$, the instance $(G,k)$ is a yes-instance of \HFD.

Conversely, suppose that $(G,k)$ is a yes-instance of \HFD.
Let $F\subseteq E(G)$ be a set of at most $k$ edges such that $G-F$ is
$H$-free.
Let $G'$ be obtained from $G$ by subdividing each edge in $F$ exactly once.

We show that $G'$ is $H$-free.
Let $z$ be a vertex introduced by subdividing an edge $uv$.
Then $z$ has degree $2$ in $G'$, and its two neighbors are not adjacent in
$G'$, because the edge $uv$ was removed.
Therefore, $z$ cannot play the role of a degree-$2$ vertex of $H$, since the
neighborhood of every degree-$2$ vertex of $H$ induces a $K_2$.
It also cannot play the role of a vertex of degree at least $3$, and $H$ has no
vertices of degree at most $1$.

Thus no subdivision vertex can belong to an induced copy of $H$ in $G'$.
Every induced copy of $H$ in $G'$ would therefore have to be contained entirely
in $V(G)$.
But the graph induced by $V(G)$ in $G'$ is precisely $G-F$, which is $H$-free.
Hence $G'$ is $H$-free.

This proves the equivalence.
\end{proof}

Now, the first case of \cref{thm:easy} follows from
\cref{lem:edge-deletion-eq} and Proposition~\ref{pro:deletion}.

We now handle the second case.

\begin{lemma}
\label{lem:degree-3}
Let $H$ be a graph, let $X\subseteq V(H)$ be the set of vertices of degree at
least $3$, and let $H'=H[X]$.
If $H'$ contains at least two edges, then \textsc{$H$-free Subdivision} is
NP-complete.
Furthermore, assuming the ETH, the problem cannot be solved in time
$2^{o(k)}\cdot n^{O(1)}$.
\end{lemma}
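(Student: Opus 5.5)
We reduce from \textsc{$H'$-free Edge Deletion}, which by \cref{pro:deletion} is NP-complete and has no $2^{o(k)}\cdot n^{O(1)}$-time algorithm under the ETH, because $H'$ has at least two edges. Given an instance $(G,k)$ of \textsc{$H'$-free Edge Deletion}, we build a graph $\tc G$ by attaching to every vertex $v\in V(G)$ a large pendant gadget: we add $N$ new pendant vertices adjacent only to $v$, where $N$ is much larger than $|V(H)|$. The output instance is $(\tc G,k)$. The idea is that every original vertex now has huge degree, so it can play the role of any vertex of $X$. Pendant vertices have degree $1$, and subdivision vertices have degree $2$, so neither can ever play the role of a vertex of $X$. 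Edge subdivisions of pendant edges never destroy anything relevant.

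For the forward direction, let $F$ be a solution with $|F|\le k$, so that $G-F$ is $H'$-free. We subdivide every edge of $F$ once and argue that the result is $H$-free. Suppose there were an induced copy of $H$. Its vertices of degree at least $3$ in $H$ must be mapped to vertices of $\tc G$ of degree at least $3$, and these are exactly the original vertices. Adjacency among original vertices in the subdivided graph is exactly $G-F$, so the image of $X$ induces a copy of $H'$ in $G-F$, a contradiction. There is one subtlety: an induced copy of $H$ restricted to the images of $X$ gives an induced copy of $H'$. This holds because the map is an induced embedding, so adjacency among images of $X$ matches $H[X]=H'$ exactly.

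For the backward direction, let $S$ be a set of at most $k$ subdivisions making $\tc G$ $H$-free, and let $F$ be the original edges of $G$ that were subdivided. We claim $G-F$ is $H'$-free. Take an induced copy of $H'$ in $G-F$ on a set $W$, with $v_x\in W$ the image of $x\in X$. We extend it to an induced copy of $H$ in the subdivided graph as follows. Each vertex $y\notin X$ of $H$ (degree at most $2$) and its incident edges must be realized using fresh pendant vertices and paths. This is the main obstacle: a vertex $y\notin X$ adjacent to two vertices of $X$, or lying on a path between them, cannot be realized by pendants alone, since a pendant vertex is adjacent to only one original vertex.

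To handle this, I would enrich the gadget. For every ordered structure needed, we attach to each vertex (or each pair of vertices) copies of $H-X$ wired as in $H$. Concretely, for every injective map $\phi\colon X\to V(G)$ we add a private copy of $H-X$, joined to $\phi(X)$ according to $H$. Since $|X|$ is constant, this keeps the construction polynomial. These private copies have all non-$X$ vertices of degree at most $2$ (plus we must check they cannot create new high-degree vertices; their degrees equal those in $H$, which are at most $2$). We then make the gadget robust to subdivisions by adding $k+1$ parallel copies for each $\phi$. At most $k$ subdivisions hit at most $k$ copies, so one untouched copy together with $W$ gives an induced $H$, contradicting $H$-freeness.

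We must also recheck the forward direction with these gadgets. Any induced copy of $H$ has its $X$-part on original vertices of degree at least $3$. Some care is needed because gadget vertices of degree $2$ might now create extra adjacencies; they do not, since gadget vertices are never adjacent to two original vertices unless $H$ dictates it, and edges among original vertices are unchanged. Therefore the $X$-images again induce $H'$ in $G-F$. Since $k$ is preserved and the size blowup is polynomial, both NP-completeness and the ETH lower bound transfer.
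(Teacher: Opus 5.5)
Your proof is correct, and its overall structure is the same as the paper's. You reduce from \textsc{$H'$-free Edge Deletion} with the same budget. In the forward direction you use that every non-original vertex has degree at most $2$, so the images of $X$ lie in $V(G)$ and induce $H'$ in $G-F$. In the backward direction you use $k+1$ parallel gadgets, so that one untouched gadget completes a copy of $H'$ in $G-F$ to an induced copy of $H$.

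The real difference is the gadget. The paper splits $H-X$ into its components, which are paths or cycles. It classifies them by the number of edges to $X$ (zero, one, or two) and attaches copies per vertex or per unordered pair of vertices. This keeps the blow-up at $O(k\,n^2)$. The cost is a case analysis, plus the extra observation that subdividing a pendant path only lengthens it, so a single copy per vertex suffices there.

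You instead attach $k+1$ copies of the whole of $H-X$ for each injective map $\phi\colon X\to V(G)$, wired exactly as in $H$. This costs $O(k\,n^{|X|})$ vertices, which is still polynomial for fixed $H$, and it removes every case distinction. Take the copy indexed by $\phi=\sigma$, where $\sigma$ is the isomorphism onto the found copy of $H'$. It automatically has the correct orientation and exactly the required adjacencies to $W$ and no others. Its vertices keep their $H$-degrees, which are at most $2$, so the forward direction is unaffected.

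The pendant vertices from your first attempt are unnecessary but harmless. The forward argument only needs non-original vertices to have degree at most $2$; it does not need original vertices to have large degree. Your extra check that gadget vertices create no unwanted adjacencies is likewise not needed, since only the images of $X$ matter there.
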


\begin{proof}

We reduce from \textsc{$H'$-free Edge Deletion}, which is NP-complete by
Proposition~\ref{pro:deletion}, since $H'$ contains at least two edges.
Let $(G',k)$ be an instance of \textsc{$H'$-free Edge Deletion}.

Let $S_1,\ldots,S_r$ be the connected components of $H-X$.
Since every vertex of $H-X$ has degree at most $2$ in $H$, each $S_i$ is either
a path or a cycle.
Moreover, if $S_i$ has a neighbor in $X$, then $S_i$ is necessarily a path.
For each $i\in[r]$, one of the following cases holds:
\begin{itemize}
    \item $E_H(S_i,X)=\emptyset$;
    \item $|E_H(S_i,X)|=1$, and $S_i$ is a path whose unique edge to $X$ is
    incident with an endvertex of $S_i$;
    \item $|E_H(S_i,X)|=2$ and $|V(S_i)|\geq 2$, and $S_i$ is a path whose two
    edges to $X$ are incident with the two endvertices of $S_i$;
    \item $|E_H(S_i,X)|=2$ and $|V(S_i)|=1$, and the unique vertex of $S_i$ has
    two distinct neighbors in $X$.
\end{itemize}

We construct a graph $G$ as follows.
Start with a copy of $G'$.
For every component $S_i$ of $H-X$, we proceed according to the corresponding
case above.

\begin{itemize}
    \item If $E_H(S_i,X)=\emptyset$, then add $k+1$ vertex-disjoint copies of
    $S_i$, denoted by
    \[
        G_{i,1},\ldots,G_{i,k+1}.
    \]

    \item If $|E_H(S_i,X)|=1$, let $a_i$ be the endvertex of $S_i$ incident
    with the unique edge from $S_i$ to $X$.
    For every vertex $v\in V(G')$, add one copy of $S_i$, denoted by $G_{i,v}$,
    and make $v$ adjacent to the vertex corresponding to $a_i$ in $G_{i,v}$.

    \item If $|E_H(S_i,X)|=2$ and $|V(S_i)|\geq 2$, let $a_i$ and $b_i$ be the
    two endvertices of the path $S_i$ incident with edges to $X$.
    For every unordered pair $\{u,v\}$ of vertices of $G'$, where $u=v$ is
    allowed, add $k+1$ vertex-disjoint copies of $S_i$, denoted by
    \[
        G_{i,\{u,v\},1},\ldots,G_{i,\{u,v\},k+1}.
    \]
    In each such copy, make $u$ adjacent to the vertex corresponding to $a_i$
    and make $v$ adjacent to the vertex corresponding to $b_i$.

    \item If $|E_H(S_i,X)|=2$ and $|V(S_i)|=1$, then for every unordered pair
    $\{u,v\}$ of distinct vertices of $G'$, add $k+1$ new vertices
    \[
        z_{i,\{u,v\},1},\ldots,z_{i,\{u,v\},k+1},
    \]
    each adjacent to both $u$ and $v$.
\end{itemize}

This completes the construction of $G$.

We prove that $(G',k)$ is a yes-instance of \textsc{$H'$-free Edge Deletion} if
and only if $(G,k)$ is a yes-instance of \textsc{$H$-free Subdivision}.

First suppose that there exists a set $F\subseteq E(G')$ with $|F|\leq k$ such
that $G'-F$ is $H'$-free.
Let $\widehat{G}$ be obtained from $G$ by subdividing each edge of $F$ exactly
once.
We claim that $\widehat{G}$ is $H$-free.

Suppose, for a contradiction, that $\widehat{G}$ contains an induced copy of
$H$.
Every vertex of $H$ that belongs to $X$ has degree at least $3$.
On the other hand, every vertex added in the construction of $G$ has degree at
most $2$, and every vertex introduced by subdivision has degree exactly $2$.
Therefore, the vertices of the copy corresponding to vertices of $X$ must all
belong to $V(G')$.
Let $W\subseteq V(G')$ be the set of these vertices.
Then $\widehat{G}[W]$ induces a copy of $H'$.
However,
\[
    \widehat{G}[V(G')]=G'-F,
\]
and hence $G'-F$ contains an induced copy of $H'$, a contradiction.
Thus $\widehat{G}$ is $H$-free, and $(G,k)$ is a yes-instance of
\textsc{$H$-free Subdivision}.

Conversely, suppose that $\widehat{G}$ is obtained from $G$ by at most $k$ edge
subdivisions and is $H$-free.
Let $F'\subseteq E(G')$ be the set of edges of $G'$ that were subdivided at
least once.
Then $|F'|\leq k$.
We claim that $G'-F'$ is $H'$-free.

Suppose, for a contradiction, that $G'-F'$ contains an induced copy of $H'$.
Let $W'\subseteq V(G')$ be the vertex set of such a copy.
We extend this copy of $H'$ to an induced copy of $H$ in $\widehat{G}$.

For each component $S_i$ of $H-X$, we choose vertices corresponding to $S_i$ as
follows.

\begin{itemize}
    \item If $E_H(S_i,X)=\emptyset$, then among the $k+1$ copies
    \[
        G_{i,1},\ldots,G_{i,k+1},
    \]
    at least one copy is untouched by the subdivisions.
    We choose the vertices of one such copy.

    \item If $|E_H(S_i,X)|=1$, let $x_i\in X$ be the unique neighbor of $S_i$ in
    $H$, and let $u_i\in W'$ be the corresponding vertex in the chosen copy of
    $H'$.
    Consider the copy $G_{i,u_i}$.
    The graph induced by $u_i$, the copy $G_{i,u_i}$, and any subdivision
    vertices introduced on edges of this attached path is a path starting at
    $u_i$.
    Since subdividing edges of a path only makes the path longer, this path
    contains an induced subpath whose first vertex is adjacent to $u_i$ and
    whose remaining vertices realize $S_i$.
    We choose such a subpath.

    \item If $|E_H(S_i,X)|=2$ and $|V(S_i)|\geq 2$, let $x_i,y_i\in X$ be the
    two vertices adjacent to the two endvertices of $S_i$ in $H$, and let
    $u_i,v_i\in W'$ be the corresponding vertices in the chosen copy of $H'$.
    Among the $k+1$ copies
    \[
        G_{i,\{u_i,v_i\},1},\ldots,G_{i,\{u_i,v_i\},k+1},
    \]
    at least one copy is untouched by the subdivisions.
    We choose the vertices of one such copy.

    \item If $|E_H(S_i,X)|=2$ and $|V(S_i)|=1$, let $x_i,y_i\in X$ be the two
    neighbors of the unique vertex of $S_i$, and let $u_i,v_i\in W'$ be the
    corresponding vertices.
    Among the $k+1$ vertices
    \[
        z_{i,\{u_i,v_i\},1},\ldots,z_{i,\{u_i,v_i\},k+1},
    \]
    at least one has neither of its incident edges subdivided.
    We choose one such vertex.
\end{itemize}

The choices above are made independently for the different components of
$H-X$.
The chosen vertices induce exactly the required components of $H-X$, with
precisely the required adjacencies to the vertices of $W'$ and with no unwanted
edges between different chosen components.
Therefore, together with $W'$, they induce a copy of $H$ in $\widehat{G}$.
This contradicts the assumption that $\widehat{G}$ is $H$-free.

Hence $G'-F'$ is $H'$-free.
Since $|F'|\leq k$, the instance $(G',k)$ is a yes-instance of
\textsc{$H'$-free Edge Deletion}.

This proves the correctness of the reduction.
Since the parameter is preserved and the construction is polynomial, the ETH
lower bound follows from Proposition~\ref{pro:deletion}.
\end{proof}

\section{Girth 3}
\label{sec:girth3}

Let $H$ be a graph of girth $3$ such that the subgraph induced by the vertices of degree at least $3$ has at most one edge.
Then every triangle in $H$ is of exactly one of the following two types:
either it contains exactly one vertex of degree at least $3$, or it contains exactly two vertices of degree at least $3$.
We call a triangle of the former type a \emph{hanging triangle}, and a triangle of the latter type a \emph{roof triangle}.
The vertices of degree at least $3$ in a hanging triangle or a roof triangle are called the \emph{attachment vertices} of the triangle.

In this section, we prove the following theorem.

\GirthThreeTheorem*

We now describe the main reductions used to prove \cref{thm:girth3}.
Throughout this section, we may assume, by Proposition~\ref{pro:deletion}, that the vertices of degree at least $3$ in $H$ induce at most one edge; otherwise the claimed hardness already follows from Proposition~\ref{pro:deletion}.
Under this assumption, every triangle of $H$ is either a hanging triangle or a roof triangle.

We first consider the case where $H$ contains a roof triangle.
Let $a$ and $b$ be the two attachment vertices of a roof triangle.
Since the vertices of degree at least $3$ induce at most one edge, all roof triangles of $H$ have the same two attachment vertices $a$ and $b$.
Let $X$ be the set of all vertices that belong to a roof triangle of $H$, and define
$H' := H[X]$.

Then $H'$ is a book graph with common edge $ab$, i.e., every vertex of $X\setminus\{a,b\}$ has degree exactly $2$ in $H$ and is adjacent precisely to $a$ and $b$.
Let $R := H-(X\setminus\{a,b\})$.

Thus $H$ can be viewed as obtained from $H'$ and $R$ by identifying their copies of $a$ and $b$.

The reduction is from \textsc{$H'$-free Edge Deletion}, whose hardness is provided by \cref{pro:deletion}. 
Given an instance $(G',k)$, we construct a graph $G$ as follows.
For every edge $uv\in E(G')$, we attach $k+1$ copies of $R-\{a,b\}$ to the edge $uv$, identifying the role of $a$ with $u$ and the role of $b$ with $v$ through the corresponding neighborhoods.
More precisely, each neighbor of $a$ in $R-\{a,b\}$ is joined to $u$, and each neighbor of $b$ in $R-\{a,b\}$ is joined to $v$.
No other edges are added.

The purpose of attaching $k+1$ copies is to make the copy of $R$ unavoidable.
If a solution subdivides at most $k$ edges, then for every surviving edge $uv$ of $G'$, at least one of the attached copies of $R-\{a,b\}$ remains untouched.
Hence, any induced copy of $H'$ in $G'$ using the edge $uv$ can be completed to an induced copy of $H$ in the subdivided graph.
Conversely, if an edge-deletion solution destroys all copies of $H'$ in $G'$, then subdividing precisely those deleted edges destroys all possible copies of $H$ in $G$.
Indeed, a copy of $H$ must contain a roof triangle, and subdivisions do not create triangles; therefore the roof-triangle part of such a copy must come from the original graph $G'$ and yields an induced copy of $H'$ in $G'$.

\begin{restatable}{lemma}{RoofTriangleLemma}
\label{lem:roof-triangle}
Let $H$ be a graph with at least one roof triangle.
Then \textsc{$H$-free Subdivision} is NP-complete.
Moreover, assuming the ETH, the problem cannot be solved in time
$2^{o(k)}\cdot n^{O(1)}$.
\end{restatable}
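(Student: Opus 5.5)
The plan is to carry out the reduction sketched just before the statement, from \textsc{$H'$-free Edge Deletion}, where $H'=H[X]$ is the book graph on the common edge $ab$. Since $H'$ contains at least one roof triangle, it has at least three edges. Hence Proposition~\ref{pro:deletion} gives NP-completeness and the $2^{o(k)}\cdot n^{O(1)}$ ETH lower bound for \textsc{$H'$-free Edge Deletion}. The reduction keeps the parameter $k$ unchanged and is polynomial, so both conclusions transfer. As in \cref{sec:girth3}, I assume that the vertices of degree at least $3$ in $H$ induce at most one edge, namely $ab$; otherwise \cref{lem:degree-3} already applies. Given $(G',k)$, I build $G$ by attaching $k+1$ copies of $R-\{a,b\}$ to every edge $uv\in E(G')$, once for each orientation, so that either endpoint can play $a$. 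I must check that a pendant copy does not create additional adjacencies among the vertices of $G'$, which is clear because new vertices are only joined to $u$ or $v$.

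\emph{Forward direction.} Let $F\subseteq E(G')$ with $|F|\le k$ be such that $G'-F$ is $H'$-free, and let $\widehat G$ be obtained by subdividing each edge of $F$ once. Suppose $\widehat G$ contains an induced copy of $H$. Subdivisions create no triangles, so the images of the roof triangles are triangles of $G$ that survive in $\widehat G$. The key claim, and the step I expect to be the main obstacle, is that the images of $a$ and $b$ lie in $V(G')$. Once this holds, every page vertex $w$ is a common neighbour of the images of $a$ and $b$, and the only common neighbours of two vertices of $G'$ in $G$ are vertices of $G'$. Hence the whole book $H'$ maps into $G'$ and is induced there, because edges of $F$ are not present between vertices of $V(G')$ in $\widehat G$. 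This yields an induced $H'$ in $G'-F$, a contradiction.

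To prove the key claim I would use degrees and triangle counts. A vertex $x$ in an attached copy has degree in $G$ equal to the degree of its counterpart in $H$, since neighbours of $a$ or $b$ in $R$ get $u$ or $v$ in their place. Subdivisions never increase degrees. Suppose the image of $a$, say, lay in a copy $C$ attached to $uv$. Then the image of $ab$ is an edge between two vertices of degree at least $3$ within $C\cup\{u,v\}$, lying in at least one triangle whose third vertex has degree exactly $2$ with both neighbours on that edge. Pulling this back to $H$ gives two adjacent vertices of degree at least $3$ in $R$, not both in $\{a,b\}$, or a roof triangle through $a$ or $b$ other than those in $X$. In the first case the vertices of degree at least $3$ induce at least two edges, a contradiction. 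In the second, all roof triangles are in $X$, again a contradiction. The subtle point is that $u$ and $v$ are adjacent in $G$ but may not be in $H$; I would handle this by noting that $u$ and $v$ play the roles of $a$ and $b$, which are adjacent. If this pullback argument gets messy, my fallback is to count, for each edge of $\widehat G$, the number of triangles through it with a degree-$2$ apex. Only edges of $G'$ can carry as many such triangles as $ab$ does in $H$.

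\emph{Backward direction.} Let $\widehat G$ be an $H$-free graph obtained with at most $k$ subdivisions, and let $F'$ be the set of subdivided edges of $G'$. Suppose $G'-F'$ contains an induced $H'$ on $W'$, with $u,v$ playing $a,b$. The edge $uv$ is not in $F'$, and at most $k$ of the $k+1$ copies attached to $uv$ in the right orientation are touched. Taking an untouched copy together with $W'$ gives an induced $H$, since the copy's vertices are adjacent only to $u$, $v$ and each other, and contradicts $H$-freeness. Hence $(G',k)$ is a yes-instance, which completes the equivalence.
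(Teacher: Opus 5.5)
Your proposal is correct and is essentially the paper's proof: the same reduction from \textsc{$H'$-free Edge Deletion} on the book graph $H'$, with $k+1$ pendant copies of $R-\{a,b\}$ on each edge of $G'$, and the same untouched-copy argument for the backward direction. The differences are minor. The second orientation is unnecessary, since the paper uses the automorphism of $H'$ swapping $a$ and $b$. Your degree argument placing the images of $a$ and $b$ in $V(G')$ is more careful than the paper's triangle-based sentence. However, your common-neighbour claim needs two easy facts you did not state: no vertex of $R-\{a,b\}$ is adjacent to both $a$ and $b$, and the image of $ab$ is unsubdivided.
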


We next consider the case where $H$ has two hanging triangles whose attachment vertices are adjacent, and $H$ has no roof triangle.
Let $a$ and $b$ be the attachment vertices of two such hanging triangles with $ab\in E(H)$.
Again, since the vertices of degree at least $3$ induce at most one edge, the edge $ab$ is the unique edge among vertices of degree at least $3$.
Let $X$ be the set consisting of $a$, $b$, and all vertices that belong to a hanging triangle whose attachment vertex is either $a$ or $b$.
Define
$ H' := H[X]$.

Then $H'$ consists of the edge $ab$, together with all hanging triangles attached to $a$ or to $b$.
Since there is no roof triangle, every vertex of $X\setminus\{a,b\}$ has degree exactly $2$ in $H$ and is adjacent to exactly one of $a$ and $b$.
Let
$ R := H-(X\setminus\{a,b\})$.

Thus, as before, $H$ is obtained from $H'$ and $R$ by identifying their copies of $a$ and $b$.

The reduction is from \textsc{$H'$-free Edge Deletion}.
Given an instance $(G',k)$, we construct $G$ by attaching gadgets to every edge $uv\in E(G')$.
This time the roles of $a$ and $b$ need not be symmetric in $H'$.
Therefore, for every edge $uv$ and every $i\in[k+1]$, we attach two copies of $R-\{a,b\}$:
one in which $u$ plays the role of $a$ and $v$ plays the role of $b$, and another in which the roles are swapped.
Thus whichever way an induced copy of $H'$ maps the ordered pair $(a,b)$ to the endpoints of an edge of $G'$, there are $k+1$ correctly oriented copies of the remainder $R-\{a,b\}$ available.

The forward direction is analogous to the roof-triangle case.
If $F\subseteq E(G')$ is an edge-deletion solution destroying all induced copies of $H'$, then subdividing the edges of $F$ in $G$ destroys all induced copies of $H$.
The reason is that any induced copy of $H$ contains the unique adjacent pair of high-degree vertices corresponding to $a$ and $b$.
Subdivision vertices cannot play these roles, and the attached copies of $R-\{a,b\}$ cannot create the required hanging-triangle part corresponding to $H'$.
Hence the vertices corresponding to $X$ must lie in $G'$, giving an induced copy of $H'$ in $G'-F$, a contradiction.

For the reverse direction, suppose that subdividing at most $k$ edges in $G$ makes the graph $H$-free.
Let $F$ be the set of subdivided edges that lie in $G'$.
If $G'-F$ still contained an induced copy of $H'$, then the edge corresponding to $ab$ would be unsubdivided.
Among the $k+1$ attached copies of $R-\{a,b\}$ in the orientation matching this copy of $H'$, at least one copy would be untouched.
Together with the copy of $H'$, this untouched copy of $R-\{a,b\}$ would induce a copy of $H$, a contradiction.
Thus $G'-F$ is $H'$-free.

\begin{restatable}{lemma}{hangingTriangleLemma}
\label{lem:2-hanging-triangles}
Let $H$ be a graph containing two hanging triangles whose attachment vertices are adjacent.
Then \textsc{$H$-free Subdivision} is NP-complete.
Moreover, assuming the ETH, the problem cannot be solved in time
$2^{o(k)}\cdot n^{O(1)}$.
\end{restatable}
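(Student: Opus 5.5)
Following the sketch given just before the statement, I reduce from \textsc{$H'$-free Edge Deletion}, where $H'=H[X]$ is the edge $ab$ together with all hanging triangles attached to $a$ or $b$. By Proposition~\ref{pro:deletion} I may assume that the vertices of degree at least $3$ in $H$ induce at most one edge; then $ab$ is that unique edge and $H$ has no roof triangle. $H'$ has at least seven edges (the edge $ab$ plus two triangles), so Proposition~\ref{pro:deletion} gives NP-completeness and the $2^{o(k)}\cdot n^{O(1)}$ lower bound under the ETH.

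\textbf{Construction.} Given $(G',k)$, I build $G$ as follows. For every edge $uv\in E(G')$, every $i\in[k+1]$, and each of the two orientations, I add a copy of $R-\{a,b\}$, where $R=H-(X\setminus\{a,b\})$. In this copy, neighbours of $a$ are joined to the endpoint playing $a$, neighbours of $b$ to the endpoint playing $b$, and nothing else is added. Since $k$ is unchanged and $|V(G)|$ is polynomial, the ETH bound transfers once correctness is shown.

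\textbf{Backward direction.} Suppose at most $k$ subdivisions make $G$ $H$-free, and let $F$ be the set of subdivided edges of $G'$. If $G'-F$ contained an induced $H'$ on $W$, with $a,b\mapsto u,v$, then $uv$ is unsubdivided and some gadget in the matching orientation is untouched. I then need $W$ together with this gadget to induce $H$. The gadget's vertices are adjacent only to $u$, $v$ and each other, so the only risk is missing or extra edges between the gadget and $W\setminus\{u,v\}$, and there are none.

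\textbf{Forward direction (the main obstacle).} Given $F$ with $G'-F$ being $H'$-free, subdivide each edge of $F$ once; I must show the result $\widehat G$ is $H$-free. Take a copy of $H$ in $\widehat G$.
\begin{itemize}
\item Subdivision vertices have degree $2$ with nonadjacent neighbours, so they lie in no triangle.
\item A triangle in $\widehat G$ lies either inside $G'-F$, or uses gadget vertices. The second type consists of a triangle inside $R-\{a,b\}$, or a triangle on $u$ or $v$ through gadget vertices, or one containing the edge $uv$ with a gadget vertex adjacent to both. That last type would mean $R$ has a vertex adjacent to both $a$ and $b$, i.e.\ a roof triangle, which is excluded.
\item The images of $a$ and $b$ must be the unique adjacent pair of degree $\ge 3$ vertices in the copy, and each carries its hanging triangles.
\end{itemize}
The worry is that the images of $a,b$ land on gadget vertices, or that a hanging triangle's image uses gadget vertices, yielding copies not coming from $G'$. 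I would argue by counting: the copy of $H$ contains exactly as many triangles as $H$ does, all of them edge-disjoint in a fixed pattern. I would try to show that the images of $a,b$ must be an edge of $G'-F$ with the required hanging triangles in $G'$. The first idea is to compare, for each candidate image of $ab$, the structure of its closed neighbourhood against $H$, using that gadget triangles sit in copies of $R-\{a,b\}$ whose triangle-incidences are strictly fewer than those of $H$.

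If this counting argument fails, the fallback is to modify the construction so that gadget vertices cannot play these roles. One option is to attach the gadgets only through subdivided-path-free structures. Another is to replace $H'$ by a maximal set of hanging triangles in a canonical (e.g.\ lexicographically extremal) configuration. In either case the goal is that any copy of $H$ whose $ab$-edge is not in $G'-F$ would have to contain a strictly larger triangle structure than $H$, which is impossible.
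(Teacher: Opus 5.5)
\textbf{Gap in the forward direction.} Your construction and your backward direction are the same as the paper's. The forward direction, which you yourself flag as the main obstacle, is not proved: the triangle-counting idea is never carried out, and no change of construction is needed. What is missing is a short local argument.

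Let $J$ be an induced copy of $H$ in $\widehat G$, and let $\alpha,\beta$ be the images of $a,b$.
\begin{enumerate}
\item Every gadget vertex has in $\widehat G$ the same degree as its preimage in $H$. This holds because vertices of $X\setminus\{a,b\}$ have all their neighbours in $X$, and only edges of $G'$ are subdivided.
\item Suppose $\alpha$ were a gadget vertex. Then its preimage $x\in V(R)\setminus\{a,b\}$ has degree at least $3$. The vertex $\beta$ is adjacent to $\alpha$ and has degree at least $3$, so it is not a subdivision vertex. Hence $\beta$ lies in the same copy or is an endpoint of the supporting edge. Then $x$ is adjacent in $H$ either to another vertex of degree at least $3$ of $R-\{a,b\}$, or to $a$ or $b$. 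Either way this gives a second edge among the high-degree vertices of $H$. Thus $\alpha,\beta\in V(G')$.
\item Let $\alpha p'q'$ be the image of a hanging triangle, and suppose $p'$ is a gadget vertex on an edge $\alpha w$. Since $q'$ is adjacent to $\alpha$ and $p'$, one of two things happens. If $q'=w$, the preimage of $p'$ is adjacent to both $a$ and $b$, giving a roof triangle or a third high-degree vertex. Otherwise $q'$ lies in the same copy, and the preimages of $p',q'$ form a triangle with $a$ (or $b$) in $H$. These preimages cannot have degree at least $3$, so this is a hanging triangle at $a$ or $b$, and its vertices should have been put in $X$, not in $R$.
\item Subdivision vertices lie in no triangle. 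So all images of $X$ lie in $V(G')$ and induce $H'$ in $\widehat G[V(G')]=G'-F$, a contradiction.
\end{enumerate}
This is essentially the paper's argument.

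\textbf{The roof-triangle exclusion is unjustified.} Your assertion that $H$ has no roof triangle does not follow from the high-degree vertices inducing at most one edge. Take the edge $ab$, a common neighbour $p$ of $a$ and $b$, and one hanging triangle at each of $a$ and $b$. Only $a$ and $b$ have degree at least $3$, yet $abp$ is a roof triangle.

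You use this exclusion in your third triangle type, and it is load-bearing. For this $H$ we get $R=H[\{a,b,p\}]$. Let $G'$ be the path $wst$ with a triangle $tt_1t_2$ attached at $t$. Then $G'$ has five vertices, so it is $H'$-free and $(G',0)$ is a yes-instance. But your $G$ contains an induced $H$ on $s,t,w,t_1,t_2$ together with two gadget vertices: one on $st$ (playing $p$) and one on $ws$ (completing the hanging triangle at $s$). So $(G,0)$ is a no-instance.

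You must first dispose of roof triangles via Lemma~\ref{lem:roof-triangle}, as the paper does.
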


We now obtain \cref{thm:girth3} immediately from \cref{lem:roof-triangle,lem:2-hanging-triangles}.
If $H$ has a roof triangle, then the result follows from \cref{lem:roof-triangle}.
Otherwise, if $H$ has two hanging triangles whose attachment vertices are adjacent, then the result follows from \cref{lem:2-hanging-triangles}.
The ETH lower bound follows because both reductions are parameter-preserving, starting from the corresponding \textsc{$H'$-free Edge Deletion} problem.

\subsection{Formal proofs}
In this subsection, we provide formal proofs for the statements.

\RoofTriangleLemma*

\begin{proof}

    If the vertices with degree at least 3 induces a graph with at least two edges, then we are done by Proposition~\ref{pro:deletion}.
    Therefore, assume that the vertices with degree at least 3 induces a graph with at most one edge. 

    Let $A=\{a,b\}$ be the two attachment vertices of the roof triangles of $H$.
    Let $X$ be the set of all vertices that belong to roof triangles of $H$, and let
    \[
        H':=H[X].
    \]
    Since every roof triangle contains exactly the two vertices $a$ and $b$ of degree at least $3$, every vertex of $X\setminus A$ has degree exactly $2$ in $H$ and is adjacent precisely to $a$ and $b$.
    Thus $H'$ is a book graph whose common edge is $ab$.
    In particular, $H'$ has an automorphism that swaps $a$ and $b$.

    Let
    \[
        R:=H-(X\setminus A).
    \]
    Then $V(H')\cap V(R)=A$, and since every vertex of $X\setminus A$ has degree exactly $2$ and is adjacent only to $a$ and $b$, there are no edges between $X\setminus A$ and $V(R)\setminus A$.
    Hence $H$ is obtained from $H'$ and $R$ by identifying their copies of $a$ and $b$.

    By \cref{pro:deletion}, \textsc{$H'$-free Edge Deletion} is NP-complete.
    We reduce from this problem.

    Let $(G',k)$ be an instance of \textsc{$H'$-free Edge Deletion}.
    We construct a graph $G$ as follows.
    For every edge $uv\in E(G')$ and for each $i\in [k+1]$, we add a fresh copy $R_{uv}^i$ of $R-A$.
    For each neighbor of $a$ in $R-A$, we join its copy in $R_{uv}^i$ to $u$, and for each neighbor of $b$ in $R-A$, we join its copy in $R_{uv}^i$ to $v$.
    We also add all edges inside $R_{uv}^i$ exactly as in $R-A$.
    No other edges are added.

    Since $H$ is fixed, the graph $G$ can be constructed in polynomial time.

    We claim that $(G',k)$ is a yes-instance of \textsc{$H'$-free Edge Deletion} if and only if $(G,k)$ is a yes-instance of \textsc{$H$-free Subdivision}.

    First assume that $(G',k)$ is a yes-instance.
    Let $F\subseteq E(G')$ with $|F|\le k$ be such that $G'-F$ is $H'$-free.
    Subdivide in $G$ every edge of $F$, and let the resulting graph be $G_F$.
    We show that $G_F$ is $H$-free.

    Suppose for contradiction that $G_F$ contains an induced copy $J$ of $H$.
    Since $H$ contains a roof triangle, $J$ also contains a roof triangle.
    Subdividing edges cannot create triangles, because every new subdivision vertex has degree $2$ and lies on no triangle.
    Hence every triangle of $J$ consists entirely of original vertices of $G$.
    In particular, all vertices of $J$ corresponding to the vertices of $X\setminus A$ belong to $V(G')$, since these are exactly the vertices that lie in roof triangles and no copy of $R-A$ contains any such vertex.
    Also, the vertices corresponding to $a$ and $b$ cannot be subdivision vertices, because they have degree at least $3$ in $H$.
    Therefore the vertices of $J$ corresponding to $X$ induce a copy of $H'$ in $G'-F$, contradicting the choice of $F$.
    Hence $G_F$ is $H$-free, and so $(G,k)$ is a yes-instance of \textsc{$H$-free Subdivision}.

    Conversely, assume that $(G,k)$ is a yes-instance.
    Let $S\subseteq E(G)$ with $|S|\le k$ be such that subdividing all edges of $S$ yields an $H$-free graph $G_S$.
    Let
    \[
        F:=S\cap E(G').
    \]
    We show that $G'-F$ is $H'$-free.

    Suppose for contradiction that $G'-F$ contains an induced copy $C$ of $H'$.
    Let $\phi:V(H')\to V(C)$ be an isomorphism.
    Since $ab\in E(H')$, the vertices $\phi(a)$ and $\phi(b)$ are adjacent in $G'-F$.
    By composing $\phi$ with the automorphism of $H'$ that swaps $a$ and $b$ if necessary, we may assume that the copy $C$ is aligned with the attachment of the gadgets on the edge $\phi(a)\phi(b)$.

    By construction, for the edge $\phi(a)\phi(b)$ there are $k+1$ copies
    \[
        R_{\phi(a)\phi(b)}^1,\dots,R_{\phi(a)\phi(b)}^{k+1}
    \]
    of $R-A$ attached to it.
    Since $|S|\le k$, at most $k$ of these copies contain a subdivided edge.
    Hence at least one of them, say $R_{\phi(a)\phi(b)}^j$, is untouched by the subdivisions in $S$.
    Moreover, since $\phi(a)\phi(b)\in E(G'-F)$, the edge $\phi(a)\phi(b)$ is not subdivided.

    It follows that $C\cup R_{\phi(a)\phi(b)}^j$ induces a copy of $H$ in $G_S$, a contradiction.
    Therefore $G'-F$ is $H'$-free.
    Since $|F|\le |S|\le k$, the instance $(G',k)$ is a yes-instance of \textsc{$H'$-free Edge Deletion}.

    This proves the correctness of the reduction.
    Now the statements follow from \cref{thm:easy}.
\end{proof}

\hangingTriangleLemma*

\begin{proof}

    If the vertices with degree at least 3 induces a graph with at least two edges, then we are done by Proposition~\ref{pro:deletion}.
    Therefore, assume that the vertices with degree at least 3 induces a graph with at most one edge.
    If $H$ has a roof triangle, then we are done by Lemma~\ref{lem:roof-triangle}.
    Therefore, assume that $H$ has no roof triangle. 
    Therefore, every triangle in $H$ is a hanging triangle. 
    Let $a$ and $b$ be the attachment vertices of two hanging triangles of $H$ such that $ab\in E(H)$.
    Since the vertices of degree at least $3$ induce at most one edge, $a$ and $b$ are the unique adjacent vertices of degree at least $3$ in $H$.

    Let $X$ be the set consisting of $a$, $b$, and all vertices that belong to a hanging triangle whose attachment vertex is either $a$ or $b$.
    Let
    \[
        H' := H[X].
    \]
    Since $H$ has no roof triangle, every vertex of $X\setminus \{a,b\}$ has degree exactly $2$ in $H$ and is adjacent to exactly one of $a$ and $b$.
    Hence $H'$ consists of the edge $ab$ together with all hanging triangles attached to $a$ or $b$.

    Let
    \[
        R := H-(X\setminus \{a,b\}).
    \]
    Then $V(H')\cap V(R)=\{a,b\}$.
    Moreover, there are no edges between $X\setminus \{a,b\}$ and $V(R)\setminus \{a,b\}$, because every vertex of $X\setminus \{a,b\}$ has degree $2$ in $H$ and all its neighbors already lie in $H'$.
    Thus $H$ is obtained from $H'$ and $R$ by identifying their copies of $a$ and $b$.

    By the corresponding edge-deletion hardness result for $H'$, \textsc{$H'$-free Edge Deletion} is NP-complete.
    We reduce from this problem.

    Let $(G',k)$ be an instance of \textsc{$H'$-free Edge Deletion}.
    We construct a graph $G$ as follows.
    For every edge $uv\in E(G')$ and for each $i\in [k+1]$, we add two fresh copies
    \[
        R_{uv}^{i,1}\quad\text{and}\quad R_{uv}^{i,2}
    \]
    of $R-\{a,b\}$.
    In the copy $R_{uv}^{i,1}$, for every vertex of $R-\{a,b\}$ that is adjacent to $a$ in $R$, we join its copy to $u$, and for every vertex of $R-\{a,b\}$ that is adjacent to $b$ in $R$, we join its copy to $v$.
    In the copy $R_{uv}^{i,2}$, we do the same with the roles of $u$ and $v$ swapped.
    We also add all edges inside each copy exactly as in $R-\{a,b\}$.
    No other edges are added.

    Since $H$ is fixed, the graph $G$ can be constructed in polynomial time.

    We claim that $(G',k)$ is a yes-instance of \textsc{$H'$-free Edge Deletion} if and only if $(G,k)$ is a yes-instance of \textsc{$H$-free Subdivision}.

    First suppose that $(G',k)$ is a yes-instance.
    Let $F\subseteq E(G')$ with $|F|\le k$ be such that $G'-F$ is $H'$-free.
    Subdivide in $G$ every edge of $F$, and let the resulting graph be $G_F$.
    We show that $G_F$ is $H$-free.

    Suppose for contradiction that $G_F$ contains an induced copy $J$ of $H$.
    Since $H$ contains two hanging triangles whose attachment vertices are adjacent, the copy $J$ contains two hanging triangles whose attachment vertices are adjacent as well.
    Therefore the vertices of $J$ corresponding to $a$ and $b$ are the unique adjacent vertices of degree at least $3$ in $J$.

    Observe that no subdivision vertex can play the role of $a$ or $b$, since subdivision vertices have degree $2$.
    Also, no vertex of any copy of $R-\{a,b\}$ can play the role of $a$ or $b$.
    Indeed, if a vertex $x\in V(R)\setminus \{a,b\}$ has degree at least $3$ in $H$, then neither $xa$ nor $xb$ is an edge of $H$, for otherwise the vertices of degree at least $3$ in $H$ would induce an edge different from $ab$.
    Hence no such vertex is adjacent in $G$ to an endpoint of the supporting edge in $G'$.
    Since $R-\{a,b\}$ itself contains no edge joining two vertices of degree at least $3$, the unique adjacent high-degree pair in $J$ must come from an edge of $G'$.
    Thus the vertices of $J$ corresponding to $a$ and $b$ belong to $V(G')$.

    Now consider any vertex of $J$ corresponding to a vertex of $X\setminus \{a,b\}$.
    Such a vertex lies in a hanging triangle attached to either $a$ or $b$.
    It cannot lie in a copy of $R-\{a,b\}$, because every such vertex was deleted when forming $R$, and there are no triangles through $u$ or $v$ created by the attachment: if two vertices of a copy of $R-\{a,b\}$ were both adjacent to, say, $u$ and adjacent to each other, then together with $a$ they would form a hanging triangle in $H$ attached to $a$, contrary to the definition of $R$.
    Hence all vertices of $J$ corresponding to $X$ lie in $V(G')$.

    Therefore the vertices of $J$ corresponding to $X$ induce a copy of $H'$ in $G'-F$, contradicting the choice of $F$.
    Hence $G_F$ is $H$-free, and so $(G,k)$ is a yes-instance of \textsc{$H$-free Subdivision}.

    Conversely, suppose that $(G,k)$ is a yes-instance.
    Let $S\subseteq E(G)$ with $|S|\le k$ be such that subdividing all edges of $S$ yields an $H$-free graph $G_S$.
    Let
    \[
        F := S\cap E(G').
    \]
    We show that $G'-F$ is $H'$-free.

    Suppose for contradiction that $G'-F$ contains an induced copy $C$ of $H'$.
    Let $\phi:V(H')\to V(C)$ be an isomorphism.
    Then $\phi(a)$ and $\phi(b)$ are adjacent in $G'-F$.
    Consider the edge $\phi(a)\phi(b)$.
    By construction, for this edge we attached $k+1$ copies of $R-\{a,b\}$ in each of the two possible orientations.
    One of these two orientations matches the roles of $a$ and $b$ under $\phi$.
    Since $|S|\le k$, at most $k$ of those $k+1$ copies contain a subdivided edge.
    Hence at least one copy, say $Q$, in the correct orientation is untouched by the subdivisions in $S$.
    Moreover, the edge $\phi(a)\phi(b)$ itself is not subdivided, since it belongs to $G'-F$.

    It follows that $C\cup Q$ induces a copy of $H$ in $G_S$, a contradiction.
    Therefore $G'-F$ is $H'$-free.
    Since $|F|\le |S|\le k$, the instance $(G',k)$ is a yes-instance of \textsc{$H'$-free Edge Deletion}.

    This proves the correctness of the reduction. Now the statements follow from \cref{thm:easy}.
\end{proof}

\section{Girth at least 4}
\label{sec:girth4}

In this section we prove the hardness of \textsc{$H$-free Subdivision} for every non-forest graph $H$ of girth at least $4$.
The proof is divided into two parts.
The first part gives a general reduction that works unless $H$ has one of a few special structures.
The second part handles precisely those special structures.

\GirthFourTheorem*

We first describe the main reduction used for the generic case.
By earlier results, we may assume that $H$ does not satisfy the assumptions of \cref{thm:easy}.
Equivalently, if $X$ denotes the set of vertices of degree at least $3$ in $H$, then the graph $H[X]$ contains at most one edge.
Since $H$ is not a forest, it contains a cycle.
Let $C$ be a shortest cycle of $H$.
As $\girth(H)\geq 4$, this cycle has length at least $4$.
Moreover, some vertex $c$ of $C$ has degree exactly $2$ in $H$; otherwise all edges of $C$ would lie inside $H[X]$, contradicting the fact that $H[X]$ has at most one edge.
Let $a$ and $b$ be the two neighbors of $c$ in $H$.

The reduction is from \textsc{Vertex Cover}.
Given an instance $(G,k)$, we construct a graph $G'$ as follows.
We start with an independent set corresponding to $V(G)$ and add one new vertex $z$, adjacent to every vertex of $V(G)$.
For every edge $uv\in E(G)$, we add a private copy of $H$ in which $z$ plays the role of $c$ and $u,v$ play the roles of $a,b$.
Thus every edge of the input graph gives rise to one induced copy of $H$.
A vertex cover $Y$ of $G$ is then translated into a subdivision solution by subdividing the edges $zv$ for all $v\in Y$.

The central point is that subdividing $zv$ destroys all copies of $H$ corresponding to edges incident with $v$.
Conversely, any solution of size at most $k$ must destroy, for every edge $uv$ of $G$, the corresponding induced copy of $H$.
From such a solution, one can therefore extract a vertex cover of size at most $k$.
The main difficulty is to prove that, after subdividing the edges corresponding to a vertex cover, no new induced copy of $H$ can be assembled from several different edge gadgets.
This is achieved by making different edge gadgets highly adjacent to each other, so that any induced copy using vertices from too many gadgets would create a triangle, which is impossible because $H$ has girth at least $4$.

The only obstruction to this argument is that an induced copy of $H$ might be assembled from one or two gadgets in a very controlled way.
The exceptional configurations that allow this are captured by three special graph classes, denoted by
$\mathcal H_1,\mathcal H_2$, and $\mathcal H_3$.
Roughly speaking, these classes describe the possible shapes of a triangle-free graph that can be embedded across one or two highly connected gadget blocks.
If $H$ is not an induced subgraph of any graph in
$\mathcal H_1\cup\mathcal H_2\cup\mathcal H_3$, then the above reduction works directly.

The definitions of the graph classes $\cH_1$, $\cH_2$, and $\cH_3$ are given below (see \cref{fig:graphs-def}) -- this will be the graphs excluded in \cref{lem:girth-4-a}.

We say that $H$ is in $\cH_1$ if the vertex set of $H$ can be partitioned into sets $A,B,Y,S,S'$ and $Y=\{a,b,c,x\}$, so that:
\begin{enumerate}[(1)]
    \item The set $A$ is fully adjacent to $B\cup \{a\}$, and $B$ is fully adjacent to $\{b\}$.
    \item $x$ is adjacent to all vertices of $S\cup S'$
    \item every vertex of $S\cup S'$ has degree at most two
\item The vertex $c$ is adjacent to an arbitrary subset of $B$ (possibly empty).
    \item $x$ is either adjacent to $a$ (resp. $b$, $c$) or there is a vertex $s\in S$ which is adjacent to both $x$ and $a$ (resp. $b$, $c$) -- in other words, there is an edge $ax$ (resp. $bx$, $cx$) in $H$ or this edge is subdivided.
    \item There are no other edges in $H$.
\end{enumerate}

We say that $H$ is in $\cH_2$ if the vertex set of $H$ can be partitioned into sets $A,B,Y,S,S'$ and $Y=\{a,b,x\}$, so that:
\begin{enumerate}[(1)]
    \item The set $A$ is fully adjacent to $B$.
    \item $x$ is adjacent to all vertices of $S\cup S'$
    \item every vertex of $S\cup S'$ has degree at most two
\item Each of $a$ and $b$ is adjacent to an arbitrary subset of $A$ (possibly empty).
    \item For every vertex $y\in B\cup\{a,b\}$ either $y$ is adjacent to $x$ or there is a vertex $s\in S$ which is adjacent to both $x$ and $y$.
    \item There are no other edges in $H$.
\end{enumerate}

We say that $H$ is in $\cH_3$ if the vertex set of $H$ can be partitioned into sets $A,B,Y,S,S'$ and $Y=\{a,b,c,d,x\}$, so that:
\begin{enumerate}[(1)]
    \item The set $A\cup\{a,c\}$ is fully adjacent to $B$ and $B\cup \{b,d\}$ is complete to $A$.
    \item $x$ is adjacent to all vertices of $S\cup S'$
    \item every vertex of $S\cup S'$ has degree at most two
    \item For every vertex $y\in \{a,b,c,d\}$, either $y$ is adjacent to $x$ or there is a vertex $s\in S$ which is adjacent to both $x$ and $y$.
    \item There are no other edges in $H$.
\end{enumerate}

\begin{center}
\begin{figure}[t]
    \centering
   
\begin{tikzpicture}[every node/.style={draw,circle,fill=white,inner sep=0pt,minimum size=8pt},every loop/.style={},scale=0.8]

\node[label=left:\footnotesize{$a$}] (a1) at (0,0) {};
\node[label=right:\footnotesize{$b$}] (b1) at (2,0) {};
\node[label=right:\footnotesize{$c$}] (c1) at (3.5,0) {};
\node[label=left:\footnotesize{$x$}] (x1) at (1,-1) {};
\draw (1.5,1)--(2.5,1)--(2.5,2.5)--(1.5,2.5)--(1.5,1);
\draw (-0.5,1)--(0.5,1)--(0.5,2.5)--(-0.5,2.5)--(-0.5,1);
\draw (0,-1.5)--(2,-1.5)--(2,-2.5)--(0,-2.5)--(0,-1.5);
\node[draw=none,fill=none] (A1) at (0,1.7) {$A$};
\node[draw=none,fill=none] (B1) at (2,1.7) {$B$};
\node[draw=none,fill=none] (S1) at (1,-2) {$S'$};

\draw[very thick] (a1)--(0,1);
\draw[very thick] (0.5,1.7)--(1.5,1.7);
\draw[very thick] (b1)--(2,1);
\draw[very thick] (x1)--(1,-1.5);
\draw[color=orange] (c1)--(2.5,1.7);
\draw[color=blue] (x1)--(a1);
\draw[color=blue] (x1)--(b1);
\draw[color=blue] (x1)--(c1);

\draw (6,2.5)--(8,2.5)--(8,1.5)--(6,1.5)--(6,2.5);
\draw (6,1)--(8,1)--(8,0)--(6,0)--(6,1);
\node[label=left:\footnotesize{$a$}] (a2) at (5,0.5) {};
\node[label=right:\footnotesize{$b$}] (b2) at (9,0.5) {};
\node[label=left:\footnotesize{$x$}] (x2) at (7,-1) {};
\draw (6,-1.5)--(8,-1.5)--(8,-2.5)--(6,-2.5)--(6,-1.5);
\draw[color=orange] (a2)--(6,2);
\draw[color=orange] (b2)--(8,2);
\draw[color=blue] (x2)--(a2);
\draw[color=blue] (x2)--(b2);
\draw[very thick,color=blue] (x2)--(7,0);
\draw[very thick] (x2)--(7,-1.5);
\draw[very thick] (7,1)--(7,1.5);
\node[draw=none,fill=none] (A2) at (7,2) {$A$};
\node[draw=none,fill=none] (B2) at (7,0.5) {$B$};
\node[draw=none,fill=none] (S2) at (7,-2) {$S'$};

\draw (11,2.5)--(12,2.5)--(12,1)--(11,1)--(11,2.5);
\draw (13,2.5)--(14,2.5)--(14,1)--(13,1)--(13,2.5);
\node[label=left:\footnotesize{$a$}] (a3) at (11.5,0) {};
\node[label=left:\footnotesize{$c$}] (c3) at (12.15,0) {};
\node[label=right:\footnotesize{$b$}] (b3) at (12.85,0) {};
\node[label=right:\footnotesize{$d$}] (d3) at (13.5,0) {};
\node[label=left:\footnotesize{$x$}] (x3) at (12.5,-1) {};
\draw (11.5,-1.5)--(13.5,-1.5)--(13.5,-2.5)--(11.5,-2.5)--(11.5,-1.5);
\node[draw=none,fill=none] (A3) at (11.5,1.7) {$A$};
\node[draw=none,fill=none] (B3) at (13.5,1.7) {$B$};
\node[draw=none,fill=none] (S3) at (12.5,-2) {$S'$};
\draw[very thick] (x3)--(12.5,-1.5);
\draw[very thick] (12,1.7)--(13,1.7);
\draw[very thick] (a3)--(13,1.7)--(c3);
\draw[very thick] (b3)--(12,1.7)--(d3);
\draw[color=blue] (x3)--(a3);
\draw[color=blue] (x3)--(b3);
\draw[color=blue] (x3)--(c3);
\draw[color=blue] (x3)--(d3);

\end{tikzpicture}
     \caption{The structure of the graphs in the classes, respectively, $\cH_1$, $\cH_2$, and $\cH_3$. Each block denotes an independent set of vertices. A black thick line denotes that there are all possible edges between the sets of vertices, an orange line denotes that there can be some edges between the sets, and a blue line denotes an edge that might (but does not have to) be subdivided.}
    \label{fig:graphs-def}
\end{figure}
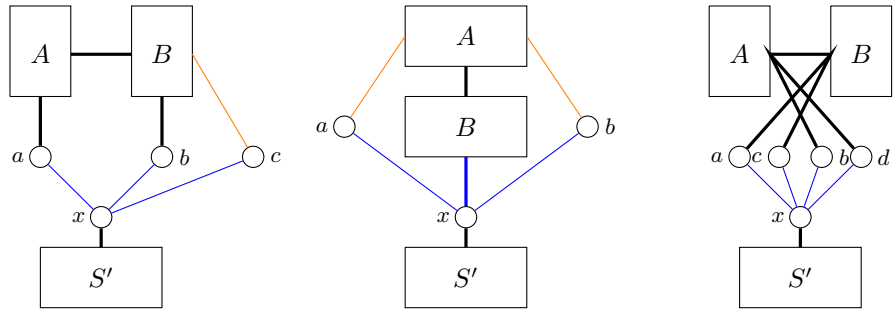
\end{center}

\begin{restatable}{lemma}{GirthFourLemmaa}
\label{lem:girth-4-a}
Let $H$ be a graph that is neither a forest nor an induced subgraph of a graph in
$\mathcal H_1\cup \mathcal H_2\cup \mathcal H_3$.
Assume moreover that $\girth(H)\geq 4$.
Then \textsc{$H$-free Subdivision} is NP-hard.
Moreover, assuming the ETH, the problem cannot be solved in time
$2^{o(k)}\cdot n^{O(1)}$.
\end{restatable}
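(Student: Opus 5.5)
We reduce from \textsc{Vertex Cover} on graphs of large girth, using \cref{pro:vc}. By \cref{thm:easy} we may assume that the set $X$ of vertices of degree at least $3$ in $H$ induces at most one edge. Let $C$ be a shortest cycle of $H$. Fix a vertex $c$ of $C$ with $\deg_H(c)=2$, and let $a,b$ be its neighbours; since $\girth(H)\ge 4$, $ab\notin E(H)$.

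\textbf{Construction.} Given $(G,k)$ with $\girth(G)$ larger than $|V(H)|+2$, we build $G'$ as follows. Take a copy of $V(G)$ and a new vertex $z$ adjacent to all of $V(G)$. For every edge $e=uv\in E(G)$, add a private copy $H_e$ of $H-\{a,b,c\}$, attached so that $z,u,v$ play the roles of $c,a,b$. To control gluing of gadgets, we make the gadgets mutually ``highly adjacent'': for distinct edges $e,f$, we join the copies of the same vertex class across $H_e$ and $H_f$ in a complete-multipartite fashion. The aim is that two gadget vertices coming from different gadgets are adjacent unless this would shortcut an intended structure. We must choose this so that each $H_e\cup\{u,v,z\}$ is still an \emph{induced} copy of $H$. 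This is checked using that the copies in $H_e$ are pairwise nonadjacent to each other only as dictated by $H$. The parameter stays $k$, and $|V(G')|$ is polynomial. Since \textsc{Vertex Cover} on bounded-degree/large-girth graphs has no $2^{o(n+m)}$ algorithm and $k\le n$, the ETH bound follows.

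\textbf{Correctness.} Forward direction: given a vertex cover $Y$, subdivide $zv$ for every $v\in Y$. Each intended copy $H_{uv}\cup\{u,v,z\}$ loses an edge incident to $z$, and this edge cannot be replaced, since $\deg_H(c)=2$ and the subdivision vertex is not adjacent to the other neighbour. Backward direction: for each edge $uv$, a solution must subdivide some edge inside the closed gadget of $uv$. Assign each subdivided edge to an endpoint in $V(G)$: either $u$ or $v$ for an edge $zu$ or $zv$, and an arbitrary endpoint of $e$ for an edge inside $H_e$ or between $H_e$ and $\{u,v\}$. This produces a vertex cover of size at most $k$. Edges between different gadgets must also be handled. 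Either we argue that subdividing them never helps, or we add $k+1$ parallel copies of each gadget so that a gadget can only be killed via edges at $z$, $u$ or $v$. Using $k+1$ copies, as in \cref{lem:degree-3}, is the cleaner choice, so we adopt it.

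\textbf{Main obstacle.} The crux is the forward direction: after subdividing $\{zv: v\in Y\}$, we must show that no unintended induced copy $J$ of $H$ exists. Suppose one does. The cross-gadget adjacencies form large complete bipartite pieces, and $J$ is triangle-free, so $J$ can meet at most two gadgets, and only in independent-set-like patterns. We then analyse how $J$ uses $z$, the subdivision vertices (which have degree $2$ and attach to $z$), the vertices of $V(G)$ (an independent set, with the large girth of $G$ preventing cycles formed through several edges), and the gadget classes. Each possible pattern forces $H$ to have the shape of one of the templates. Gadget vertices spread over two complete-bipartite blocks give the sets $A,B$. The vertex $z$ plays the role of $x$, and its degree-$\le 2$ neighbours, either subdivision vertices or vertices of $V(G)$ whose other edges are absent, give $S\cup S'$. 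The at most four vertices of $V(G)$ that attach to the blocks give $a,b,c,d$. These are exactly the configurations defining $\cH_1$, $\cH_2$ and $\cH_3$, and they are excluded by hypothesis. I expect this exhaustive case analysis, together with tuning the cross-gadget adjacency so that the intended copies remain induced, to be the hardest part. I would first fix the adjacency rule by the bipartition of $H-\{a,b,c\}$ induced from a BFS layering from $c$, and then verify the cases.
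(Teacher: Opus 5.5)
Your skeleton matches the paper's: a reduction from \textsc{Vertex Cover}, a universal vertex $z$ playing a degree-$2$ vertex $c$ of a shortest cycle, subdividing $zv$ for every $v$ in the cover, and triangle-freeness confining a spurious copy $J$ to at most two gadgets. However, the construction you sketch is not one for which $\cH_1,\cH_2,\cH_3$ are the right obstructions, and the decisive verification is missing.

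First, the step ``$J$ meets at most two gadgets'' needs \emph{all} edges between $X_e$ and $X_f$ for $e\neq f$. Under your complete-multipartite rule, vertices of one class taken from many gadgets are pairwise non-adjacent, so no triangle is forced. (A BFS layering from $c$ is not even a proper bipartition when $H$ has odd cycles.) Your reason for avoiding the full join is unfounded: no edge between $X_e$ and $X_f$ has both ends in $X_e\cup\{u,v,z\}$, so the intended copies stay induced automatically.

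Second, you omit the other half of the paper's construction: every $w\in V(G)$ is joined to all of $X_e$ for each edge $e\not\ni w$. These edges make the case analysis land in the templates. For two gadgets $X_{uv},X_{vw}$, they exclude every other vertex of $V(G)$ from $J$. They also make $u$ complete to $J\cap X_{vw}$, and hence anticomplete to $J\cap X_{uv}$. This is why in $\cH_1$ only $c$ has an arbitrary neighbourhood, and why in $\cH_3$ each of $a,b,c,d$ is complete to a side. For one gadget, they make each further vertex of $V(G)$ in $J$ complete to $J\cap X_{uv}$, which is exactly the set $B$ of $\cH_2$. In your graph, $u$ may see an arbitrary part of its own gadget, and other vertices of $V(G)$ can hang off $z$, even as pendant paths $z\,s_w\,w$. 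So the shapes your analysis would produce are not those defining $\cH_1,\cH_2,\cH_3$, and nothing guarantees the lemma's hypothesis excludes them.

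Third, the one configuration no template covers is not addressed: $J\subseteq\{u,v,z\}\cup X_{uv}\cup S$ for a single edge $uv$. Your forward argument only shows that the specific set $X_{uv}\cup\{u,v,z\}$ no longer induces $H$. It does not show that this region contains no other induced copy. The paper settles this by counting shortest cycles. Since $c$ lies on a $\girth(H)$-cycle, subdividing $zu$ and/or $zv$ lengthens every $\girth(H)$-cycle through $z$ and creates no new one. Hence the region has strictly fewer $\girth(H)$-cycles than $H$. You choose $c$ on a shortest cycle but never use this property.

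Finally, the $k+1$ parallel copies are unnecessary. The intended copy on $X_e\cup\{u,v,z\}$ is an induced subgraph on a fixed vertex set, so it is destroyed only by subdividing one of its own edges. Subdivided cross-gadget edges are therefore irrelevant, and your first charging argument already yields a vertex cover of size at most $k$. Worse, parallel copies introduce a new forward case: two gadgets of the same edge $uv$ both meeting $J$. This case is neither ``sharing one endpoint'' nor ``disjoint'', and it would have to be handled separately.
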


It remains to deal with graphs $H$ that are induced subgraphs of the exceptional classes.
For these graphs the preceding reduction may fail, because an induced copy of $H$ can potentially be distributed across the special dense parts of the construction.
We therefore isolate the possible remaining structures.

Two types of exceptional graphs appear.
The first type is when the vertices that lie on shortest cycles form a robust part of $H$.
More precisely, let $X$ be the set of vertices of $H$ that are contained in at least one shortest cycle.
If $H[X]$ is $2$-connected, then shortest cycles cannot be split across different gadgets that intersect in at most one vertex.
This allows a different version of the Vertex Cover reduction, using a high-girth instance of \textsc{Vertex Cover} as the source problem.

The second type consists of a small number of explicit core structures.
For $k,\ell\geq 2$ and $i_1,i_2,i_3\in\{0,1,2\}$, the graph
$H_{k,\ell}[i_1,i_2,i_3]$ consists of two bundles of internally disjoint $2$-paths, one between $x_1$ and $x_2$ and the other between $x_2$ and $x_3$, together with a vertex $y$ that may be adjacent to, or joined by a subdivided edge to, each of $x_1,x_2,x_3$.
We also need one additional exceptional graph, denoted by $\widetilde H$.

For these exceptional cores, the reduction again starts from a high-girth instance of \textsc{Vertex Cover}.
The use of high girth ensures that short cycles in the constructed graph are confined to individual gadgets.
This allows us to control where the $4$-cycles of a possible induced copy of $H$ can lie.
In the cases where the core is one of the graphs
$H_{k,\ell}[i_1,i_2,i_3]$, the construction sometimes separates $H$ into two parts meeting in a single cut vertex.
One part is attached to an ordered edge of the input graph, and the other part is attached to one endpoint.
This makes every copy of $H$ correspond to an edge of the original graph, while a vertex-cover subdivision again destroys all such copies.

Now let us define the graphs that will be treated separately in \cref{lem:girth-4-b} (see also \cref{fig:special-graph-def}).

Let $k,\ell\geq 2$, and let $i_1,i_2,i_3\in \{0,1,2\}$.
We define the graph $H_{k,\ell}[i_1,i_2,i_3]$ as follows.
Its vertex set is
\[
\{x_1,x_2,x_3,x_{1,1},\ldots,x_{1,k},x_{2,1},\ldots,x_{2,\ell}\}\cup S\cup Y,
\]
where
\[
S:=\{s_j : i_j=2\}\subseteq \{s_1,s_2,s_3\},
\]
and
\[
Y:=
\begin{cases}
\{y\}, & \text{if at least one of } i_1,i_2,i_3 \text{ is non-zero},\\
\emptyset, & \text{otherwise.}
\end{cases}
\]
The edge set is defined as follows:
\begin{enumerate}[(1)]
    \item for every $j\in [k]$, we add the edges $x_1x_{1,j}$ and $x_{1,j}x_2$;
    \item for every $j\in [\ell]$, we add the edges $x_2x_{2,j}$ and $x_{2,j}x_3$;
    \item for every $j\in \{1,2,3\}$:
    \begin{itemize}
        \item if $i_j=0$, then there is no edge between $y$ and $x_j$;
        \item if $i_j=1$, then we add the edge $yx_j$;
        \item if $i_j=2$, then we add the edges $x_js_j$ and $ys_j$, that is, the edge $x_jy$ is subdivided once.
    \end{itemize}
\end{enumerate}

We also define one more special graph, denoted by $\widetilde{H}$, to be the graph on $8$ vertices with
\begin{align*}
    V(\widetilde{H})&=\{x,y,z,t,a_1,a_2,b_1,b_2\}, \\
    E(\widetilde{H})&=\{xa_1,xa_2,a_1y,a_2y,yb_1,yb_2,b_1z,b_2z,tx,tb_2\}.
\end{align*}

\begin{figure}[t]
    \centering
\begin{tikzpicture}[
    every node/.style={draw,circle,fill=white,inner sep=0pt,minimum size=8pt}
]

\node[label=left:\footnotesize{$x_2$}] (a1) at (0,0) {};
\node[label=left:\footnotesize{$x_{2,1}$}] (a2) at (-1.3,-1.5) {};
\node[label=right:\footnotesize{$x_{2,4}$}] (a3) at (1.3,-1.5) {};
\node[label=left:\footnotesize{$x_{1,1}$}] (a4) at (-1,1.5) {};
\node[label=left:\footnotesize{$x_{1,2}$}] (a5) at (1,1.5) {};
\node[label=below:\footnotesize{$x_3$}] (a6) at (0,-3) {};
\node[label=above:\footnotesize{$x_1$}] (a7) at (0,3) {};
\node[label=left:\footnotesize{$x_{2,2}$}] (a8) at (-0.33,-1.5) {};
\node[label=right:\footnotesize{$x_{2,3}$}] (a9) at (0.33,-1.5) {};
\node[label=right:\footnotesize{$y$}] (y) at (2,0) {};

\foreach \k in {2,3,4,5,8,9}{
    \draw (a1)--(a\k);
}

\draw (a6)--(a2);
\draw (a6)--(a3);
\draw (a7)--(a4);
\draw (a7)--(a5);
\draw (a8)--(a6);
\draw (a9)--(a6);

\draw[blue] (a6) to[bend right=60] (y);
\draw[blue] (y) to[bend right=60] (a7);
\draw[blue] (a1)--(y);


\node[label=left:\footnotesize{$y$}] (b1) at (5,0) {};
\node[label=left:\footnotesize{$x$}] (b2) at (5,3) {};
\node[label=left:\footnotesize{$z$}] (b3) at (5,-3) {};
\node[label=left:\footnotesize{$a_1$}] (b4) at (4,1.5) {};
\node[label=left:\footnotesize{$a_2$}] (b5) at (6,1.5) {};
\node[label=left:\footnotesize{$b_1$}] (b6) at (4,-1.5) {};
\node[label=left:\footnotesize{$b_2$}] (b7) at (6,-1.5) {};
\node[label=left:\footnotesize{$t$}] (b8) at (7,0) {};

\draw (b1)--(b4);
\draw (b1)--(b5);
\draw (b1)--(b6);
\draw (b1)--(b7);
\draw (b2)--(b4);
\draw (b2)--(b5);
\draw (b3)--(b6);
\draw (b3)--(b7);
\draw (b2) to[bend left=20] (b8);
\draw (b8)--(b7);

\end{tikzpicture}
\caption{The graph $H_{k,\ell}[i_1,i_2,i_3]$ for $k=2$ and $\ell=4$ (left), and the graph $\widetilde{H}$ (right). The blue line indicates that the corresponding pair of vertices may be non-adjacent, adjacent, or joined by a subdivided edge.}
\label{fig:special-graph-def}
\end{figure}
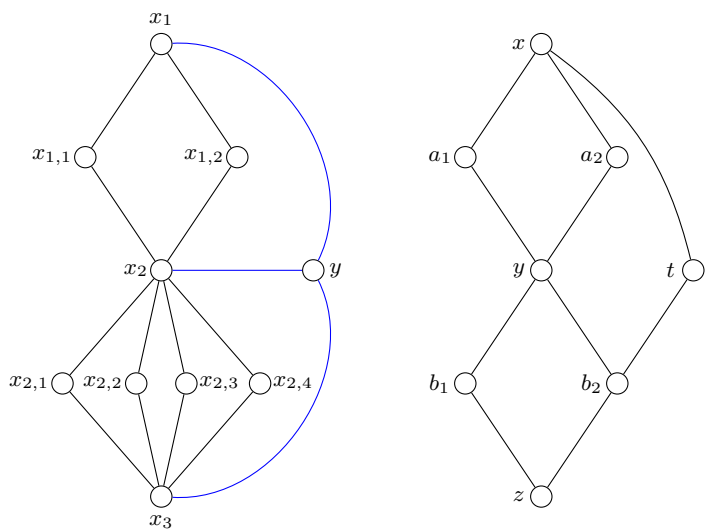

\begin{restatable}{lemma}{GirthFourLemmab}
\label{lem:girth-4-b}
Let $H$ be a graph that is not a forest and suppose that at least one of the following holds:
\begin{enumerate}[(1)]
    \item the set $X$ of vertices contained in at least one shortest cycle of $H$ induces a $2$-connected subgraph;
    \item $\core(H)$ is isomorphic to $\widetilde{H}$;
    \item there exist $k,\ell\geq 2$ and $i_1,i_2,i_3\in \{0,1,2\}$ such that
    $\core(H)$ is isomorphic to $H_{k,\ell}[i_1,i_2,i_3]$.
\end{enumerate}
Then \textsc{$H$-free Subdivision} is NP-hard.
Moreover, assuming the ETH, the problem cannot be solved in time
$2^{o(k)}\cdot n^{O(1)}$.
\end{restatable}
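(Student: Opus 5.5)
We reduce from \textsc{Vertex Cover} on graphs of girth at least $\ell$, where $\ell$ is a constant much larger than $|V(H)|$ (\cref{pro:vc}). The $2^{o(n+m)}$ lower bound transfers, because the budget becomes $k\le n$ and the constructed graph has size $O(n+m)$. We may assume that $H$ does not satisfy the hypotheses of \cref{thm:easy}, so the vertices of degree at least $3$ induce at most one edge. Let $g=\girth(H)\ge 4$. The general plan, mirroring \cref{lem:girth-4-a}, is as follows. Fix an edge $e=ac$ of a shortest cycle of $H$ in which $c$ has degree $2$ (or, where convenient, a cut structure of $\core(H)$). Build $G'$ from an independent set $V(G)$ together with a hub or edge-subdivision structure. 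For every edge $uv\in E(G)$, attach a private copy of $H$ in which two prescribed vertices are identified with $u$ and $v$. Route things so that, for each vertex $v$, a single edge $e_v$ incident to $v$ lies in every gadget of every edge at $v$, in the role of a degree-$2$-incident cycle edge. The budget is kept at $k$.

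For the completeness direction, take a vertex cover $Y$ and subdivide $e_v$ for every $v\in Y$. Each private copy then has some shortest-cycle edge subdivided, which lengthens that cycle. The copy is destroyed provided we argue that any induced copy of $H$ must use a shortest cycle mapped entirely into original gadget material. The key tool is the high girth of $G$. Any cycle of length at most $|V(H)|$ in $G'$ that uses vertices of two different gadgets would project to a closed walk in $G$ through at least two edges. By the girth bound this forces the cycle to pass through a shared vertex twice, which is impossible. Hence every short cycle of an induced copy of $H$ lives inside a single gadget, together with the shared vertices $u,v$ and the edges $e_u,e_v$.

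Case (1) is the easy one. Since $H[X]$ is $2$-connected, all shortest cycles of a copy of $H$ lie in one block. Gadgets pairwise meet in at most one vertex, so the whole block $H[X]$ lies in one gadget. After subdivision that gadget contains strictly fewer shortest cycles through the relevant structure, because we subdivided an edge of some shortest cycle and cannot regain a $g$-cycle. Counting $g$-cycles, or noting that $H[X]$ no longer embeds, gives a contradiction. The rest of $H$ hanging off $X$ cannot create new $g$-cycles either.

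For soundness, a solution of at most $k$ subdivisions must hit every private copy. Any subdivided edge inside the gadget of $uv$ can be charged to $u$ or $v$, since each gadget is private except at $u,v$. We can normalize the solution by replacing such a subdivision with a subdivision of $e_u$, which yields a vertex cover of size at most $k$.

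Cases (2) and (3) are where I expect the main obstacle. Here the $4$-cycles of $\core(H)$ form two bundles, between $x_1,x_2$ and between $x_2,x_3$, that share only the cut vertex $x_2$. A copy could therefore in principle place the two bundles in two different gadgets glued at a shared vertex of $G$. To handle this I would split $H$ at the cut vertex $x_2$. The bundle between $x_1$ and $x_2$, together with the hanging parts, is attached to the ordered edge $(u,v)$, with $x_2\mapsto u$ and the remote bundle realized at $v$. The other bundle is attached once per vertex $u$ as a pendant structure, and the path through $y$ and the $s_j$ is handled by the same ordered gadget. Then every induced copy of $H$ must use one edge-gadget, because the $x_1$-bundle is only present on edges. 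Subdividing an edge of that bundle incident to $u$ kills all copies for edges at $u$. A case check on the distinguishing counts $k$ versus $\ell$, and on $i_1,i_2,i_3$, is needed to rule out a swapped embedding. $\widetilde H$ is treated similarly, using its cut structure at $y$ and the pendant path $x$--$t$--$b_2$.
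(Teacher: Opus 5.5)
\textbf{Verdict: there is a genuine gap, in cases (2) and (3).} Your case (1) follows the paper's route. It uses a hub $q$ adjacent to all of $V(G)$ and one private copy of $H$ per edge, with $q$ playing a degree-$2$ vertex of a shortest cycle. It then subdivides $qv$ for $v$ in the cover and localises and counts the $g$-cycles. For cases (2) and (3), however, the cut vertex you want to split at does not exist in several of the graphs involved.

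\emph{$\widetilde{H}$.} The path $x$--$t$--$b_2$ is not pendant: it joins the two $4$-cycles $xa_1ya_2$ and $yb_1zb_2$. So $\widetilde{H}-y$ is connected, and in fact $\widetilde{H}$ is $2$-connected. The vertex $y$ is a cut vertex only of the subgraph induced by the $4$-cycle vertices.

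\emph{$H_{k,\ell}[i_1,i_2,i_3]$ with $i_1,i_3\neq 0$.} Here $y$ (possibly through $s_1,s_3$) links $x_1$ to $x_3$ while avoiding $x_2$. So $H$ does not decompose at $x_2$, and $y$ cannot be ``handled by the same ordered gadget'' without pulling in the other bundle.

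\emph{The missing idea.} For exactly these graphs no splitting is needed. The paper uses the plain hub construction, with $q$ playing $a_1$, resp.\ $x_{1,1}$, and the extra connection is what rules out spurious copies. After the cover edges are subdivided, $q$ lies on no $4$-cycle, and every $4$-cycle lies in a single set $X_{uv}\cup\{u,v\}$. Counting arguments show that the two $4$-cycle clusters of a putative copy cannot sit in one such set. They therefore sit in $X_{uv}\cup\{u,v\}$ and $X_{vw}\cup\{v,w\}$, with $v$ playing $y$ (resp.\ $x_2$). Then $t$ (resp.\ $y$) would have to attach to both $X_{uv}\cup\{u\}$ and $X_{vw}\cup\{w\}$, which are disjoint with no edges between them. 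Splitting at $x_2$ is needed only when $x_2$ really is a cut vertex of $\core(H)$, i.e.\ when at most one of $i_1,i_3$ is non-zero.

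\emph{The splittable case.} Even here, the reason you give is not the operative one. With $k\le\ell$, the vertex gadget $H[V_2]$ contains an $\ell$-bundle and hence a $k$-bundle, so it is false that the $x_1$-bundle ``is only present on edges.'' The paper argues instead that after subdivision each edge gadget keeps only $k-1$ common neighbours of its two $V(G)$-vertices. Hence neither bundle fits into an edge gadget, and each vertex gadget holds only one bundle. This is why the \emph{smaller} bundle must be placed on the edges, and it is exactly the case check you leave open.

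\emph{Case (1) localisation.} Your localisation step is false as stated for a hub construction. The cycle $q,u,X_{uv},v,X_{vw},w,q$ passes through two gadgets and has length about $2g-2$, which can be at most $|V(H)|$. Also, two gadgets that contain $q$ share two vertices, not one. What is true, and suffices, concerns $g$-cycles only: $\dist_{G''-q}(u,v)\ge (g-2)\dist_G(u,v)$. So once the cover edges are subdivided, $q$ lies on no $g$-cycle, and the sets $X_{uv}\cup\{u,v\}$ pairwise share at most one vertex.

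\emph{Girth assumption.} You assume $g\ge 4$, but the lemma has no girth hypothesis; girth $3$ is needed for the unique-triangle theorem. This is harmless, since your case-(1) argument does not actually use $g\ge 4$.
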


Let us point out that in \cref{lem:girth-4-b}, there is no assumption on the girth of $H$.
In particular, if $H$ contains precisely one triangle, then the set $X$ of vertices contained in at least one shortest cycle induces a triangle, which clearly is 2-connected.
Therefore, \cref{lem:girth-4-b}~(1) implies \cref{thm:girth3-Unique-Triangle}.

\GirthUniqueTriangle*

The final step is a structural classification.
It shows that the two reductions above, together with the earlier hardness result for graphs whose high-degree vertices induce sufficiently many edges (\cref{thm:easy}), cover all non-forest graphs of girth at least $4$.
Indeed, if $H$ is not covered by the generic reduction of \cref{lem:girth-4-a}, then $H$ must be an induced subgraph of one of the exceptional classes
$\mathcal H_1,\mathcal H_2,\mathcal H_3$.
A case analysis on these three classes shows that either the shortest-cycle vertices induce a $2$-connected graph, or the core of $H$ is one of the explicitly described exceptional cores.

\begin{restatable}{lemma}{GirthFourLemmaAll}
\label{lem:girth-4-all}
Let $H$ be a non-forest induced subgraph of a graph in
$\mathcal H_1\cup \mathcal H_2\cup \mathcal H_3$.
Assume moreover that $H$ does not satisfy the assumptions of \cref{thm:easy}.
Then one of the following holds:
\begin{enumerate}[(1)]
    \item the set $X$ of vertices contained in at least one shortest cycle of $H$ induces a $2$-connected subgraph;\label{it:cycles}
    \item there exist $k,\ell\geq 2$ and $i_1,i_2,i_3\in\{0,1,2\}$ such that
    $\core(H)$ is isomorphic to $H_{k,\ell}[i_1,i_2,i_3]$ or to $\widetilde{H}$.\label{it:special}
\end{enumerate}
\end{restatable}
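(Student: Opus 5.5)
The plan is to pass to the core and then run a case analysis over the three classes, using the hypothesis that $H$ fails \cref{thm:easy} to keep the dense parts small.

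\textbf{Reduction to the core.} Deleting pendant vertices repeatedly does not change the cycles of $H$. So it changes neither the set $X$ of vertices on shortest cycles nor whether $H[X]$ is $2$-connected. It also keeps $\core(H)$ an induced subgraph of a graph in $\mathcal H_1\cup\mathcal H_2\cup\mathcal H_3$. Degrees can only drop under this operation, but the degree-$\geq 3$ condition will be applied to $H$ itself. Hence I will work with $H_0:=\core(H)$, which has minimum degree at least $2$ and is triangle-free, since all three classes are triangle-free by construction. I fix an embedding of $H$ into some $\Gamma\in\mathcal H_1\cup\mathcal H_2\cup\mathcal H_3$ and write $A_H:=A\cap V(H)$, and similarly for the other parts.

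\textbf{Cycles and the role of $x$.} Every vertex of $S\cup S'$ has degree at most $2$ in $\Gamma$. Therefore a cycle of $H$ either lies in the "bipartite block" $A\cup B\cup\{a,b,c,d\}$, or it passes through $x$ via two of the (possibly subdivided) spokes from $x$ to vertices of $Y\cup B$. I first list the short cycles of each type.
\begin{itemize}
\item In the block, the $4$-cycles are $K_{2,2}$'s. They are formed from two vertices of $A_H\cup\{a,c\}$ (or $\{a\}$, as appropriate) and two vertices of $B_H\cup\{b,d\}$.
\item Cycles through $x$ have length at least $4$. Their length depends on how many of the spokes are subdivided.
\end{itemize}

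\textbf{Using that $H$ fails \cref{thm:easy}.} The vertices of degree at least $3$ in $H$ induce at most one edge. Inside a complete bipartite part $K_{p,q}$ with $p,q\geq 2$, every vertex of degree $\geq 3$ on one side is adjacent to every such vertex on the other side. Consequently, on at least one side all but at most one vertex has degree exactly $2$. This forces the "bundle of $2$-paths" shape: a side with a single high-degree vertex, or with two high-degree vertices joined by many degree-$2$ vertices.

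\textbf{Case analysis.} I then run through the classes by the girth $g$ of $H$.
\begin{itemize}
\item If $g=4$ and all $4$-cycles lie in the block, the $4$-cycles form one or two bundles. If there is one bundle, $H[X]=K_{2,m}$ or $K_{p,q}$, which is $2$-connected, so \eqref{it:cycles} holds. If there are two bundles, they can only share a single vertex. This vertex is $x_2$, namely a singleton of $A$ or $B$, or the vertex $a$ or $b$ in $\mathcal H_1$. The bundles are then $x_1$--$x_2$ and $x_2$--$x_3$. The remaining part of $H_0$ consists of $x$ together with spokes of length $1$ or $2$ to $x_1,x_2,x_3$. Because $H_0$ has minimum degree $2$, everything else is pruned away. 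This yields $\core(H)\cong H_{k,\ell}[i_1,i_2,i_3]$, with $x$ playing $y$ and the spoke lengths giving the $i_j$.
\item If some $4$-cycle goes through $x$, I check whether it glues to the block $4$-cycles into a $2$-connected $H[X]$. The one non-$2$-connected configuration is the $\widetilde H$ pattern: two bundles of size $2$ plus a $4$-cycle $x a_1 y$-style through $t$. I would isolate it as the unique survivor.
\item If $g\geq 5$, there are no $K_{2,2}$'s. All cycles then pass through $x$ and use at most one vertex of each side of $A$/$B$. I will show that any two shortest cycles share a path through $x$, so $H[X]$ is $2$-connected.
\end{itemize}

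\textbf{Main obstacle.} The hardest step is making the last two bullets exhaustive. One must show that whenever $H[X]$ has a cut vertex, the degree constraint pins the cut vertex to $x_2$ and removes every other possible attachment. I would do this by taking a cut vertex $v$ of $H[X]$ and two shortest cycles in different blocks through $v$. I would then use the explicit adjacency rules (1)--(6) of the relevant class to enumerate where $v$ can be, namely $x$, $a$, $b$, or a singleton side, and derive the exceptional core in each case. I would treat $\mathcal H_2$ first, since there $a$ and $b$ attach only to $A$, and then adapt the argument to $\mathcal H_1$ and $\mathcal H_3$.
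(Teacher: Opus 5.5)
\textbf{The failing step.} Passing to $\core(H)$ and the degree count inside complete bipartite pieces are fine. The case analysis, however, is only announced, and the claim in your second bullet is false: it is not true that $\widetilde H$ is the only non-$2$-connected configuration once a $4$-cycle meets $x$. (Your description of $\widetilde H$ is also off: its vertex $t$ lies on no $4$-cycle.)

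When a $4$-cycle passes through $x$, the vertex $x$ need not play the apex $y$; it can be an endpoint of one of the two bundles. For example, in $\mathcal H_1$ take $|A|=k\ge 2$, $B=\{\beta\}$, $c\beta,bx,cx\in E(H)$, and no spoke at $a$.
\begin{itemize}
\item The bundles are $a$--$A$--$\beta$ and $\beta$--$\{b,c\}$--$x$.
\item $\beta$ is a cut vertex of $H[X]$.
\item $\core(H)\cong H_{k,2}[0,0,0]$, with $x$ playing $x_3$.
\end{itemize}
The same happens in the paper's Subcases~1.4, 2.1, 2.3 and~3.2. Your scheme, in which $x$ always plays $y$ and $\widetilde H$ is the unique survivor in the second bullet, misclassifies these graphs. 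They still satisfy~(2), so here only your argument breaks.

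\textbf{A counterexample to the statement.} More seriously, the same bullet fails on a graph for which the statement itself fails. In $\mathcal H_2$, fix $k\ge 3$ and take:
\begin{itemize}
\item $A=\{\alpha_1,\dots,\alpha_k,\alpha'\}$ and $B=\{\beta\}$;
\item $N(a)\cap A=\{\alpha_1,\dots,\alpha_k\}$ and $N(b)\cap A=\{\alpha'\}$;
\item the edges $x\beta$ and $xb$, and $a$ joined to $x$ through $s\in S$.
\end{itemize}
Its properties are as follows.
\begin{itemize}
\item $\deg\beta=k+2$, $\deg a=k+1$, $\deg x=3$, and all other degrees are $2$. The only edge among vertices of degree at least $3$ is $\beta x$, so \cref{thm:easy} does not apply.
\item The $4$-cycles are the $a\alpha_i\beta\alpha_j$ and $\beta\alpha' b x$. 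Hence $\beta$ is a cut vertex of $H[X]$, and (1) fails.
\item The graph is its own core with $k+6\ge 9$ vertices, so it is not $\widetilde H$.
\item It is not any $H_{k',\ell'}[i_1,i_2,i_3]$. Since $\beta$ is the unique vertex of maximum degree, it would have to be $x_2$. Because $\beta$ has a neighbour of degree $3$, this forces $i_2=1$ and $i_1,i_3\neq 0$. That graph has four vertices of degree at least $3$, whereas ours has three.
\end{itemize}
This graph is $\widetilde H$ with its first bundle enlarged to $k$ paths; for $k=2$ it is exactly $\widetilde H$.

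The paper's proof has the same hole. Subcase~2.5 concludes $\core(H)\cong H_{k,\ell}[i_1,i_2,i_3]$ with $\ell=|N(b)\cap A|$, which equals $1$ here. So the lemma cannot be proved as stated along any route. The exceptional list in~(2) must be enlarged, e.g.\ by this family, and \cref{lem:girth-4-b} extended to cover it.
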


We can now prove \cref{thm:girth-4}.
Let $H$ be a non-forest graph with $\girth(H)\geq 4$.
If $H$ satisfies the assumptions of \cref{thm:easy}, then the theorem follows from that result.
Otherwise, if $H$ is not an induced subgraph of a graph in
$\mathcal H_1\cup\mathcal H_2\cup\mathcal H_3$, then the theorem follows from
\cref{lem:girth-4-a}.
Finally, if $H$ is an induced subgraph of a graph in
$\mathcal H_1\cup\mathcal H_2\cup\mathcal H_3$, then by \cref{lem:girth-4-all} it satisfies one of the assumptions of \cref{lem:girth-4-b}, and the theorem follows from \cref{lem:girth-4-b}.
The ETH lower bound follows because the reductions used above are parameter-preserving and polynomial-time and from the 
corresponding lower bound of \textsc{Vertex Cover} (\cref{pro:vc}).

\subsection{Formal proofs}
We now give the formal proofs.




\GirthFourLemmaa*

\begin{proof}
We may assume that $H$ does not satisfy the assumptions of
Theorem~\ref{thm:easy}.
Let $X\subseteq V(H)$ be the set of vertices of degree at least $3$.
Then $H[X]$ contains at most one edge.

Since $H$ is not a forest, it contains a cycle.
Let $C=c_1c_2\ldots c_\ell c_1$ be a shortest cycle in $H$, where
$\ell=\girth(H)\geq 4$.
We claim that some vertex of $C$ has degree exactly $2$ in $H$.
Indeed, if every vertex of $C$ had degree at least $3$, then every edge of $C$
would lie in $H[X]$, and hence $H[X]$ would contain at least $\ell\geq 4$ edges,
contrary to our assumption on $H[X]$.
Fix such a vertex and denote it by $c$.
Let $a$ and $b$ be the two neighbors of $c$ in $H$.

We reduce from \textsc{Vertex Cover}.
Let $(G,k)$ be an instance of \textsc{Vertex Cover}.
We construct an instance $(G',k)$ of \textsc{$H$-free Subdivision} as follows.

Start with an independent set $V(G)$.
Add one new vertex $z$ and make it adjacent to every vertex of $V(G)$.
For every edge $e=uv\in E(G)$, introduce a set $X_e$ of fresh vertices and add
edges so that the graph induced by
\[
X_e\cup \{u,v,z\}
\]
is isomorphic to $H$, where $z$ corresponds to $c$ and $u,v$ correspond to the
two neighbors $a,b$ of $c$.
Since $c$ has degree exactly $2$ in $H$, no further edges incident with $z$
inside this copy are needed.

Finally, for every two distinct edges $e,f\in E(G)$, add all edges between
$X_e$ and $X_f$, and for every $w\in V(G)$ and every edge $e=uv\in E(G)$ with
$w\notin \{u,v\}$, add all edges between $w$ and $X_e$.

This finishes the construction of $G'$.
It is clearly polynomial-time computable.

We now prove correctness.

\smallskip
\noindent
\emph{If $(G',k)$ is a yes-instance, then $(G,k)$ is a yes-instance of
\textsc{Vertex Cover}.}
Suppose that by at most $k$ subdivisions of $G'$ we obtain an $H$-free graph
$G^\star$.
We build a set $Y\subseteq V(G)$ as follows.

For every $v\in V(G)$, if the edge $zv$ was subdivided, then put $v$ into $Y$.
Further, for every edge $e=uv\in E(G)$, if some subdivided edge lies in the graph
induced by $X_e\cup\{u,v,z\}$ and is different from $zu$ and $zv$, then put
one arbitrarily chosen endpoint of $e$ into $Y$.

Clearly, $|Y|\leq k$.

We claim that $Y$ is a vertex cover of $G$.
Fix an edge $e=uv\in E(G)$.
By construction, $G'[X_e\cup\{u,v,z\}]$ is an induced copy of $H$.
Subdividing an edge outside this vertex set does not change the induced subgraph
on this set.
Hence, in order to destroy this copy of $H$, at least one subdivided edge must lie
inside $X_e\cup\{u,v,z\}$.
If the subdivided edge is $zu$ or $zv$, then $u\in Y$ or $v\in Y$ by construction.
Otherwise, we added one of $u,v$ to $Y$ when processing the edge $e$.
Thus $Y\cap\{u,v\}\neq\emptyset$.
Since this holds for every edge $uv\in E(G)$, the set $Y$ is a vertex cover of $G$.

\smallskip
\noindent
\emph{If $(G,k)$ is a yes-instance of \textsc{Vertex Cover}, then $(G',k)$ is a
yes-instance of \textsc{$H$-free Subdivision}.}
Let $Y\subseteq V(G)$ be a vertex cover of $G$ of size at most $k$.
Construct $G^\star$ from $G'$ by subdividing, for every $v\in Y$, the edge $zv$.
Let $S$ be the set of new subdivision vertices.
We performed at most $k$ subdivisions.

Assume for a contradiction that $G^\star$ contains an induced subgraph $F$
isomorphic to $H$.
Since $\girth(H)\geq 4$, the graph $H$ is triangle-free, and hence so is $F$.

For an edge $e\in E(G)$, let
\[
X_e^F:=X_e\cap V(F).
\]
The graph induced by $V(G)\cup S\cup\{z\}$ is a subdivided star and therefore
acyclic.
Since $H$ is not a forest, the copy $F$ must contain a vertex of $X_e$ for some
$e\in E(G)$.
Thus $X_e^F\neq\emptyset$ for some $e$.

We first show that there are vertices of $F$ in at most two distinct sets
$X_e$.
Indeed, if $X_e^F$, $X_f^F$, and $X_g^F$ were all nonempty for three distinct edges
$e,f,g\in E(G)$, then choosing one vertex from each set would give a triangle in
$F$, because all edges between distinct $X$-sets are present in $G'$.
This is impossible.
Hence there are at most two edges of $G$ whose $X$-sets meet $V(F)$.

We now distinguish cases.

\subparagraph*{Case 1: \boldmath $X_{uv}^F\neq\emptyset$ and $X_{vw}^F\neq\emptyset$ for two edges
sharing an endpoint.}
Then $u,v,w$ are distinct.
Every vertex of $V(G)\setminus\{u,v,w\}$ is adjacent to all vertices of
$X_{uv}\cup X_{vw}$ by construction.
Therefore such a vertex cannot belong to $F$, for otherwise together with one
vertex of $X_{uv}^F$ and one vertex of $X_{vw}^F$ it would form a triangle.
Hence
\[
V(F)\subseteq \{u,v,w,z\}\cup S\cup X_{uv}\cup X_{vw}.
\]

Moreover, each of $X_{uv}^F$ and $X_{vw}^F$ is an independent set, for otherwise
an edge inside one of them together with a vertex from the other nonempty set would
again yield a triangle.

Next, if $u\in V(F)$, then $u$ is anticomplete to $X_{uv}^F$.
Indeed, if $u$ had a neighbor in $X_{uv}^F$, then, since $u\notin\{v,w\}$,
the vertex $u$ is adjacent to every vertex of $X_{vw}^F$, and we obtain a triangle.
Similarly, if $w\in V(F)$, then $w$ is anticomplete to $X_{vw}^F$.
Also, if $v\in V(F)$, then $v$ can be adjacent to vertices of at most one of the
sets $X_{uv}^F$ and $X_{vw}^F$; otherwise we again obtain a triangle.

Now set
\[
A:=X_{vw}^F,\qquad B:=X_{uv}^F,
\]
let $x$ correspond to $z$, let $a$ correspond to $u$ if $u\in V(F)$, let $b$
correspond to $w$ if $w\in V(F)$, and let $c$ correspond to $v$ if $v\in V(F)$.
Put into $S$ the vertices of $F$ lying on subdivided edges among
$\{zu,zv,zw\}$, and put all remaining subdivision vertices of $F$ into $S'$.

The observations above imply that $F$ is an induced subgraph of a graph in $\cH_1$,
contrary to the assumption of the lemma.

\subparagraph*{Case 2: \boldmath $X_{uv}^F\neq\emptyset$ and $X_{wq}^F\neq\emptyset$ for two
vertex-disjoint edges.}
As above, every vertex of $V(G)\setminus\{u,v,w,q\}$ is adjacent to all vertices of
$X_{uv}\cup X_{wq}$, and therefore cannot belong to $F$.
Thus
\[
V(F)\subseteq \{u,v,w,q,z\}\cup S\cup X_{uv}\cup X_{wq}.
\]

Again, $X_{uv}^F$ and $X_{wq}^F$ are independent sets.
Also, if $u\in V(F)$ or $v\in V(F)$, then that vertex is anticomplete to $X_{uv}^F$,
since it is complete to $X_{wq}^F$.
Similarly, if $w\in V(F)$ or $q\in V(F)$, then that vertex is anticomplete to
$X_{wq}^F$.

Now set
\[
A:=X_{uv}^F,\qquad B:=X_{wq}^F,
\]
let $x$ correspond to $z$, and let $a,c,b,d$ correspond to
$u,v,w,q$, respectively, whenever these vertices belong to $F$.
Put into $S$ the subdivision vertices of $F$ lying on the subdivided edges among
$\{zu,zv,zw,zq\}$, and put all remaining subdivision vertices of $F$ into $S'$.

Then $F$ is an induced subgraph of a graph in $\cH_3$, again a contradiction.

\subparagraph*{Case 3: exactly one set \boldmath $X_{uv}$ meets $V(F)$.}
So $X_{uv}^F\neq\emptyset$ and $X_e^F=\emptyset$ for every other edge $e\in E(G)$.

First suppose that $F$ contains a vertex $y\in V(G)\setminus\{u,v\}$.
Since $y$ is complete to $X_{uv}$, the set $X_{uv}^F$ must be independent.
Now set
\[
A:=X_{uv}^F,\qquad B:=V(F)\cap (V(G)\setminus\{u,v\}),
\]
let $x$ correspond to $z$, and let $a,b$ correspond to $u,v$, respectively,
whenever they belong to $F$.
Put into $S$ the subdivision vertices of $F$ lying on subdivided edges $zy$ with
$y\in V(F)\cap V(G)$, and put all remaining subdivision vertices of $F$ into $S'$.
Then $F$ is an induced subgraph of a graph in $\cH_2$, a contradiction.

Thus no vertex of $V(G)\setminus\{u,v\}$ belongs to $F$.
Therefore
\[
V(F)\subseteq \{u,v,z\}\cup S\cup X_{uv}.
\]

Since $Y$ is a vertex cover of $G$ and $uv\in E(G)$, at least one of $u$ and $v$
belongs to $Y$.
Hence at least one of the two edges $zu$ and $zv$ was subdivided when passing from
$G'$ to $G^\star$.

Consider the graph
\[
Q:=G^\star[\{u,v,z\}\cup X_{uv}\cup S].
\]
The graph $G'[\{u,v,z\}\cup X_{uv}]$ is isomorphic to $H$, and in this copy the
vertex corresponding to $c$ is $z$.
In particular, $Q$ contains all cycles of that copy of $H$ except that every cycle
using a subdivided edge $zu$ or $zv$ has its length increased by one.
Hence subdividing $zu$ and/or $zv$ destroys at least one cycle of length $\ell$,
namely a shortest cycle containing $z$, and it creates no new cycle of length $\ell$.

Consequently, the total number of $\ell$-cycles in $Q$ is strictly smaller than the
number of $\ell$-cycles in $H$.
Therefore $Q$ cannot contain an induced copy of $H$.
Since $F$ is an induced subgraph of $Q$ and isomorphic to $H$, this is impossible.

All cases lead to a contradiction.
Hence $G^\star$ is $H$-free, and $(G',k)$ is a yes-instance of
\textsc{$H$-free Subdivision}.
Now the statements follow from \cref{pro:vc}.
\end{proof}

Now we are ready to prove the following.

\GirthFourLemmab*

\begin{proof}
We may assume that $H$ does not satisfy the assumptions of \cref{thm:easy}.
We reduce from \textsc{Vertex Cover}.
Let $(G,p)$ be an instance of \textsc{Vertex Cover} such that $\girth(G)>|V(H)|$.
We construct an equivalent instance $(G',p)$ of \textsc{$H$-free Subdivision}.
The construction depends on which of the above conditions is satisfied by $H$.

\subparagraph*{Case 1: \boldmath $H$ satisfies~(1), or $\core(H)\cong \widetilde{H}$, or $\core(H)\cong H_{k,\ell}[i_1,i_2,i_3]$ with $i_1,i_3\neq 0$.}
We first choose a special vertex $\tc$ of degree exactly $2$ in $H$.

Assume first that $H$ satisfies~(1).
Let $c_1,\ldots,c_r$ be consecutive vertices of a shortest cycle in $H$.
Since $H$ is not a forest, such a cycle exists.
Observe that some vertex $c_i$ on this cycle has degree exactly $2$.
Indeed, otherwise every vertex of this shortest cycle would have degree at least $3$, and then $H$ would satisfy the assumptions of \cref{thm:easy}, a contradiction.
We set $\tc:=c_i$.

If $\core(H)\cong \widetilde{H}$, then we set $\tc$ to be the vertex corresponding to $a_1$.
Its degree in $H$ is at least its degree in $\core(H)$, and hence at least $2$.
On the other hand, its degree in $H$ cannot exceed $2$, for otherwise $H$ would satisfy the assumptions of \cref{thm:easy}.

Finally, suppose that $\core(H)\cong H_{k,\ell}[i_1,i_2,i_3]$ with $i_1,i_3\neq 0$.
By symmetry, we may assume that $k\leq \ell$.
We set $\tc:=x_{1,1}$.
Again, $x_{1,1}$ has degree exactly $2$ in $H$; otherwise $H$ would satisfy the assumptions of \cref{thm:easy}.

Let $a$ and $b$ be the two neighbors of $\tc$ in $H$.

We now construct $(G',p)$.
Start with the graph with vertex set $V(G)\cup \{q\}$ and edge set
\[
\{qv : v\in V(G)\};
\]
in particular, $V(G)$ is an independent set in $G'$.
For every edge $uv\in E(G)$, introduce a set $X_{uv}$ of fresh vertices and add edges inside
\[
X_{uv}\cup \{u,v\}
\]
so that the induced subgraph
\[
G'[X_{uv}\cup \{u,v,q\}]
\]
is isomorphic to $H$, with $q$ corresponding to $\tc$ and $u,v$ corresponding to $a,b$, respectively.
Observe that we do not add any new edge incident with $q$; all additional edges are added inside $X_{uv}\cup \{u,v\}$.
This completes the construction in Case~1.

\subparagraph*{Case 2: \boldmath $\core(H)\cong H_{k,\ell}[i_1,i_2,i_3]$ and at most one of $i_1,i_3$ is non-zero.}
By symmetry, we may assume that $i_3=0$.
Moreover, if $i_1\neq 0$, then necessarily $i_2\neq 0$, since otherwise the vertex $y$ would have degree $1$ in $\core(H)$, impossible.
We may also assume that it is not the case that $i_1=i_2=1$, because then $\core(H)\cong H_{k+1,\ell}[0,0,0]$.

Fix an isomorphism from $\core(H)$ to $H_{k,\ell}[i_1,i_2,i_3]$.
Let $L$ and $R$ be the vertex sets of the two connected components of $\core(H)-x_2$, where $L$ is the component containing $x_1$ and $R$ is the component containing $x_3$.
Thus $\core(H)-x_2$ has exactly two components.

We define two vertex sets $V_1,V_2\subseteq V(H)$ as follows.
First put $x_2$ in both $V_1$ and $V_2$.
For every connected component $C$ of $H-x_2$,
\begin{itemize}
    \item if $C$ contains a vertex of $L$, then put $V(C)$ into $V_1$;
    \item if $C$ contains a vertex of $R$, then put $V(C)$ into $V_2$;
    \item if $C$ is disjoint from $\core(H)$, then put $V(C)$ into $V_1$.
\end{itemize}
Since every vertex of $H\setminus \core(H)$ lies in a tree attached to a unique vertex of $\core(H)$, this definition is unambiguous.
Moreover,
\[
V_1\cup V_2=V(H), \qquad V_1\cap V_2=\{x_2\},
\]
and there are no edges between $V_1\setminus \{x_2\}$ and $V_2\setminus \{x_2\}$.

Furthermore, at least one of $x_{1,1},\ldots,x_{1,k}$ has degree exactly $2$ in $H$.
Indeed, otherwise at least two of these vertices would have degree at least $3$, and together with $x_2$ they would force the assumptions of \cref{thm:easy}.
By symmetry, we may assume that $x_{1,1}$ has degree exactly $2$.
Similarly, we may assume that $x_{2,1}$ has degree exactly $2$.

Now we perform the construction.
Start with the graph with vertex set $V(G)\cup \{q\}$ and edge set
\[
\{qv : v\in V(G)\};
\]
again, $V(G)$ is an independent set.

\subparagraph*{Subcase 2.1: \boldmath $k\leq \ell$.}
For every ordered pair $(u,v)$ such that $uv\in E(G)$, introduce a set $X_{uv}$ of fresh vertices and add edges so that
\[
G'[X_{uv}\cup \{u,v,q\}]
\]
is isomorphic to $H[V_1]$, with $q$ corresponding to $x_{1,1}$, $u$ corresponding to $x_1$, and $v$ corresponding to $x_2$.
Furthermore, for every $v\in V(G)$, introduce a set $X_v$ of fresh vertices and add edges so that
\[
G'[X_v\cup \{v\}]
\]
is isomorphic to $H[V_2]$, with $v$ corresponding to $x_2$.
Thus, for every ordered pair $(u,v)$ with $uv\in E(G)$, the graph
\[
G'[X_{uv}\cup X_v\cup \{u,v,q\}]
\]
is isomorphic to $H$.

\subparagraph*{Subcase 2.2: \boldmath $k>\ell$.}
For every ordered pair $(u,v)$ such that $uv\in E(G)$, introduce a set $X_{uv}$ of fresh vertices and add edges so that
\[
G'[X_{uv}\cup \{u,v,q\}]
\]
is isomorphic to $H[V_2]$, with $q$ corresponding to $x_{2,1}$, $u$ corresponding to $x_3$, and $v$ corresponding to $x_2$.
Furthermore, for every $v\in V(G)$, introduce a set $X_v$ of fresh vertices and add edges so that
\[
G'[X_v\cup \{v\}]
\]
is isomorphic to $H[V_1]$, with $v$ corresponding to $x_2$.
Thus, for every ordered pair $(u,v)$ with $uv\in E(G)$, the graph
\[
G'[X_{uv}\cup X_v\cup \{u,v,q\}]
\]
is isomorphic to $H$.

\subparagraph*{Correctness.}
Assume first that $(G',p)$ is a yes-instance of \textsc{$H$-free Subdivision}, and let $G''$ be obtained from $G'$ by at most $p$ edge subdivisions so that $G''$ is $H$-free.
We define a set $Y\subseteq V(G)$ as follows.
We add to $Y$ every vertex $v$ such that the edge $vq$ was subdivided.
Moreover, for every ordered pair $(u,v)$ with $uv\in E(G)$, if at least one edge inside $X_{uv}\cup \{u,v\}$ was subdivided, then we arbitrarily add one of $u,v$ to $Y$.
Finally, if $G'$ was constructed in Case~2 and at least one edge inside $X_v$ was subdivided, then we add $v$ to $Y$.
Clearly, $|Y|\leq p$.

We claim that $Y$ is a vertex cover of $G$.
Consider an edge $uv\in E(G)$.
If $G'$ was constructed in Case~1, then $X_{uv}\cup \{u,v,q\}$ induces a copy of $H$ in $G'$.
Hence at least one edge of this induced subgraph must be subdivided, and therefore at least one of $u,v$ was added to $Y$.
If $G'$ was constructed in Case~2, then the graph
\[
G'[X_{uv}\cup X_v\cup \{u,v,q\}]
\]
is isomorphic to $H$.
Again, at least one edge of this induced subgraph must be subdivided, and therefore at least one of $u,v$ was added to $Y$.
Thus $Y$ is a vertex cover of $G$.

Now suppose that $(G,p)$ is a yes-instance of \textsc{Vertex Cover}, and let $Y$ be a vertex cover of $G$ of size at most $p$.
We obtain $G''$ from $G'$ by subdividing the edge $vq$ for every $v\in Y$.
Clearly, we perform at most $p$ subdivisions.
We show that $G''$ is $H$-free.

Assume for a contradiction that $G''$ contains an induced copy $F$ of $H$.
We distinguish the following cases.

\subparagraph*{Case 1: the set of vertices contained in a shortest cycle of \boldmath $H$ induces a $2$-connected subgraph.}
Let $g:=\girth(H)$.
We first show that the vertex $q$ is not contained in any $g$-cycle of $G''$.
Suppose otherwise that $C$ is a $g$-cycle containing $q$, and let $u,v\in V(G)$ be the neighbors of $q$ on $C$.
Every $u$--$v$ path in $G''-q$ traverses, for each edge of the corresponding walk in $G$, a gadget in which the distance between the corresponding branch vertices is at least $g-2$.
Hence
\[
\dist_{G''-q}(u,v)\geq (g-2)\cdot \dist_G(u,v).
\]
If $uv\notin E(G)$, then $\dist_G(u,v)\geq 2$, and thus
\[
|C|\geq 2+2(g-2)=2g-2>g,
\]
a contradiction.
Hence $uv\in E(G)$.
But then, since $Y$ is a vertex cover of $G$, at least one of $u$ and $v$ belongs to $Y$, and therefore at least one of the edges $uq,vq$ was subdivided in $G''$.
Consequently,
\[
|C|\geq 3+(g-2)=g+1,
\]
again a contradiction.

Thus every $g$-cycle of $F$ avoids $q$.
Since $\girth(G)>|V(H)|\geq g$, every $g$-cycle of $G''$ is contained in $X_{uv}\cup \{u,v\}$ for some edge $uv\in E(G)$.
Let $X$ be the set of vertices of $H$ contained in at least one shortest cycle.
By assumption, $H[X]$ is $2$-connected.
Therefore the corresponding vertex set in $F$ is contained in a single set $X_{uv}\cup \{u,v\}$, since two distinct such sets intersect in at most one vertex.
Now $G'[X_{uv}\cup \{u,v,q\}]$ is a copy of $H$ in which $q$ lies on a shortest cycle, whereas in $G''[X_{uv}\cup \{u,v\}]$ all shortest cycles through the vertex corresponding to $q$ have been destroyed.
Hence the number of $g$-cycles in $G''[X_{uv}\cup \{u,v\}]$ is strictly smaller than in $H$, a contradiction.

\subparagraph*{Case 2: \boldmath $\core(H)\cong \widetilde{H}$.}
For every edge $uv\in E(G)$, the set $X_{uv}\cup \{u,v,q\}$ induces a copy of $H$ with $q$ corresponding to $a_1$.
As in Case~1, the vertex $q$ is not contained in any $4$-cycle of $G''$.
Indeed, $q$ is adjacent only to vertices of $V(G)$, and if $u,v\in V(G)$ then any cycle through $q,u,v$ has length at least $5$: if $uv\notin E(G)$, then every $u$--$v$ path avoiding $q$ has length at least $4$; and if $uv\in E(G)$, then at least one of the edges $uq,vq$ was subdivided.

Since $\girth(G)>|V(H)|\geq 8$, every $4$-cycle in $G''$ is contained in $X_{uv}\cup \{u,v\}$ for some edge $uv\in E(G)$.
In each such set there is precisely one $4$-cycle.
Hence, in an induced copy of $\widetilde{H}$ in $G''$, the two $4$-cycles of $\widetilde{H}$ must be realized in two gadgets sharing a vertex of $V(G)$.
Thus there exist $u,v,w\in V(G)$ with $uv,vw\in E(G)$ such that
\[
a_1,a_2\in X_{uv},\qquad x\in X_{uv}\cup \{u\},\qquad y=v,\qquad
b_1,b_2\in X_{vw},\qquad z\in X_{vw}\cup \{w\}.
\]
Now the vertex corresponding to $t$ must be adjacent to $b_2$, so it lies in $X_{vw}\cup \{w\}$, since it is different from $y=v$.
On the other hand, it must also be adjacent to $x\in X_{uv}\cup \{u\}$.
No such vertex exists in $G''$, a contradiction.

\subparagraph*{Case 3: \boldmath $\core(H)\cong H_{k,\ell}[i_1,i_2,i_3]$ with $i_1,i_3>0$.}
In this case, for every edge $uv\in E(G)$, the set $X_{uv}\cup \{u,v,q\}$ induces a graph isomorphic to $H$ with $q$ corresponding to $x_{1,1}$.
As in the previous cases, the vertex $q$ is not contained in any $4$-cycle of $G''$.
Furthermore, since $\girth(G)>|V(H)|>4$, every $4$-cycle of $G''$ is fully contained in $X_{uv}\cup \{u,v\}$ for some $uv\in E(G)$.

Applying this observation to the two sides of the core, we obtain that, for any induced copy of $H$ in $G''$, all vertices of
\[
\{x_1,x_{1,1},\ldots,x_{1,k},x_2\}
\]
must be contained in one such set $X_{uv}\cup \{u,v\}$, and all vertices of
\[
\{x_2,x_{2,1},\ldots,x_{2,\ell},x_3\}
\]
must be contained in one such set $X_{vw}\cup \{v,w\}$.

Suppose first that these two sets coincide.
Then
\[
\{x_1,x_{1,1},\ldots,x_{1,k},x_2,x_{2,1},\ldots,x_{2,\ell},x_3\}\subseteq X_{uv}\cup \{u,v\}.
\]
If $i_1=i_2=1$ or $i_2=i_3=1$, then also $y\in X_{uv}\cup \{u,v\}$, since it is adjacent to at least two vertices of this set.
If this is not the case, then $y$ is not contained in any $4$-cycle of $H$.
Thus every vertex of $H$ that lies on a $4$-cycle is contained in $X_{uv}\cup \{u,v\}$.
But the graph induced by $X_{uv}\cup \{u,v\}$ contains strictly fewer $4$-cycles than $H$, because the $4$-cycles through the vertex corresponding to $q$ were destroyed when we subdivided at least one edge incident with $q$.
This is a contradiction.

Hence there exist $u,v,w\in V(G)$ with $uv,vw\in E(G)$ such that
\[
x_2=v,
\]
\[
x_1,x_{1,1},\ldots,x_{1,k}\in X_{uv}\cup \{u\},
\]
and
\[
x_{2,1},\ldots,x_{2,\ell},x_3\in X_{vw}\cup \{w\}.
\]

Suppose first that $x_1=u$.
Then the vertices $x_{1,1},\ldots,x_{1,k}$ must all be adjacent to both $u=x_1$ and $v=x_2$.
Recall that $q$ corresponds to $x_{1,1}$, and at least one of the edges $uq$ and $vq$ was subdivided in $G''$.
Hence $q$ is adjacent to at most one of $u$ and $v$, and so $q$ cannot play the role of any of the vertices $x_{1,1},\ldots,x_{1,k}$ in the induced copy.

Therefore, inside $G''[X_{uv}\cup \{u,v,q\}]$, the only vertices adjacent to both $u$ and $v$ are the $k-1$ vertices corresponding to $x_{1,2},\ldots,x_{1,k}$, and, additionally, the vertex corresponding to $y$ if and only if $i_1=i_2=1$.
Consequently, $x_1=u$ is possible only if $i_1=i_2=1$.

Assume now that $i_1=i_2=1$.
Then the $k$ common neighbors of $u$ and $v$ in $G''[X_{uv}\cup \{u,v,q\}]$ are exactly the vertices that must realize $x_{1,1},\ldots,x_{1,k}$.
However, the vertex $y$ is distinct from all of $x_{1,1},\ldots,x_{1,k}$ and is also adjacent to both $x_1$ and $x_2$.
Thus $y$ would require one more common neighbor of $u$ and $v$, but no such vertex exists.
This contradiction shows that $x_1\neq u$.

Hence $x_1\in X_{uv}$.
Since $i_1>0$, the distance between $x_1$ and $y$ in $H$ is at most $2$.
As $x_2=v$, every vertex of $G''$ at distance at most $2$ from $x_1$ and different from $v$ lies in $X_{uv}\cup \{u\}$.
Therefore,
\[
y\in X_{uv}\cup \{u\}.
\]

Now suppose that $x_3=w$.
Since $i_3>0$, the distance between $x_3$ and $y$ in $H$ is at most $2$.
As $x_2=v$, every vertex of $G''$ at distance at most $2$ from $x_3$ and different from $v$ lies in $X_{vw}\cup \{w\}$.
Therefore,
\[
y\in X_{vw}\cup \{w\}.
\]
Since the sets $X_{uv}\cup \{u\}$ and $X_{vw}\cup \{w\}$ are disjoint, this is impossible.
Hence $x_3\neq w$, and therefore $x_3\in X_{vw}$.

Now $y$ must be at distance at most $2$ from both $x_1$ and $x_3$.
Consequently,
\[
y\in X_{uv}\cup \{u\}
\qquad\text{and}\qquad
y\in X_{vw}\cup \{w\},
\]
again a contradiction.

\subparagraph*{Case 4: \boldmath $\core(H)\cong H_{k,\ell}[i_1,i_2,i_3]$ with $i_1=0$ or $i_3=0$.}
As in the previous cases, the vertex $q$ is not contained in any $4$-cycle of $G''$.
Moreover, every $4$-cycle of $G''$ is either fully contained in $X_{uv}\cup \{u,v\}$ for some ordered pair $(u,v)$ with $uv\in E(G)$, or fully contained in $X_v\cup \{v\}$ for some $v\in V(G)$.
Therefore, for an induced copy of $H$, the vertices
\[
\{x_1,x_{1,1},\ldots,x_{1,k},x_2\}
\]
must be fully contained either in $X_{uv}\cup \{u,v\}$ for some ordered pair $(u,v)$ with $uv\in E(G)$, or in $X_v\cup \{v\}$ for some $v\in V(G)$.
Similarly, the vertices
\[
\{x_2,x_{2,1},\ldots,x_{2,\ell},x_3\}
\]
must be fully contained either in $X_{vw}\cup \{v,w\}$ for some ordered pair $(v,w)$ with $vw\in E(G)$, or in $X_v\cup \{v\}$ for some $v\in V(G)$.

By the construction of $G'$ in this case, there must exist $u,v,w\in V(G)$ with $uv,vw\in E(G)$ such that $x_2=v$ and the set
\[
\{x_1,x_{1,1},\ldots,x_{1,k}\}
\]
is contained either in $X_{uv}\cup \{u\}$ or in $X_v$, and the set
\[
\{x_{2,1},\ldots,x_{2,\ell},x_3\}
\]
is contained either in $X_{vw}\cup \{w\}$ or in $X_v$.
Recall that we assumed that it is not the case that $i_1=i_2=1$.

Suppose first that $k\leq \ell$.
Then there are only $k-1$ vertices in $X_{uv}$ adjacent to both $u$ and $v$, and only $\ell-1$ vertices in $X_{vw}$ adjacent to both $v$ and $w$.
Hence neither
\[
\{x_1,x_{1,1},\ldots,x_{1,k}\}
\]
nor
\[
\{x_{2,1},\ldots,x_{2,\ell},x_3\}
\]
can be realized inside $X_{uv}\cup \{u\}$ or $X_{vw}\cup \{w\}$, respectively.
Therefore both sets must be contained in $X_v$.
But then $X_v$ contains only $\ell+1$ vertices that lie on a $4$-cycle, while
\[
\{x_1,x_{1,1},\ldots,x_{1,k}\}\cup \{x_{2,1},\ldots,x_{2,\ell},x_3\}
\]
contains
\[
(k+1)+(\ell+1)-1 = k+\ell+1 \geq \ell+3
\]
such vertices, a contradiction.

Finally, suppose that $\ell<k$.
By the symmetric argument, both sets must again be contained in $X_v$.
However, now $X_v$ contains only $k+1$ vertices that lie on a $4$-cycle, while the same union contains at least
\[
k+\ell+1 \geq k+3
\]
such vertices, again a contradiction.

In all cases we obtain a contradiction.
Now the statements follow from \cref{pro:vc}.
\end{proof}

Now we show that the above reductions, together with \cref{thm:easy}, cover all non-forests of girth at least four.

\GirthFourLemmaAll*

\begin{proof}
We consider the cases depending on $i\in [3]$ such that $H$ is an induced subgraph of a graph in $\cH_i$.

\subparagraph*{Case 1: \boldmath $H$ is an induced subgraph of a graph in $\cH_1$.}
Since $H$ does not satisfy the assumptions of \cref{thm:easy}, we cannot have both $|A|\geq 3$ and $|B|\geq 3$, because then all vertices of $A\cup B$ would have degree at least $3$ and all edges between $A$ and $B$ would be present.
We also cannot have $B=\emptyset$, as then $H$ is a forest.

\subparagraph*{Subcase 1.1: \boldmath $|A|,|B|\geq 2$.}
Then $\girth(H)=4$, and each vertex of $A\cup B$ lies on some $4$-cycle.
Moreover, the $4$-cycles in $H$ are of the following types:
\begin{itemize}
    \item $4$-cycles contained in $A\cup B$;
    \item cycles containing $a$, two vertices of $A$, and one vertex of $B$;
    \item cycles containing $b$, two vertices of $B$, and one vertex of $A$;
    \item cycles containing $x,b,c$, and one vertex of $B$ (in this case $c$ must be adjacent to at least one vertex of $B$);
    \item cycles containing $c$, two vertices of $B$, and one vertex of $A$ (in this case $c$ must be adjacent to at least two vertices of $B$).
\end{itemize}
Therefore, $H$ satisfies~\ref{it:cycles}.

\subparagraph*{Subcase 1.2: \boldmath $|A|=1$, $|B|\geq 2$.}
If $b\in V(H)$, then $\girth(H)=4$.
The $4$-cycles in $H$ are:
\begin{itemize}
    \item cycles containing the vertex of $A$, two vertices of $B$, and $b$;
    \item cycles containing the vertex of $A$, two vertices of $B$, and $c$ (in this case $c$ must be adjacent to at least two vertices of $B$);
    \item cycles containing $x,b,c$, and one vertex of $B$ (in this case $c$ must be adjacent to at least one vertex of $B$).
\end{itemize}
Hence $H$ satisfies~\ref{it:cycles}.

If $b\notin V(H)$, then $c\in V(H)$ and $c$ must be adjacent to at least one vertex of $B$, otherwise $H$ would be a forest.
If, in addition, $c$ is adjacent to at least two vertices of $B$, then $\girth(H)=4$, and every $4$-cycle of $H$ consists of the vertex of $A$, two vertices of $B$, and $c$.
Thus again $H$ satisfies~\ref{it:cycles}.
If $c$ is adjacent to at most one vertex of $B$, then every cycle in $H$ contains $a$, the unique vertex of $A$, one vertex of $B$, $c$, and $x$, so $H$ satisfies~\ref{it:cycles}.

\subparagraph*{Subcase 1.3: \boldmath $A=\emptyset$.}
Since $H$ is not a forest, we have $b,c\in V(H)$ and $c$ is adjacent to at least one vertex of $B$.
If $|B|\geq 2$ and $c$ is adjacent to at least two vertices of $B$, then $\girth(H)=4$.
Every $4$-cycle in $H$ must contain $b,c$ and either two vertices of $B$ or one vertex of $B$ together with $x$, and thus $H$ satisfies~\ref{it:cycles}.
If $c$ is adjacent to at most one vertex of $B$, then there is only one cycle in $H$, and this cycle contains $b,x,c$, and the unique neighbor of $c$ in $B$.
Hence again $H$ satisfies~\ref{it:cycles}.

\subparagraph*{Subcase 1.4: \boldmath $|A|\geq 2$, $|B|=1$.}
If $a\notin V(H)$, then every cycle in $H$ contains $b$ and $c$, so $H$ satisfies~\ref{it:cycles}.
If $a\in V(H)$, then $\girth(H)=4$.
Each $4$-cycle either consists of $a$, two vertices of $A$, and the unique vertex of $B$, or of the unique vertex of $B$ together with $b,c,x$.
If the latter does not occur, then $H$ satisfies~\ref{it:cycles}.

So assume that there is a $4$-cycle consisting of the vertex of $B$ and $b,c,x$; that is, $b,c\in V(H)$, $c$ is adjacent to the unique vertex of $B$, and $bx,cx\in E(H)$.
If $a$ and $x$ are adjacent, then $H$ satisfies~\ref{it:cycles}.
If $a$ and $x$ are non-adjacent, then $\core(H)$ is isomorphic either to $H_{k,2}[0,0,0]$ (if the edge $ax$ is absent) or to $H_{k,2}[1,0,1]$ (if the edge $ax$ is subdivided), where $k:=|A|$.
Thus $H$ satisfies~\ref{it:special}.

\subparagraph*{Subcase 1.5: \boldmath $|A|=|B|=1$.}
In this case every cycle has to contain the unique vertex of $B$ and $x$, so $H$ satisfies~\ref{it:cycles}.

\subparagraph*{Case 2: \boldmath $H$ is an induced subgraph of a graph in $\cH_2$.}
Observe that if $A=\emptyset$, then $H$ is a forest, so $A\neq \emptyset$.
Furthermore, if both $|A|,|B|\geq 3$, then $H$ satisfies the assumptions of \cref{thm:easy}, and thus at most one of $A,B$ has size at least $3$.

\subparagraph*{Subcase 2.1: \boldmath $|A|=2$, $|B|\geq 3$.}
Then $\girth(H)=4$.
In this case both vertices of $A$ have degree at least $3$, and every vertex of $B$ has degree at least $2$.
If at least one vertex of $B$ has degree at least $3$, then $H$ satisfies the assumptions of \cref{thm:easy}, a contradiction.
Therefore every vertex of $B$ is adjacent only to the two vertices of $A$.

Moreover, every $4$-cycle in $H$ is of one of the following types:
\begin{itemize}
    \item two vertices of $A$ and two vertices of $B$;
    \item one vertex of $B$, two vertices of $A$, and $a$ or $b$ (in this case one of $a,b$ is adjacent to both vertices of $A$);
    \item two vertices of $A$ and the vertices $a,b$ (in this case all possible edges between $A$ and $\{a,b\}$ are present);
    \item the vertices $x,a,b$ and one vertex of $A$ (in this case the edges $ax,bx$ are present in $H$, and one vertex of $A$ is adjacent to both $a,b$).
\end{itemize}
If there is no $4$-cycle of the last type, then all $4$-cycles in $H$ contain the two vertices of $A$, and thus $H$ satisfies~\ref{it:cycles}.

So suppose that $H$ does not satisfy~\ref{it:cycles}.
Then it is only possible that there is a $4$-cycle on the vertices $x,a,b$ and one vertex of $A$, but the other vertex of $A$ is non-adjacent to both $a$ and $b$.
In that case $\core(H)$ is isomorphic to $H_{2,k}[0,0,0]$, where $k:=|B|$.
Hence $H$ satisfies~\ref{it:special}.

\subparagraph*{Subcase 2.2: \boldmath $|A|\geq 3$, $|B|=2$.}
Then $\girth(H)=4$.
Every vertex of $B$ has degree at least $3$, and every vertex of $A$ has degree at least $2$.
If at least one vertex of $A$ has degree at least $3$, then $H$ satisfies the assumptions of \cref{thm:easy}, a contradiction.
Therefore each vertex of $A$ is adjacent only to the two vertices of $B$ (and hence is non-adjacent to $a$ and $b$).
Thus every $4$-cycle of $H$ contains the two vertices of $B$, and $H$ satisfies~\ref{it:cycles}.

\subparagraph*{Subcase 2.3: \boldmath $|A|=|B|=2$.}
Then $\girth(H)=4$.
Let $U$ be the set of vertices of $H$ that lie on at least one $4$-cycle of $H$.
Note that $A\cup B\subseteq U$.
If every $4$-cycle of $H$ contains at least two vertices of $A\cup B$, then $H[U]$ is $2$-connected, and thus $H$ satisfies~\ref{it:cycles}.

So suppose that $H[U]$ is not $2$-connected.
Then there is a $4$-cycle that contains at most one vertex of $A\cup B$.
Since every cycle must intersect $A$, this cycle has to consist of one vertex of $A$ together with $a,b,x$.
Furthermore, since $H[U]$ is not $2$-connected, there are no edges between $a,b,x$ and the remaining vertices of $A\cup B$.

Suppose that there are subdivided edges between $x$ and both vertices of $B$.
Then both vertices of $B$ have degree at least $3$, and at least one vertex of $A$ has degree at least $3$, so $H$ satisfies the assumptions of \cref{thm:easy}, a contradiction.
Therefore $x$ is connected by a subdivided edge to at most one vertex of $B$.

If there is no such subdivided edge, then $\core(H)$ is isomorphic to $H_{2,2}[0,0,0]$, and $H$ satisfies~\ref{it:special}.
If there is exactly one such subdivided edge, then $\core(H)$ is isomorphic to $\widetilde{H}$, and again $H$ satisfies~\ref{it:special}.

\subparagraph*{Subcase 2.4: \boldmath $|A|=1$.}
Then every cycle in $H$ contains the unique vertex of $A$ and $x$, and thus $H$ satisfies~\ref{it:cycles}.

\subparagraph*{Subcase 2.5: \boldmath $|B|=1$, $|A|\geq 2$.}
First suppose that each of $a$ and $b$ is adjacent to at most one vertex of $A$.
Then every cycle in $H$ contains the unique vertex of $B$ and $x$, so $H$ satisfies~\ref{it:cycles}.

So suppose that one of $a,b$, say $a$, is adjacent to at least two vertices of $A$.
Then $\girth(H)=4$.

If there are two vertices of $A$ adjacent to both $a$ and $b$, then the unique vertex of $B$ must have degree at most $2$.
Indeed, these two vertices of $A$ have degree $3$ because they are adjacent to $a$, $b$, and the unique vertex of $B$.
Hence, if the unique vertex of $B$ had degree at least $3$, then $H$ would satisfy the assumptions of \cref{thm:easy}, a contradiction.
Thus $|A|=2$ and the unique vertex of $B$ is adjacent only to the two vertices of $A$.

Then every $4$-cycle either consists of the two vertices of $A$ and two vertices of $B\cup\{a,b\}$, or contains $x,a,b$ and one vertex of $A$.
If there is no $4$-cycle of the second type, then $H$ clearly satisfies~\ref{it:cycles}.
If there is such a $4$-cycle, then neither $ax$ nor $bx$ is subdivided, and again $H$ satisfies~\ref{it:cycles}.

So now we may assume that there is at most one vertex of $A$ that is adjacent to both $a$ and $b$.
If there is no such vertex, then $\core(H)$ is isomorphic to $H_{k,\ell}[i_1,i_2,i_3]$, where
\[
k:=|N(a)\cap A|
\qquad\text{and}\qquad
\ell:=|N(b)\cap A|.
\]
Hence $H$ satisfies~\ref{it:special}.

So let us assume that there is precisely one vertex $v\in A$ adjacent to both $a$ and $b$.
Without loss of generality, we may assume that $|N(a)\cap A|\geq |N(b)\cap A|$.
If $|N(a)\cap A|\geq 3$, then $|A|\geq 3$, $\deg(a)\geq 3$, the unique vertex of $B$ has degree at least $3$, and the vertex $v$ has degree at least $3$.
Thus $H$ would satisfy the assumptions of \cref{thm:easy}, a contradiction.
Hence $|N(a)\cap A|=2$.

Let $z$ be the unique vertex of $B$, and let $u$ be the unique vertex of $(N(a)\cap A)\setminus \{v\}$.
If $|N(b)\cap A|=2$, let $w$ be the unique vertex of $(N(b)\cap A)\setminus \{v\}$.
Then every shortest cycle of $H$ is one of the following:
\begin{itemize}
    \item the cycle $z,u,a,v,z$;
    \item the cycle $z,v,b,w,z$, if $w$ exists;
    \item the cycle $x,a,v,b,x$, if it exists.
\end{itemize}
Any two of these cycles that exist share an edge.
Consequently, the subgraph induced by the vertices lying on at least one shortest cycle is $2$-connected.
Therefore $H$ satisfies~\ref{it:cycles}.

\subparagraph*{Subcase 2.6: \boldmath $B=\emptyset$.}
In this case every cycle in $H$ contains $a$ and $b$, so $H$ satisfies~\ref{it:cycles}.

\subparagraph*{Case 3: \boldmath $H$ is an induced subgraph of a graph in $\cH_3$.}
In this case $A$ and $B$ play symmetric roles, so we only consider the subcases with $|A|\leq |B|$.

\subparagraph*{Subcase 3.1: \boldmath $|A|,|B|\geq 2$.}
Then $\girth(H)=4$, and each vertex of $A\cup B$ lies on some $4$-cycle.
Moreover, each $4$-cycle either contains at least two vertices of $A\cup B$, or contains one vertex of $A$ together with $b,d,x$, or one vertex of $B$ together with $a,c,x$.
In each case $H$ satisfies~\ref{it:cycles}.

\subparagraph*{Subcase 3.2: \boldmath $|A|=1$, $|B|\geq 2$.}
Suppose first that at least one of $a,c$ belongs to $V(H)$.
Then $\girth(H)=4$.
If $b\notin V(H)$, or $d\notin V(H)$, or at least one of the edges $bx,dx$ is subdivided, then every $4$-cycle in $H$ contains the unique vertex of $A$, two vertices of $B$, and one of the vertices $a,c$.
Hence $H$ satisfies~\ref{it:cycles}.

So now suppose that $b,d\in V(H)$ and neither of the edges $bx,dx$ is subdivided.
Note that the unique vertex of $A$ has degree at least $4$.
If both $a,c\in V(H)$, then the vertices of $B$ have degree $3$, and thus $H$ satisfies the assumptions of \cref{thm:easy}, a contradiction.
Therefore precisely one of $a,c$, say $a$, belongs to $V(H)$.

If the edge $ax$ is present, then $H$ satisfies~\ref{it:cycles}.
If the edge $ax$ is subdivided, then $\core(H)$ is isomorphic to $H_{k,2}[1,0,1]$, where $k:=|B|$.
If $ax\notin E(H)$, then $\core(H)$ is isomorphic to $H_{k,2}[0,0,0]$, where again $k:=|B|$.
Thus in the latter two cases $H$ satisfies~\ref{it:special}.

So we may assume that $a,c\notin V(H)$.
Then there is only one cycle in $H$, namely the cycle containing the unique vertex of $A$ together with $b,d,x$.
Hence $H$ satisfies~\ref{it:cycles}.

\subparagraph*{Subcase 3.3: \boldmath $|A|=|B|=1$.}
Observe that every cycle must contain $x$ and at least one vertex of $A\cup B$.
It is straightforward to verify that $H$ satisfies~\ref{it:cycles}.

\subparagraph*{Subcase 3.4: \boldmath $A=\emptyset$.}
Then every cycle in $H$ contains both $a$ and $c$, so $H$ satisfies~\ref{it:cycles}.
\end{proof}

Now we are ready to prove Theorem~\ref{thm:girth-4}.

\GirthFourTheorem*

\begin{proof}
Let $H$ be a graph which is not a forest and satisfies $\girth(H)\geq 4$.
We prove that \textsc{$H$-free Subdivision} is NP-complete and that, assuming the ETH, it admits no algorithm running in time
$2^{o(k)}\cdot n^{O(1)}$.


For hardness, we distinguish cases.
If $H$ satisfies the assumptions of \cref{thm:easy}, then the NP-hardness and the ETH lower bound follow directly from \cref{thm:easy}.
Thus, assume from now on that $H$ does not satisfy the assumptions of \cref{thm:easy}.

If $H$ is not an induced subgraph of any graph in
$\mathcal H_1\cup \mathcal H_2\cup \mathcal H_3$, then, since $H$ is not a forest and $\girth(H)\geq 4$, \cref{lem:girth-4-a} applies.

It remains to consider the case where $H$ is an induced subgraph of a graph in
$\mathcal H_1\cup \mathcal H_2\cup \mathcal H_3$.
Since $H$ is a non-forest and does not satisfy the assumptions of \cref{thm:easy}, \cref{lem:girth-4-all} implies that one of the following holds:
\begin{enumerate}[(1)]
    \item the set $X$ of vertices contained in at least one shortest cycle of $H$ induces a $2$-connected subgraph; or
    \item there exist $k,\ell\geq 2$ and $i_1,i_2,i_3\in\{0,1,2\}$ such that
    $\core(H)$ is isomorphic to $H_{k,\ell}[i_1,i_2,i_3]$ or to $\widetilde H$.
\end{enumerate}
In either case, $H$ satisfies one of the assumptions of \cref{lem:girth-4-b}.

This completes the proof.
\end{proof}
\section{Trees}
\label{sec:trees}

In this section we prove the hardness result for trees with exactly two branching
vertices.  A vertex of degree at least $3$ is called a \emph{branching vertex}.
Thus, throughout this section, the tree $H$ has exactly two branching vertices,
say $u$ and $v$, and every vertex in $V(H)\setminus\{u,v\}$ has degree at most
$2$.

\TreeTheorem*

The proof is divided according to the distance between the two branching
vertices.  If this distance is even, we reduce from
\textsc{$P_3$-free Edge Deletion}.  If the distance is odd and at least $5$, we
reduce from \textsc{$P_4$-free Edge Deletion}.  In both cases the reduction uses
large rooted trees attached to the vertices of the input graph.  These rooted
trees are chosen so that an induced $P_3$ or $P_4$ in the input graph can be
extended to an induced copy of $H$.  Subdividing an edge sufficiently many times
plays the role of deleting that edge.



\begin{restatable}{lemma}{EvenDistanceTreeLemma}
\label{lem:tree-even-distance}
Let $H$ be a tree with exactly two branching vertices $u$ and $v$.
If $\dist_H(u,v)$ is even and at least $2$, then
\textsc{$H$-free Subdivision} is \textsf{NP}-complete.
Moreover, assuming the ETH, the problem cannot be solved in time
$2^{o(k)}\cdot n^{O(1)}$.
\end{restatable}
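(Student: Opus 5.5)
We reduce from \textsc{$P_3$-free Edge Deletion}, which is NP-complete and admits no $2^{o(k)}\cdot n^{O(1)}$-time algorithm under the ETH by \cref{pro:deletion}. Write $d:=\dist_H(u,v)=2r$ with $r\geq 1$, and let $w$ be the midpoint of the $u$--$v$ path in $H$. Removing the two internal edges of the path adjacent to $w$ is not the right cut. Instead we view $H$ as three parts glued along an induced $P_3$. These parts are the rooted tree $T_u$ hanging at $u$ (the component of $H$ minus the path-edges at $u$ that contains $u$), the rooted tree $T_v$ hanging at $v$, and the $u$--$v$ path of length $2r$. The idea is that an induced $P_3$ $xyz$ of the input graph $G'$ plays the role of a "compressed" $u$--$v$ path: $y$ corresponds to $w$, and $x,z$ are endpoints from which we reach copies of $u$ and $v$ via long pendant paths.

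\emph{Construction.} Given $(G',k)$, for every vertex $x\in V(G')$ we attach a path of length $r-1$ ending in a fresh vertex $u_x$. On $u_x$ we hang a copy of $T_u$, and separately a path of length $r-1$ ending in $v_x$ carrying a copy of $T_v$. We take $k+1$ disjoint copies of each attachment, to make them unbreakable. Then an induced $P_3$ $x y z$ in $G'$ with $y$ as centre extends to an induced copy of $H$: use the $u$-attachment at $x$, the $v$-attachment at $z$, and the edges $xy,yz$. This gives a $u$--$v$ path of length $2(r-1)+2=2r$. Vertices of $G'$ must be relabelled so that the attached material has degree $\le 2$ except at copies of $u,v$. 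If $r-1=0$, then $x$ itself is $u$, so $T_u$ is attached directly at $x$.

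\emph{Easy direction} (from a solution $S$ of the subdivision instance). Let $F$ be the set of edges of $G'$ subdivided in $S$. If $G'-F$ contains an induced $P_3$, then some copy of each attachment is untouched, and we obtain an induced $H$, a contradiction.

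\emph{Main obstacle: the converse.} Given $F$ with $G'-F$ being $P_3$-free, $G'-F$ is a disjoint union of cliques. We must choose the subdivision so that no copy of $H$ survives, and a single subdivision of an edge in $F$ may not suffice. The plan is to subdivide each edge of $F$ once but to choose the attachment lengths so that parity or length counting kills all copies. Each copy of $H$ contains exactly two branching vertices, at distance exactly $2r$, joined by an induced path. Branching vertices in the constructed graph are only vertices of $G'$ and attachment copies of $u,v$ (and of $w$ if it branches, which it does not). Since triangles are forbidden inside the tree $H$, an induced path through a clique uses at most one clique edge. I would then do a case analysis on where the two branching vertices of a copy lie, and on how their connecting path routes through the cliques, through subdivided edges of length $2$, and through attachment paths.

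In this analysis the difficulty is that a subdivided edge of $F$ (length $2$) can substitute for two clique edges. The other danger is vertices of $G'$ with degree $\ge3$ acting as branching vertices themselves. To control both, I would make the attachment paths and trees long, replacing $r-1$ by a larger offset and padding $T_u,T_v$. I would also possibly replace subdivision budget per edge with parity arguments: subdividing once changes the parity of path lengths between vertices of $G'$, so even-distance requirements fail. If the counting does not close, I would fall back to requiring that the branching vertices of any copy must be attachment roots, and argue that their connecting path in $G'-F$ plus subdivision vertices is an induced path of the wrong length.
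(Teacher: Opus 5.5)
Your choice of source problem and the idea of completing an induced $P_3$ of $G'$ with padded pendant gadgets are right. Your direction from a subdivision solution to an edge-deletion solution is also fine as stated. The gap is the converse, which you leave open, and the plan you sketch for it fails outright. If an edge $xy\in F$ is subdivided once, then $x,s,y$ is an induced path of length $2$. An untouched $u$-attachment at $x$ together with an untouched $v$-attachment at $y$ then gives two branching vertices at distance $(r-1)+2+(r-1)=2r$, i.e.\ an induced copy of $H$. Parity does not help, since this path has even length. Longer offsets do not help either: whatever offsets complete an unsubdivided induced $P_3$ $xyz$ at $x$ and $z$ are also available at $x$ and $y$, so they complete the path $x,s,y$ as well.

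The missing idea is to simulate deletion by a path longer than $d=\dist_H(u,v)$: subdivide each edge of $F$ exactly $d$ times, with budget $K=dk'$. Consider the $u$--$v$ path of any copy of $H$, restricted to the segment between the roots of its two branching vertices. That segment has length between $2$ and $d$, cannot use a replaced edge, and is therefore an induced path of length at least $2$ in the disjoint union of cliques $G'-F$, which is impossible.

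This change forces a second change in your other direction. A solution may now subdivide an edge $s$ times with $1\le s\le d-1$, so you must let $F$ be the edges subdivided at least $d$ times. You must then be able to complete every induced path of length $p=s+1\in[2,d]$ to a copy of $H$. That is why the paper attaches, at every vertex and for every $q\in\{0,\dots,r-1\}$, $K+1$ copies of a gadget whose branching vertex is at distance $q$ from the root, so that $q_x+q_y=d-p$ is always solvable.

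Offsets $0$ (the vertices of $G'$ themselves) and $r-1$ alone do not suffice. Take $r=4$, let $G'$ be a disjoint union of $m\ge 2$ copies of $P_3$, and set $k'=m-1$. Subdividing one edge of each $P_3$ twice uses $2m\le 8(m-1)$ subdivisions. The result is a forest in which all distances between branching vertices lie in $\{1,3,4,6,7,9,10\}$, so it is $H$-free, although the edge-deletion instance is a no-instance.
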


\begin{proof}
Let $d:=\dist_H(u,v)$. By assumption, $d=2r$ for some integer $r\geq 1$.
Let $P$ be the unique $u$--$v$ path in $H$.  Every component of
$H-V(P)$ is a path attached to either $u$ or $v$.  Let $\lambda$ be the
maximum length of such a pendant path, and let
\[
    c:=\max\{\deg_H(u)-1,\deg_H(v)-1\}.
\]
Here the term $-1$ accounts for the edge of $P$ incident with the corresponding
branching vertex.

We reduce from \textsc{$P_3$-free Edge Deletion}, which is \textsf{NP}-complete
and, assuming the ETH, admits no $2^{o(k')}\cdot n^{O(1)}$-time algorithm.

Let $(G',k')$ be an instance of \textsc{$P_3$-free Edge Deletion}.  We construct
an instance $(G,K)$ of \textsc{$H$-free Subdivision}, where
\[
    K:=d k'.
\]

For each $q\in\{0,1,\ldots,r-1\}$, define a rooted tree $A_q$ as follows.
Start with a path of length $q$ from the root $\rho_q$ to a vertex $b_q$.
Then attach $c$ pendant paths of length $\lambda$ to $b_q$.
When $q=0$, we have $\rho_q=b_q$.

We obtain $G$ from $G'$ as follows.  For every vertex $x\in V(G')$ and every
$q\in\{0,1,\ldots,r-1\}$, attach $K+1$ fresh copies of $A_q$ by identifying
their roots with $x$.  All these copies are otherwise vertex-disjoint.

We claim that $(G',k')$ is a yes-instance of
\textsc{$P_3$-free Edge Deletion} if and only if $(G,K)$ is a yes-instance of
\textsc{$H$-free Subdivision}.

First suppose that there is a set $F\subseteq E(G')$ with $|F|\leq k'$ such
that $G'-F$ is $P_3$-free.  Thus every connected component of $G'-F$ is a
clique.  Let $J$ be the graph obtained from $G$ by subdividing every edge of
$F$ exactly $d$ times.  The number of subdivisions is $d|F|\leq dk'=K$.

We show that $J$ is $H$-free.  Suppose, for a contradiction, that $J$ contains
an induced copy of $H$.  Let $\alpha$ and $\beta$ be the two branching vertices
of this copy.  Since every vertex introduced by subdivision has degree $2$, and
since the only vertices of the attached gadgets with degree at least $3$ are
the original vertices of $G'$ and the vertices $b_q$, each of $\alpha$ and
$\beta$ is either an original vertex of $G'$ or a vertex $b_q$ in some attached
copy of $A_q$.

For such a branching vertex $\alpha$, define its root $x_\alpha\in V(G')$ as
the original vertex to which its gadget is attached.  If $\alpha$ itself is an
original vertex, then set $x_\alpha=\alpha$ and $q_\alpha=0$; otherwise let
$q_\alpha$ be the distance from $\alpha$ to $x_\alpha$ inside its copy of
$A_{q_\alpha}$.  Thus $0\leq q_\alpha\leq r-1$.  Define $x_\beta$ and
$q_\beta$ analogously.

The distance between $\alpha$ and $\beta$ in the induced copy of $H$ is exactly
$d$.  Therefore the subpath between $x_\alpha$ and $x_\beta$ in the part of
$J$ obtained from $G'$ has length
\[
    p=d-q_\alpha-q_\beta .
\]
Since $q_\alpha+q_\beta\leq 2r-2=d-2$, we have $p\geq 2$.  Also $p\leq d$.

No edge of $F$ can occur on this subpath, because every edge of $F$ was replaced
by a path of length $d+1$.  Hence this subpath is an induced path of length at
least $2$ in $G'-F$.  But $G'-F$ is $P_3$-free, and hence each of its connected
components is a clique.  A clique contains no induced path of length at least
$2$.  This contradiction shows that $J$ is $H$-free.

Conversely, suppose that $G$ can be transformed into an $H$-free graph $J$ by
at most $K$ subdivisions.  For each edge $e\in E(G')$, let $s(e)$ denote the
number of times $e$ is subdivided in this solution.  Define
\[
    F:=\{e\in E(G') : s(e)\geq d\}.
\]
Since the total number of subdivisions is at most $K=dk'$, we have
\[
    |F|\leq k'.
\]

We claim that $G'-F$ is $P_3$-free.  Suppose not.  Let $xyz$ be an induced
$P_3$ in $G'-F$.  Thus $xy,yz\notin F$.

First suppose that at least one of the edges $xy$ and $yz$ is subdivided in
$J$.  Without loss of generality, assume that $xy$ is subdivided $s$ times,
where $1\leq s<d$.  Then in $J$, the edge $xy$ is replaced by an induced path
of length
\[
    p=s+1,
\]
where $2\leq p\leq d$.  Since
\[
    0\leq d-p\leq d-2=2r-2,
\]
we can choose integers $q_x,q_y\in\{0,1,\ldots,r-1\}$ such that
\[
    q_x+q_y=d-p.
\]

For each fixed original vertex and each fixed value of $q$, there are $K+1$
attached copies of $A_q$, while the solution uses at most $K$ subdivisions in
total.  Hence at least one copy of $A_{q_x}$ attached to $x$ and at least one
copy of $A_{q_y}$ attached to $y$ are untouched by the solution.  Using these
two untouched gadgets, the induced path replacing $xy$, and suitable pendant
paths of length at most $\lambda$ inside the two gadgets, we obtain an induced
copy of $H$ in $J$.  This contradicts the assumption that $J$ is $H$-free.

Therefore neither $xy$ nor $yz$ is subdivided in $J$.  Since $xyz$ is an
induced $P_3$ in $G'-F$, the vertices $x$ and $z$ are not adjacent in $G'-F$.
Thus in $J$ there is no edge $xz$; if $xz\in F$, then it has been replaced by a
path of length at least $d+1$, and if $xz\notin E(G')$, then it is absent.
Consequently, $x,y,z$ induce a path of length $2$ in $J$.

Now choose
\[
    q_x=q_z=r-1.
\]
Then
\[
    q_x+2+q_z=(r-1)+2+(r-1)=2r=d.
\]
Again, since there are $K+1$ copies of each required gadget and at most $K$
subdivisions in total, we can choose untouched copies of $A_{r-1}$ attached to
$x$ and $z$.  Together with the induced path $x-y-z$, and suitable pendant
paths inside the two untouched gadgets, these form an induced copy of $H$ in
$J$, a contradiction.

Hence $G'-F$ is $P_3$-free.  Since $|F|\leq k'$, the instance $(G',k')$ is a
yes-instance of \textsc{$P_3$-free Edge Deletion}.

This completes the correctness of the reduction.
Now the statements follow from \cref{thm:easy}.
\end{proof}


\begin{restatable}{lemma}{OddDistanceTreeLemma}
\label{lem:tree-odd-distance-large}
Let $H$ be a tree with exactly two branching vertices $u$ and $v$.
If $\dist_H(u,v)$ is odd and at least $5$, then
\textsc{$H$-free Subdivision} is \textsf{NP}-complete.
Moreover, assuming the ETH, the problem cannot be solved in time
$2^{o(k)}\cdot n^{O(1)}$.
\end{restatable}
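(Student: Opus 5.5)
We mirror the proof of \cref{lem:tree-even-distance}, but reduce from \textsc{$P_4$-free Edge Deletion}, which is NP-complete and ETH-hard by \cref{pro:deletion}. Let $d:=\dist_H(u,v)=2r+1$ with $r\geq 2$. Let $\lambda$ and $c$ be defined as before. Let $(G',k')$ be an instance and set $K:=dk'$. For each $q\in\{0,\ldots,r-1\}$ we use the rooted tree $A_q$: a path of length $q$ from the root to $b_q$, with $c$ pendant paths of length $\lambda$ attached at $b_q$. We attach $K+1$ copies of every $A_q$ to every vertex of $G'$. A solution $F$ of the edge-deletion instance is translated into subdividing each edge of $F$ exactly $d$ times.

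\emph{Forward direction.} Suppose $J$ contains an induced copy of $H$ with branching vertices $\alpha,\beta$. As before, each of them is an original vertex or some $b_q$, with roots $x_\alpha,x_\beta$ and offsets $q_\alpha,q_\beta\leq r-1$. The $\alpha$--$\beta$ path runs through an induced path of the $G'$-part of length $p=d-q_\alpha-q_\beta\geq 3$. This path cannot use a subdivided edge of $F$, because such an edge now has length $d+1>d$. Hence $G'-F$ contains an induced path on at least $4$ vertices, and therefore an induced $P_4$, a contradiction.

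\emph{Subtle point in the forward direction.} We must also rule out that $\alpha$ and $\beta$ lie in gadgets at the same root, or that the connecting path re-enters a gadget. Gadgets are pendant trees, so a path between two of them passes through their roots. If $x_\alpha=x_\beta$, the distance is $q_\alpha+q_\beta\leq 2r-2<d$, so this case is impossible. We also need to check that the pendant legs of the copy cannot create degree-$3$ vertices beyond those considered; this is immediate, since degree is preserved.

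\emph{Reverse direction.} Set $F=\{e: s(e)\geq d\}$, so $|F|\leq k'$. Suppose $wxyz$ is an induced $P_4$ in $G'-F$.

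If some edge of the $P_4$, say one with endpoints $x',y'$, is subdivided $s$ times with $1\leq s<d$, it becomes an induced path of length $p=s+1\in[2,d]$. We need $q_{x'}+q_{y'}=d-p\in[0,2r-1]$ with each $q\leq r-1$, which fails only when $p=2$, i.e.\ $s=1$. In that case we instead extend along the neighbouring edge of the $P_4$: if that edge is unsubdivided we get total length $3$ plus offsets summing to $d-3=2r-2$, which is achievable. If that edge is also subdivided, we use the combined path, and then simply the single edge with the longer subdivided path.

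Otherwise all three edges are intact. Whether $wy$, $xz$ and $wz$ are non-edges or belong to $F$ (and hence have length at least $d+1$, so they are not short chords), $w,x,y,z$ induce a $P_4$ of length $3$ in $J$. Taking $q_w=q_z=r-1$ gives total length $2r-2+3=d$. Untouched copies of $A_{r-1}$ exist by counting, and together with this path they yield an induced copy of $H$, a contradiction.

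\emph{Main obstacle.} The hardest part is making the partially subdivided cases rigorous. We must pick a subpath of the (possibly mixed) subdivided $P_4$ of length between $3$ and $d$ that is still induced. For example, if the middle edge is subdivided once and the outer edges are intact, we get a length-$4$ path from $w$ to $z$. In general we take the longest prefix of length at most $d$ from the original path, with $d\geq 5$ ensuring enough room. This is exactly where the assumption $d\geq 5$ (rather than $d=3$) is needed, because it guarantees that path lengths between $3$ and $d$ can always be completed with offsets at most $r-1$.
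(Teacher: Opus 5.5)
Your proposal is correct and is essentially the paper's proof: the same gadgets $A_q$ (your $d=2r+1$ with $q\le r-1$ is the paper's $d=2r+3$ with $q\le r$ after reindexing), the same budget $K=dk'$ and threshold set $F=\{e: s(e)\ge d\}$, and your case analysis around a once-subdivided edge amounts to the paper's argument that some consecutive sum of the $p_i$ lies in $[3,d]$. Your closing remark is loose, though your case analysis already covers both points: a ``longest prefix'' need not work (e.g.\ $p_1=p_3=1$, $p_2=d$ requires the middle segment alone), and $d\ge 5$ is really needed to fit the length-$4$ combination of two once-subdivided edges, not for the offsets.
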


\begin{proof}

Let $d:=\dist_H(u,v)$. By assumption, $d$ is odd and at least $5$, and hence
\[
    d=2r+3
\]
for some integer $r\geq 1$.  Let $P$ be the unique $u$--$v$ path in $H$.
Every component of $H-V(P)$ is a path attached to either $u$ or $v$.  Let
$\lambda$ be the maximum length of such a pendant path, and let
\[
    c:=\max\{\deg_H(u)-1,\deg_H(v)-1\}.
\]
Here the term $-1$ accounts for the edge of $P$ incident with the corresponding
branching vertex.

We reduce from \textsc{$P_4$-free Edge Deletion}. 

Let $(G',k')$ be an instance of \textsc{$P_4$-free Edge Deletion}.  We construct
an instance $(G,K)$ of \textsc{$H$-free Subdivision}, where
\[
    K:=d k'.
\]

For each $q\in\{0,1,\ldots,r\}$, define a rooted tree $A_q$ as follows.
Start with a path of length $q$ from the root $\rho_q$ to a vertex $b_q$.
Then attach $c$ pendant paths of length $\lambda$ to $b_q$.
When $q=0$, we have $\rho_q=b_q$.

We obtain $G$ from $G'$ as follows.  For every vertex $x\in V(G')$ and every
$q\in\{0,1,\ldots,r\}$, attach $K+1$ fresh copies of $A_q$ by identifying
their roots with $x$.  All these copies are otherwise vertex-disjoint.

We claim that $(G',k')$ is a yes-instance of
\textsc{$P_4$-free Edge Deletion} if and only if $(G,K)$ is a yes-instance of
\textsc{$H$-free Subdivision}.

First suppose that there is a set $F\subseteq E(G')$ with $|F|\leq k'$ such
that $G'-F$ is $P_4$-free.  Let $J$ be the graph obtained from $G$ by subdividing
every edge of $F$ exactly $d$ times.  The number of subdivisions is
$d|F|\leq dk'=K$.

We show that $J$ is $H$-free.  Suppose, for a contradiction, that $J$ contains
an induced copy of $H$.  Let $\alpha$ and $\beta$ be the images of the two
branching vertices of $H$ in this copy.  Since vertices introduced by
subdivision have degree $2$, and since the only vertices of the attached
gadgets that can have degree at least $3$ are the original vertices of $G'$
and the vertices $b_q$, each of $\alpha$ and $\beta$ is either an original
vertex of $G'$ or a vertex $b_q$ in some attached copy of $A_q$.

For such a branching vertex $\alpha$, define its root $x_\alpha\in V(G')$ as
follows.  If $\alpha$ is an original vertex of $G'$, then set
$x_\alpha=\alpha$ and $q_\alpha=0$.  Otherwise, let $x_\alpha$ be the original
vertex to which the corresponding copy of $A_{q_\alpha}$ is attached; thus
$\dist_J(\alpha,x_\alpha)=q_\alpha$, where $0\leq q_\alpha\leq r$.
Define $x_\beta$ and $q_\beta$ analogously.

The path between $\alpha$ and $\beta$ in the induced copy of $H$ has length
exactly $d$.  Hence the part of this path between $x_\alpha$ and $x_\beta$ has
length
\[
    p=d-q_\alpha-q_\beta .
\]
Since $q_\alpha,q_\beta\leq r$, we have
\[
    p\geq d-2r=3,
\]
and clearly $p\leq d$.

No edge of $F$ can occur on this path between $x_\alpha$ and $x_\beta$, because
every edge of $F$ was replaced by a path of length $d+1$, while the whole path
between $x_\alpha$ and $x_\beta$ has length at most $d$.  Therefore this path
corresponds to an induced path of length at least $3$ in $G'-F$.  This is
impossible because $G'-F$ is $P_4$-free; indeed, any induced path of length at
least $3$ contains an induced $P_4$.  Hence $J$ is $H$-free.

Conversely, suppose that $G$ can be transformed into an $H$-free graph $J$ by
at most $K$ subdivisions.  For each edge $e\in E(G')$, let $s(e)$ be the number
of times $e$ is subdivided in this solution.  Define
\[
    F:=\{e\in E(G') : s(e)\geq d\}.
\]
Since the total number of subdivisions is at most $K=dk'$, we have
\[
    |F|\leq k'.
\]

We claim that $G'-F$ is $P_4$-free.  Suppose not, and let $wxyz$ be an induced
$P_4$ in $G'-F$.  Let
\[
    p_1=s(wx)+1,\qquad p_2=s(xy)+1,\qquad p_3=s(yz)+1.
\]
Since none of $wx,xy,yz$ belongs to $F$, we have
\[
    1\leq p_i\leq d
    \quad\text{for each }i\in\{1,2,3\}.
\]

We first observe that among the consecutive sums
\[
    p_i,\quad p_i+p_{i+1},\quad p_1+p_2+p_3,
\]
there is one whose value lies between $3$ and $d$.  Indeed, if some
$p_i\geq 3$, then $p_i\in[3,d]$.  Otherwise all $p_i\leq 2$.  If
$p_1=p_2=p_3=1$, then $p_1+p_2+p_3=3$.  If some $p_i=2$, then adding an
adjacent $p_j$ gives a consecutive sum between $3$ and $4$, which is at most
$d$ because $d\geq 5$.

Thus, in the graph $J$, the subdivided version of the induced path $wxyz$
contains an induced path $Q$ of length $p$ between two original vertices
$a,b\in\{w,x,y,z\}$, where
\[
    3\leq p\leq d.
\]
The path $Q$ is induced: the path $wxyz$ is induced in $G'-F$, and any edge
between nonconsecutive vertices of this path that belongs to $F$ has been
replaced by a path of length at least $d+1$, so it gives no chord in $J$.

Since $3\leq p\leq d=2r+3$, we have
\[
    0\leq d-p\leq 2r.
\]
Therefore, we can choose integers $q_a,q_b\in\{0,1,\ldots,r\}$ such that
\[
    q_a+q_b=d-p.
\]

For each fixed original vertex and each fixed value of $q$, there are $K+1$
attached copies of $A_q$, while the solution uses at most $K$ subdivisions in
total.  Hence there is an untouched copy of $A_{q_a}$ attached to $a$ and an
untouched copy of $A_{q_b}$ attached to $b$.

Let $\alpha$ and $\beta$ be the vertices $b_{q_a}$ and $b_{q_b}$ in these two
untouched copies.  Then the path from $\alpha$ to $\beta$ obtained by going
from $\alpha$ to $a$, then along $Q$ from $a$ to $b$, and finally from $b$ to
$\beta$ has length
\[
    q_a+p+q_b=d.
\]
Moreover, the two untouched gadgets contain enough pendant paths of length at
least those required at the two branching vertices of $H$: the vertex
$\alpha$ has $c$ pendant paths of length $\lambda$, and so does $\beta$.
By choosing the appropriate number of these pendant paths and taking suitable
initial subpaths, we obtain an induced copy of $H$ in $J$.  This contradicts
the assumption that $J$ is $H$-free.

Therefore $G'-F$ is $P_4$-free.  Since $|F|\leq k'$, the instance $(G',k')$ is
a yes-instance of \textsc{$P_4$-free Edge Deletion}.

This completes the correctness of the reduction. Now, the statements follow from \cref{thm:easy}.
\end{proof}

Now, \cref{thm:tree} follows from \cref{lem:tree-even-distance} and \cref{lem:tree-odd-distance-large}.



\bibliography{main}

\begin{thebibliography}{1}

\bibitem{antony2024switching}
Dhanyamol Antony, Yixin Cao, Sagartanu Pal, and RB~Sandeep.
\newblock Switching classes: Characterization and computation.
\newblock In {\em 49th International Symposium on Mathematical Foundations of Computer Science (MFCS 2024)}, pages 11--1. Schloss Dagstuhl--Leibniz-Zentrum f{\"u}r Informatik, 2024.

\bibitem{antony2022subgraph}
Dhanyamol Antony, Jay Garchar, Sagartanu Pal, RB~Sandeep, Sagnik Sen, and R~Subashini.
\newblock On subgraph complementation to h-free graphs.
\newblock {\em Algorithmica}, 84(10):2842--2870, 2022.

\bibitem{DBLP:journals/siamdm/AravindSS17}
N.~R. Aravind, R.~B. Sandeep, and Naveen Sivadasan.
\newblock Dichotomy results on the hardness of h-free edge modification problems.
\newblock {\em {SIAM} J. Discret. Math.}, 31(1):542--561, 2017.
\newblock \href {https://doi.org/10.1137/16M1055797} {\path{doi:10.1137/16M1055797}}.

\bibitem{chakraborty2023contracting}
Dipayan Chakraborty and RB~Sandeep.
\newblock Contracting edges to destroy a pattern: A complexity study.
\newblock In {\em International Symposium on Fundamentals of Computation Theory}, pages 118--131. Springer, 2023.

\bibitem{cygan2015parameterized}
Marek Cygan, Fedor~V Fomin, {\L}ukasz Kowalik, Daniel Lokshtanov, D{\'a}niel Marx, Marcin Pilipczuk, Micha{\l} Pilipczuk, and Saket Saurabh.
\newblock {\em Parameterized algorithms}, volume~5.
\newblock Springer, 2015.

\bibitem{firbas2024complexity}
Alexander Firbas and Manuel Sorge.
\newblock On the complexity of establishing hereditary graph properties via vertex splitting.
\newblock In {\em 35th International Symposium on Algorithms and Computation, ISAAC 2024}. Schloss Dagstuhl-Leibniz-Zentrum fur Informatik GmbH, Dagstuhl Publishing, 2024.

\bibitem{komusiewicz2018tight}
Christian Komusiewicz.
\newblock Tight running time lower bounds for vertex deletion problems.
\newblock {\em ACM Transactions on Computation Theory (TOCT)}, 10(2):1--18, 2018.

\bibitem{poljak1974note}
Svatopluk Poljak.
\newblock A note on stable sets and colorings of graphs.
\newblock {\em Commentationes Mathematicae Universitatis Carolinae}, 15(2):307--309, 1974.

\bibitem{yannakakis1978node}
Mihalis Yannakakis.
\newblock Node-and edge-deletion {NP}-complete problems.
\newblock In {\em Proceedings of the tenth annual ACM symposium on Theory of computing}, pages 253--264, 1978.

\end{thebibliography}

\appendix



\end{document}